\documentclass[11pt]{article}
\usepackage{graphicx}
\usepackage{color}
\usepackage{amsmath, amssymb, amsthm, ascmac, caption, comment, enumerate, ascmac, bm, newtxtext, mdframed}
\usepackage{latexsym, mathabx, mathrsfs, mathtools, psfrag, setspace, subcaption, booktabs}
\usepackage{natbib}
\usepackage{clipboard}
 
\mathtoolsset{showonlyrefs=true}

\definecolor{mycolor}{cmyk}{0.9, 0.3, 0.2, 0}
\usepackage[colorlinks=true, linkcolor=mycolor, filecolor=mycolor, urlcolor=mycolor, citecolor=mycolor, driverfallback=dvipdfmx]{hyperref}
\usepackage[margin = 2.3cm]{geometry}

\usepackage{titlesec}
\titleformat*{\section}{\Large\bfseries\sffamily}
\titleformat*{\subsection}{\large\bfseries\sffamily}
\titleformat*{\subsubsection}{\large\bfseries\sffamily}

\DeclareMathOperator*{\argmin}{argmin}

\renewcommand{\hat}{\widehat}
\renewcommand{\tilde}{\widetilde}
\renewcommand{\check}{\widecheck}
\renewcommand{\bar}{\overline}
\newcommand{\bbE}{\mathbb{E}}

\newcommand{\bbR}{\mathbb{R}}

\newcommand{\calU}{\mathcal{U}}

\newcommand{\calF}{\mathcal{F}}

\numberwithin{equation}{section}
 
\allowdisplaybreaks[2]

\theoremstyle{definition}
\newtheorem{theorem}{Theorem}[section]
\newtheorem{assumption}{Assumption}[section]

\newtheorem{example}{Example}[section]
\newtheorem{lemma}{Lemma}[section]
\newtheorem{proposition}{Proposition}[section]
\newtheorem{remark}{Remark}[section]

\title{Network-Adjusted GMM Estimation under Network Uncertainty}

\author{
    Tadao Hoshino
    \thanks{School of Political Science and Economics, Waseda University, 1-6-1 Nishi-waseda, Shinjuku-ku, Tokyo 169-8050, Japan. Email: \href{mailto:thoshino@waseda.jp}{thoshino@waseda.jp}. \\
    I thank \'{A}ureo de Paula, Wenjie Wang, Zhenlin Yang, and participants at seminars at NTU and SMU for their helpful comments and discussions.
    This work was supported by JSPS KAKENHI Grant Number 26K04852.
    }
}

\begin{document}
\maketitle

\begin{abstract}
    This paper proposes a network-adjusted generalized method of moments (NA-GMM) estimator for social interaction models when the observed network may differ from the true interaction network.
    NA-GMM is a novel penalized GMM approach that allows the elements of the observed interaction matrix to be modified to improve the fit of the moment conditions.
    To avoid unrestricted network adjustments, the NA-GMM criterion introduces a penalty on the amount of adjustment.
    Since NA-GMM does not aim to estimate the true interaction network itself, the estimator generally converges to a pseudo-true parameter.
    For a linear spatial autoregressive model, we prove that the NA-GMM estimator is consistent for the pseudo-true parameter and is asymptotically normally distributed under general moment misspecification.
    We also prove that a fixed-weight version of the NA-GMM estimator has a desirable bias reduction property relative to conventional GMM without network adjustment.
    An empirical application to U.S. county-level COVID-19 infection data demonstrates the usefulness of the proposed method.
\end{abstract}

\clearpage

\section{Introduction}

Social interactions play a crucial role in human behavior.
Knowledge diffusion among workers, peer effects among schoolmates, and the transmission of infectious diseases across communities are only a few examples in which accounting for social interactions is essential.
However, in most empirical applications, researchers do not have precise information about the underlying interaction structure.
For example, friendship networks among students are often collected directly through surveys, but individuals may differ in how they report friendships, and some links may be missing.
Moreover, even when such a survey-based network is available, it is not clear whether it is the relevant network for the social interaction effect under study.
In spatial interaction models, researchers often construct a network from administrative adjacency or geographic distance.
However, different choices of distance measures, cutoffs, or normalizations can lead to different networks, and it is often difficult to justify that a particular choice is appropriate.
More generally, even if the true network were observed at a particular time point, there may exist unobserved heterogeneity in how units perceive the strength of each link across different situations.

Since incomplete network data and network errors are common and often unavoidable in practice, a growing literature has developed statistical inference methods that are robust to such uncertainty.
One approach is to estimate a specific network formation model using partially observed network data and simulate relevant moment conditions to estimate the parameters of interest (e.g., \citealp{chandrasekhar2011econometrics, reeves2024model, boucher2025estimating}).
Another approach, which is available even when no network information is observed, is to directly estimate the interaction matrix nonparametrically using large panel data (e.g., \citealp{de2025identifying}).
\citet{lewbel2023social} also do not require prior knowledge of the network structure, but their methodology relies on the availability of many small networks.
By contrast, \citet{lewbel2024ignoring} show that conventional two-stage least squares (2SLS) estimators can remain valid even when network errors are ignored, provided that the observed network contains only a sufficiently small amount of measurement error.
Although we do not discuss it here, there is also a growing literature in causal inference that studies treatment spillovers under incomplete network data.

As demonstrated by these studies, estimating social interaction models under network uncertainty typically requires substantial additional sources of identification and restrictions, such as specific network formation models, large panel data, many small networks, or sufficiently small network errors.
However, in many empirical situations, such additional information is not readily available.
Thus, the key question of this paper is: what can researchers do when such information is unavailable but network uncertainty is still a concern, other than naively applying conventional estimators that ignore network errors?
To address this question, we propose a novel penalized generalized method of moments (GMM) approach that allows the observed network to be modified through optimization.

If the model specification is correct and the moment conditions are valid under the true interaction network, network errors may result in violations of the moment conditions.
For example, in standard linear spatial (or network) autoregressive (SAR) models, researchers often employ a 2SLS estimator with spatially lagged covariates as instrumental variables (IVs) for the endogenous interaction effect term.
In this case, the validity of the moment conditions can be checked empirically, for example, by using classical Sargan over-identification test.
Then, if we reject the over-identification test, we may suspect the presence of network errors, provided that the model is correctly specified and the IVs used are valid.
This naturally leads to a new class of moment-based estimators, which we call network-adjusted GMM (NA-GMM), that adjust the network itself simultaneously with the estimation of model parameters.
Since each network is typically a high dimensional object, allowing it to be modified freely leads to an identification problem due to too many degrees of freedom.
Thus, the NA-GMM method introduces a penalty term for the size of the network correction.

Conceptually, the proposed NA-GMM method is closely related to the generalized empirical likelihood (GEL) method and the optimally-transported GMM (OT-GMM) developed by \citet{schennach2026optimally}.
Similar to this paper, these studies consider situations in which the model may fail to pass over-identification tests.
As for the source of this moment mismatch, GEL considers biased sampling from the target population, and OT-GMM considers general measurement errors in the data.
These approaches aim to improve the moment fit by allowing such distortions to be corrected in the estimation process.
NA-GMM builds on the same motivation as theirs.
However, we attribute the moment mismatch to an imprecisely observed network.

Another closely related literature is GMM estimation with misspecified moment conditions (e.g., \citealp{hall2003large, hansen2021inference, kleibergen2025double, kleibergen2025testing}).
Although the NA-GMM procedure can update the observed network to improve the moment fit, the resulting modified network is generally different from the true network.
Consequently, NA-GMM estimates of the model parameters are also inconsistent for the true parameter, and they generally converge to corresponding pseudo-true values.
The aforementioned studies on misspecified GMM investigate the convergence of estimators to pseudo-true values and asymptotic normality around those values.
We establish similar results for NA-GMM.

\begin{flushleft}
\textbf{Paper Outline.}
\end{flushleft}

This paper proceeds as follows.
Section \ref{sec:model} presents the basic framework of NA-GMM.
We first define network errors as differences between the true interaction matrix and the matrix used by the researcher, and focus on models in which these errors enter the moment conditions linearly.
This class of models includes standard linear SAR models and linear treatment spillover models.
We then introduce an uncertainty set, a set of entries of the interaction matrix that are suspected to be affected by network errors, and define the NA-GMM estimator as the solution to a joint optimization problem over the model parameters and corrections to the entries in the uncertainty set.
Because unlimited network modifications can create an identification problem when the uncertainty set is large, the NA-GMM criterion includes a quadratic penalty for the size of the network modification.
Thanks to the linearity of network errors and the quadratic cost function, we can show that the optimal network correction has a closed-form solution, and the NA-GMM criterion reduces to a simple continuous-updating (CU) GMM-type criterion in which the model parameters enter the GMM weight matrix.

In Section \ref{sec:SAR}, we focus on the bias reduction property of NA-GMM in linear SAR models.
We show that the NA-GMM weight matrix downweights particular directions of the moment conditions that are more severely affected by network errors.
In addition, as a simplified alternative to the original NA-GMM estimator, we consider a fixed-weight version of NA-GMM.
For this fixed-weight estimator, we prove that the estimator has a smaller worst-case bias than the naive GMM estimator without network adjustment.

We next turn to the asymptotic properties of NA-GMM in Section \ref{sec:asymp}.
We first establish consistency of the NA-GMM estimator for the pseudo-true parameter under high-level conditions and derive asymptotic normality under local misspecification, where the moment misspecification caused by network errors is only local.
Meanwhile, as pointed out by \cite{hall2003large}, under general moment misspecification, the convergence of the GMM weight matrix can affect the limiting distribution of the GMM estimator, and this significantly complicates the asymptotic analysis.
Therefore, to investigate the distribution of the NA-GMM estimator under non-local misspecification, we narrow our attention only to linear SAR models and prove its asymptotic normality under more primitive conditions.
This result also shows that performing statistical inference based directly on the asymptotic distribution is difficult in practice, because the asymptotic variance depends on the unknown true parameter and true interaction matrix.
We therefore suggest using NA-GMM as a diagnostic tool for assessing the robustness of GMM estimates to network uncertainty and provide several diagnostic analyses.

Sections \ref{sec:MC} and \ref{sec:empir} provide numerical illustrations of NA-GMM.
Section \ref{sec:MC} reports Monte Carlo simulations.
The simulation results show that, across various experimental setups, NA-GMM generally achieves a smaller average bias than the naive GMM estimator that ignores network errors, confirming its desirable bias reduction property.
The fixed-weight NA-GMM estimator also has a slightly smaller bias than the naive GMM estimator, as predicted by the theory, although it does not perform as good as the original NA-GMM estimator.
Importantly, when there is no network error, all GMM estimators are accurate.
This suggests that NA-GMM can be used as a robust alternative to conventional GMM.
The empirical application in Section \ref{sec:empir} analyzes U.S. county-level COVID-19 infection data.
The naive GMM estimate indicates a large positive spatial spillover effect of COVID-19 infection.
Applying NA-GMM to the same model substantially improves the moment fit, while the estimate of the spatial effect changes only slightly.
This finding suggests that the estimated spatial effect is relatively robust to the specification of the interaction network.

Section \ref{sec:conclude} concludes.

\begin{flushleft}
\textbf{Notation.}
\end{flushleft}

For an integer $a$, we write $[a] \coloneqq \{1,2,\ldots,a\}$.
For a vector $x$, $\|x\|$ denotes the Euclidean norm.
For a symmetric matrix $A$, $\lambda_{\min}(A)$ and $\lambda_{\max}(A)$ denote its minimum and maximum eigenvalues, respectively.
For a $p \times q$ matrix $M = (m_{ij})$, define $\|M\| \coloneqq \left(\sum_{i=1}^p \sum_{j=1}^q m_{ij}^2\right)^{1/2}$, $\|M\|_{\mathrm{op}} \coloneqq \left\{\lambda_{\max}(M^\top M)\right\}^{1/2}$, $\|M\|_\infty \coloneqq \max_{1 \le i \le p} \sum_{j=1}^q |m_{ij}|$, and $\|M\|_1 \coloneqq \max_{1 \le j \le q} \sum_{i=1}^p |m_{ij}|$.
Finally, $c$, possibly with subscripts, denotes a generic positive constant whose value may vary from line to line.

\section{Network-Adjusted GMM Estimator}\label{sec:model}

\subsection{Setup}

Suppose that we observe data of size $n$: $\{(W_i, A_{i1}^{\mathrm{obs}}, \ldots, A_{in}^{\mathrm{obs}}) : i \in [n]\}$.
Here, $W_i$ is a vector of observed variables, including response variables and covariates, and $A_{ij}^{\mathrm{obs}}$ is the $(i,j)$-th element of the adjacency matrix $\bm A_n^{\mathrm{obs}}$ that represents the observed social network.
The true social network is denoted by $\bm A_n = (A_{ij})_{i,j \in [n]}$, and it may or may not be fully observed.

Typically, network effects are determined not directly through $\bm A_n$, but through its transformed version $\bm G_n = (G_{ij})_{i,j \in [n]}$; for example, $G_{ij} = \left(\sum_{k \neq i} A_{ik}\right)^{-1} A_{ij}$.
The observed counterpart of $\bm G_n$ is denoted by $\bm G_n^{\mathrm{obs}} = (G_{ij}^{\mathrm{obs}})_{i,j \in [n]}$.
The transformation from $\bm A_n$ to $\bm G_n$ may be unknown and heterogeneous across individuals, whereas that from $\bm A_n^{\mathrm{obs}}$ to $\bm G_n^{\mathrm{obs}}$ is predetermined by the researcher.
For both $\bm G_n$ and $\bm G_n^{\mathrm{obs}}$, we assume that the diagonal elements are zero.
In what follows, both of these matrices are treated as non-random.\footnote{
    This treatment does not necessarily rule out stochastic network formation or random network errors.
    Rather, all subsequent arguments should be interpreted as being conditional on the realized true and observed interaction matrices.
    }

Under this setup, the target parameter of interest is $\theta_0 \in \bbR^{d_\theta}$, which satisfies
\begin{align}\label{eq:momeq}
        &\bbE[\bar \mu_n(\bm G_n; \theta_0)] = \bm 0 \\
        &\text{where} \quad \bar \mu_n(\bm G_n; \theta) \coloneqq \frac{1}{n}\sum_{i \in [n]}  \mu_i(\bm W_n, \bm G_n; \theta),
\end{align}
$\bm W_n = (W_1, \ldots, W_n)^\top$, and $\mu_i(\bm W_n, \bm G_n; \theta)$ is an $\bbR^{d_\mu}$-valued moment function.
In what follows, we focus on over-identified cases, where $d_\mu > d_\theta$.

If $\bm G_n$ is observable, we can construct a GMM estimator based on \eqref{eq:momeq} to estimate $\theta_0$.
However, in our setting, $\bm G_n$ is not available; instead, only $\bm G_n^{\mathrm{obs}}$ is available, so that, in general, 
\begin{align*}
    \bbE[\bar \mu_n(\bm G_n^{\mathrm{obs}}; \theta_0)] \neq \bm 0.
\end{align*}
We define the network error matrix as 
\begin{align}
    \bm D_n \coloneqq \bm G_n - \bm G_n^{\mathrm{obs}},
\end{align}
whose $(i,j)$-th element is given by $D_{ij} = G_{ij} - G_{ij}^{\mathrm{obs}}$ for $i \neq j$ and $D_{ii} = 0$ for all $i \in [n]$.
For the sources of network errors, we can consider two types: measurement error in the observed network $\bm A_n^{\mathrm{obs}}$ and misspecification of the transformation used to construct $\bm G_n^{\mathrm{obs}}$.
We cover both types of errors and refer to them simply as network errors.

We often have prior knowledge that $G_{ij}^{\mathrm{obs}} = G_{ij}$, or equivalently $D_{ij} = 0$, for certain pairs $(i,j)$.
For example, if students in different grades do not interact, or if residents in distant neighborhoods have no opportunity to communicate, then it may be reasonable to set $G_{ij}^{\mathrm{obs}} = G_{ij} = 0$ when $i$ and $j$ belong to different communities.
We define the set of such pairs as $\calF_n$.
Since $\mathrm{diag}(\bm D_n) = 0$ is always satisfied, $(i,i) \in \calF_n$ for all $i \in [n]$.
In addition, we define the \textit{uncertainty set} $\calU_n \coloneqq \{(i,j): i,j \in [n]\} \setminus \calF_n$ as the set of pairs $(i,j)$ for which the researcher is uncertain about the value of $G_{ij}$.
Note that $(i,j) \in \calF_n$ implies $D_{ij} = 0$, but the converse is not true in general.
Accordingly, the uncertainty set $\calU_n$ must contain the support of the true network errors.
In practice, researchers may want to specify $\calU_n$ conservatively; however, setting $\calU_n$ too large may reduce estimation efficiency and violate some of the theoretical conditions imposed below.

Hereafter, when there is no confusion, we drop the subscript $n$ from $\calF_n$ and $\calU_n$.
Let $\calU(i) \coloneqq \{ j \in [n] : \: (i,j) \in \calU \}$ be the $i$-th uncertainty set and let $m_i \coloneqq |\calU(i)|$ be its size.
To facilitate the analysis, we assume throughout that the moment function is linear in the network error matrix in the following sense.

\begin{assumption}[Linearity in the network error]\label{as:linearmom}
For all $i \in [n]$, the moment function $\mu_i$ is decomposed as
\begin{align}
    \mu_i(\bm W_n, \bm G_n; \theta) = q_i(\bm W_n, \bm G_n^\text{obs}; \theta) + Z_i \bm V_i(\bm W_n, \calU; \theta)^\top \bm D_{\calU(i)},
\end{align}
where $Z_i$ is a $d_\mu \times 1$ vector of IVs, $\bm D_{\calU(i)}$ is the $m_i \times 1$ subvector of the $i$-th row of $\bm D_n$ corresponding to the elements of $\calU(i)$, $q_i(\bm W_n, \bm G_n^\text{obs}; \theta)$ is a $d_\mu \times 1$ vector-valued function of $\theta$, and $\bm V_i(\bm W_n,\calU;\theta) = (V_j(\bm W_n; \theta))_{j \in \calU(i)}$ is an $m_i \times 1$ vector-valued function of $\theta$.
\end{assumption}

Under Assumption \ref{as:linearmom}, we can write $\bar\mu_n(\bm G_n; \theta) = \bar q_n(\theta) + \bar u_n(\vec{\bm D}_\calU; \theta)$, where 
\begin{align}
    \bar q_n(\theta) \coloneqq \frac{1}{n} \sum_{i \in [n]} q_i(\bm W_n, \bm G_n^\text{obs}; \theta), \quad
    \bar u_n(\vec{\bm d}_m;\theta) \coloneqq \frac{1}{n} \sum_{i \in [n]} Z_i \bm V_i(\bm W_n, \calU; \theta)^\top \bm d_{m,i},
\end{align}
and $\vec{\bm D}_\calU \coloneqq (\bm D_{\calU(1)}^\top, \ldots, \bm D_{\calU(n)}^\top)^\top$.
Here, $m \coloneqq \sum_{i \in [n]} m_i$, $\vec{\bm d}_m \coloneqq (\bm d_{m,1}^\top, \ldots, \bm d_{m,n}^\top)^\top \in \bbR^m$, and $\bm d_{m,i} = (d_{ij})_{j \in \calU(i)}$ is a generic $m_i \times 1$ vector.  
Throughout, we assume that $m > 0$.

\medskip

Two specific examples that fit into our framework are given below.

\begin{example}[Linear SAR models]\label{ex:sar}
    Suppose that the outcome $Y_i$ is determined by the following linear SAR model:
    \begin{align}
        Y_i = \alpha_0 \sum_{j \neq i} G_{ij} Y_j + X_i^\top \beta_0 + \varepsilon_i,
    \end{align}
    where $X_i$ is a vector of exogenous regressors.
    To consistently estimate the parameter $\theta_0 = (\alpha_0, \beta_0^\top)^\top$, we address the endogeneity of the autoregressive term $\sum_{j \neq i} G_{ij} Y_j$ using an IV method.
    Assume that $\sum_{j \neq i} G_{ij}^{\mathrm{obs}} X_j$ provides a valid set of IVs for $\sum_{j \neq i} G_{ij} Y_j$.
    Note that this assumption does not require the observed network to coincide with or approximate the true network well; it rules out the case in which the observed network is entirely uninformative about the true peer exposure.
    Then, letting $Z_i = (\sum_{j \neq i} G_{ij}^{\mathrm{obs}} X_j^\top, X_i^\top)^\top$ and $W_i = (Y_i, Z_i^\top)^\top$, we can obtain \eqref{eq:momeq} and satisfy Assumption \ref{as:linearmom} by setting
    \begin{align}
    \mu_i(\bm W_n, \bm G_n; \theta)
    & = Z_i \left( Y_i - \alpha \sum_{j \neq i} G_{ij} Y_j - X_i^\top \beta \right) \\
    q_i(\bm W_n, \bm G_n^\text{obs}; \theta)
    & = Z_i \left( Y_i - \alpha \sum_{j \neq i} G_{ij}^\text{obs} Y_j - X_i^\top \beta \right), 
    \quad V_j(\bm W_n; \theta) = - \alpha Y_j.
\end{align}
\end{example}

\begin{example}[Treatment spillover models]\label{ex:treat}
    Consider a linear treatment spillover model of the following form:
    \begin{align}
        Y_i = \beta_{01} T_i + \beta_{02} \sum_{j \neq i} G_{ij} T_j + X_i^\top \beta_{03} + \varepsilon_i.
    \end{align}
    Here, both the individual treatment $T_i$ and the treatment spillover $\sum_{j \neq i} G_{ij} T_j$ may be endogenous.
    Let $Z_{1i}$ be a set of IVs for $T_i$, and construct IVs for the spillover term as $\sum_{j \neq i} G_{ij}^{\mathrm{obs}} Z_{1j}$.
    Then, for $\theta = (\beta_1, \beta_2, \beta_3^\top)^\top$, letting $Z_i = (Z_{1i}^\top, \sum_{j \neq i} G_{ij}^{\mathrm{obs}} Z_{1j}^\top, X_i^\top)^\top$ and $W_i = (Y_i, T_i, Z_i^\top)^\top$, we can obtain \eqref{eq:momeq} and satisfy Assumption \ref{as:linearmom} by setting
    \begin{align}
    \mu_i(\bm W_n, \bm G_n; \theta)
    & = Z_i \left( Y_i - \beta_1 T_i - \beta_2 \sum_{j \neq i} G_{ij} T_j - X_i^\top \beta_3 \right) \\
    q_i(\bm W_n, \bm G_n^\text{obs}; \theta)
    & = Z_i \left( Y_i - \beta_1 T_i - \beta_2 \sum_{j \neq i} G_{ij}^\text{obs} T_j - X_i^\top \beta_3 \right), \quad
    V_j(\bm W_n; \theta) = - \beta_2 T_j.
    \end{align}
\end{example}

\subsection{NA-GMM estimator}

We now introduce our network-adjusted GMM (NA-GMM) estimator.
Define the NA-GMM criterion function as follows:
\begin{align}
    Q_{n,\rho}(\theta)
    \coloneqq
    \inf_{\vec{\bm d}_m} \left\{ \left\| \bar q_n(\theta) + \bar u_n(\vec{\bm d}_m; \theta) \right\|_{\Omega_n}^2 + \frac{\rho}{m} \| \vec{\bm d}_m \|^2 \right\},
\end{align}
where $\|x \|_{\Omega_n}^2 \coloneqq x^\top \Omega_n x$ and $\Omega_n \in \bbR^{d_\mu \times d_\mu}$ is a non-stochastic, symmetric, positive definite GMM weight matrix.
Then, the NA-GMM estimator is defined by
\begin{align}\label{eq:nagmm}
        \hat \theta_{n,\rho} = \argmin_{\theta \in \Theta} Q_{n,\rho}(\theta),
\end{align}
where $\Theta \subset \bbR^{d_\theta}$ is a compact parameter set.
The criterion function is designed to account for moment mismatch caused by network errors.
The sample moments based on the observed network, $\bar q_n(\theta)$, may not be close to zero because $\bm G_n^{\mathrm{obs}}$ is mismeasured, and $\bar u_n(\vec{\bm d}_m; \theta)$ represents an adjustment term to improve the fit of moment conditions.

Introducing a penalty term $(\rho/m) \| \vec{\bm d}_m \|^2$ is crucial.
If the network adjustments were unrestricted (i.e., $\rho = 0$), the moment conditions could be made nearly satisfied for arbitrary $\theta$, which leads to a weak identification issue for the NA-GMM estimator.
In our criterion, the penalty term measures the quadratic cost of adjusting the network.
Hence, $Q_{n,\rho}(\theta)$ becomes small only when the observed-network moment conditions can be corrected by a relatively small amount of network adjustments.

Note that the adjusted network obtained through the inner optimization should not be interpreted as an estimate of the true interaction network $\bm G_n$.
It is chosen only to improve the moment fit under the penalty.
Therefore, under general network errors, the NA-GMM estimator should be interpreted as estimating a pseudo-true parameter.
That said, in many empirically relevant situations, the NA-GMM method is expected to produce an estimate with smaller bias than the naive GMM estimator without network adjustment:
\begin{align}\label{eq:naive}
    \hat \theta_n^{\text{naive}} = \argmin_{\theta \in \Theta} \left\| \bar q_n(\theta) \right\|_{\Omega_n}^2.
\end{align}
For more specific discussions on this topic in the context of SAR models, see Section \ref{sec:SAR}.

Recall that we have assumed that $d_\mu > d_\theta$.
Under just-identification in which the sample moment equation $\bar q_n(\theta) = \bm 0$ has a solution, the inner minimization can be achieved by $\vec{\bm d}_m = 0$.
In this case, the estimator reduces to the just-identified naive GMM estimator based on $\bm G_n^\text{obs}$, and the network adjustment plays no role.
The over-identification assumption rules out this trivial situation.

\begin{remark}[Relation to \cite{schennach2026optimally}]
    This construction of the GMM criterion is closely related to the OT-GMM developed by \cite{schennach2026optimally}.
    In their framework, moment mismatch is interpreted as arising from errors in the observed variables, and their method is to optimally perturb the data distribution so that the moment conditions can be restored.
    By contrast, in our framework, moment mismatch arises from errors in the observed interaction network.
    Accordingly, while NA-GMM keeps the observed variables fixed, it adjusts the network components directly.
\end{remark}

To compute the NA-GMM estimator, we need to solve a profiled minimization problem in which the inner minimization is with respect to a potentially high-dimensional vector $\vec{\bm d}_m$.
Fortunately, this inner minimization problem has a closed-form solution.
To see this, for each $i \in [n]$, define
\begin{align}
    \bm H_{m,i}(\theta)
    \coloneqq
    \frac{1}{n}  Z_i \bm V_i(\bm W_n,\calU;\theta)^\top  \in \bbR^{d_\mu \times m_i},
\end{align}
and let $\bm H_m(\theta) = \left( \bm H_{m,1}(\theta), \ldots, \bm H_{m,n}(\theta) \right) \in \bbR^{d_\mu \times m}$.
Then, we have $\bar u_n(\vec{\bm d}_m;\theta) = \sum_{i \in [n]} \bm H_{m,i}(\theta)\bm d_{m,i} = \bm H_m(\theta)\vec{\bm d}_m$.
Hence, the NA-GMM criterion function can be written as 
\begin{align}\label{eq:quad}
    Q_{n,\rho}(\theta) = \inf_{\vec{\bm d}_m} \left\{\left\| \bar q_n(\theta) + \bm H_m(\theta)\vec{\bm d}_m \right\|_{\Omega_n}^2 + \frac{\rho}{m}  \| \vec{\bm d}_m \|^2 \right\}.
\end{align}

\begin{remark}[Quasi-Bayesian interpretation]\label{rem:quasiBayes}
From a quasi-Bayesian perspective, for a given $\rho > 0$, suppose that $\theta$ has a uniform prior on $\Theta$ and that the network adjustment vector has the Gaussian prior
\begin{align}
    \vec{\bm d}_m \mid \rho \sim N\left( \bm 0, \frac{m}{n \rho} I_m \right).
\end{align}
Following \citet{chernozhukov2003mcmc}, define the quasi-posterior
\begin{align}
    p_{n,\rho}( \theta, \vec{\bm d}_m )
    &\propto \exp\left\{ -\frac{n}{2} \left\| \bar q_n(\theta) + \bar u_n( \vec{\bm d}_m; \theta ) \right\|_{\Omega_n}^2 \right\} \pi_\Theta(\theta) \pi_{n,\rho}( \vec{\bm d}_m ) \\
    &\propto \bm 1\{ \theta \in \Theta \} \exp\left[ -\frac{n}{2} \left\{ \left\| \bar q_n(\theta) + \bm H_m(\theta)\vec{\bm d}_m \right\|_{\Omega_n}^2 + \frac{\rho}{m} \| \vec{\bm d}_m \|^2 \right\} \right],
\end{align}
where $\pi_\Theta$ and $\pi_{n,\rho}$ denote the corresponding prior densities.
Then, in view of \eqref{eq:quad}, we can see that the NA-GMM method is to seek a mode of this quasi-posterior.
In particular, we can interpret that $\rho$ governs the strength of the prior belief that the network adjustment vector is close to zero.
The relationship between the zero-mean Gaussian prior and the quadratic penalty is analogous to the Bayesian interpretation of ridge regression (see, e.g., Chapter 11 of \citealp{murphy2022probabilistic}).
Alternative priors on the network adjustment vector would lead to different regularization structures for NA-GMM.
Exploring such alternatives is an interesting topic for future research.
\end{remark}

The expression \eqref{eq:quad} shows that the network adjustment term $\vec{\bm d}_m$ enters the criterion function quadratically.
Due to this quadratic structure, the minimization problem with respect to $\vec{\bm d}_m$ can be solved exactly.
We formally state this result in the following lemma.
\begin{lemma}\label{lem:NA-GMM-obj}
    For $\rho > 0$, the NA-GMM criterion function can be expressed as
\begin{align}\label{eq:NA-GMM-obj2}
    Q_{n,\rho}(\theta) = \left\| \bar q_n(\theta) \right\|^2_{\Psi_{n,\rho}(\theta)},
\end{align}
where
\begin{align}
    \Psi_{n,\rho}(\theta) \coloneqq \left( \Omega_n^{-1} + \frac{m}{\rho}\bm H_m(\theta) \bm H_m(\theta)^\top \right)^{-1}.
\end{align}
\end{lemma}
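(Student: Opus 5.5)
We solve the inner minimization in \eqref{eq:quad} explicitly and then simplify with the Woodbury identity. Fix $\theta$ and abbreviate $q \coloneqq \bar q_n(\theta)$, $H \coloneqq \bm H_m(\theta)$, $\Omega \coloneqq \Omega_n$, $\lambda \coloneqq \rho/m > 0$. The inner objective
\begin{align}
    f(\vec{\bm d}_m) = (q + H\vec{\bm d}_m)^\top \Omega (q + H\vec{\bm d}_m) + \lambda \vec{\bm d}_m^\top \vec{\bm d}_m
\end{align}
is a quadratic in $\vec{\bm d}_m$ with Hessian $2(H^\top \Omega H + \lambda I_m)$. Since $\lambda > 0$ and $\Omega$ is positive definite, this Hessian is positive definite. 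Hence $f$ is strictly convex and coercive, and the infimum is attained at the unique stationary point
\begin{align}
    \vec{\bm d}_m^{*} = -(H^\top \Omega H + \lambda I_m)^{-1} H^\top \Omega q .
\end{align}

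Substituting $\vec{\bm d}_m^*$ back, the cross term combines with the quadratic term in the usual way, giving
\begin{align}
    Q_{n,\rho}(\theta) = q^\top \Omega q - q^\top \Omega H (H^\top \Omega H + \lambda I_m)^{-1} H^\top \Omega q = q^\top \bigl[\Omega - \Omega H (\lambda I_m + H^\top \Omega H)^{-1} H^\top \Omega\bigr] q .
\end{align}
Next we apply the Woodbury matrix identity in the form
\begin{align}
    (\Omega^{-1} + H \lambda^{-1} H^\top)^{-1} = \Omega - \Omega H (\lambda I_m + H^\top \Omega H)^{-1} H^\top \Omega .
\end{align}
The inverses involved exist because $\Omega^{-1} + \lambda^{-1} H H^\top$ is positive definite, being positive definite plus positive semidefinite. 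Since $\lambda^{-1} = m/\rho$, the bracketed matrix equals $\Psi_{n,\rho}(\theta)$, which yields \eqref{eq:NA-GMM-obj2}.

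The argument is routine. The only points needing care are (i) justifying that the infimum is attained, so the first-order condition characterizes it; this comes from strict convexity, which needs $\rho>0$. (ii) Checking the invertibility conditions for Woodbury. An alternative that avoids the $m\times m$ inverse altogether is to write $\vec{\bm d}_m^* = -\lambda^{-1} H^\top \Omega r$ with residual $r = q + H\vec{\bm d}_m^*$. Then $r = (I + \lambda^{-1} H H^\top \Omega)^{-1} q = \Omega^{-1}\Psi_{n,\rho}(\theta) q$, and the criterion equals $r^\top\Omega r + \lambda^{-1} r^\top \Omega H H^\top \Omega r = r^\top \Omega \Psi_{n,\rho}(\theta)^{-1}\Omega r = q^\top \Psi_{n,\rho}(\theta) q$. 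I would present whichever is cleaner, probably the Woodbury route.
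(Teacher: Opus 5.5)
Your proposal is correct and follows essentially the same route as the paper: solve the first-order condition of the inner problem, substitute the minimizer back to obtain $\bar q_n(\theta)^\top[\Omega_n - \Omega_n \bm H_m(\theta)(\bm H_m(\theta)^\top\Omega_n\bm H_m(\theta) + (\rho/m)I_m)^{-1}\bm H_m(\theta)^\top\Omega_n]\bar q_n(\theta)$, and then apply the Woodbury identity. You also justify explicitly that the infimum is attained, via the positive definite Hessian when $\rho>0$; the paper leaves this point implicit.
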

The proof is by direct calculation; see Appendix \ref{app:NA-GMM-obj}.

\bigskip

Using the Woodbury matrix identity, we obtain
\begin{align}
    \Omega_n - \Psi_{n,\rho}(\theta)
    = \Omega_n \bm H_m(\theta) \left( \bm H_m(\theta)^\top \Omega_n \bm H_m(\theta) + \frac{\rho}{m} I_m \right)^{-1} \bm H_m(\theta)^\top \Omega_n,
\end{align}
which is positive semidefinite.
Hence, compared with the naive GMM in \eqref{eq:naive}, the NA-GMM criterion modifies the weight matrix by downweighting components associated with $\bm H_m(\theta)$.
In view of \eqref{eq:NA-GMM-obj2}, the amount of downweighting is governed by the reciprocal of $\rho$ and the eigenvalues of $\bm H_m(\theta)\bm H_m(\theta)^\top$.
That is, directions associated with larger eigenvalues of this matrix are downweighted more heavily, and a smaller value of $\rho$ strengthens this downweighting.
Note that, under our setup,
\begin{align}
    \bar q_n(\theta_0) \approx - \bar u_n(\vec{\bm D}_{\calU};\theta_0) = - \bm H_m(\theta_0)\vec{\bm D}_{\calU},
\end{align}
where the approximation becomes equality in expectation.
Thus, when the interaction matrix is misspecified, the resulting violation of the moment conditions approximately lies in the column space of $\bm H_m(\theta_0)$.
The NA-GMM downweights such violations, and therefore may be less sensitive than the naive GMM estimator to network uncertainty.

As $\rho$ becomes smaller, the downweighting becomes stronger.
At the same time, one should not choose $\rho$ to be too small.
In an extreme case, if one sets $\rho = 0$, then the criterion function is reduced to $Q_{n,0}(\theta) = \inf_{\vec{\bm d}_m} \| \bar q_n(\theta) + \bm H_m(\theta)\vec{\bm d}_m \|_{\Omega_n}^2$.
Because $\vec{\bm d}_m$ typically has a large number of degrees of freedom, this may lead to a weak identification issue for $\theta$.
Thus, there is a trade-off in the choice of $\rho$: a smaller $\rho$ makes the estimator less sensitive to violations of the moment conditions, whereas a larger $\rho$ leads to stronger identification.

\begin{remark}[Alternative NA-GMM estimators]
    The NA-GMM criterion function in \eqref{eq:NA-GMM-obj2} is similar to that of the continuous-updating (CU) GMM estimator (e.g., \citealp{hansen1996finite,kleibergen2025double}) in that the GMM weight matrix depends on $\theta$ and changes during optimization.
    However, this feature can generate multiple local minima and make the optimization computationally difficult.
    A computationally simpler alternative is the fixed-weight NA-GMM estimator:
    \begin{align}\label{eq:FW-NA-GMM}
        \hat \theta_{n, \rho}^{\mathrm{fw}}
        =
        \argmin_{\theta \in \Theta}
        \left\| \bar q_n(\theta) \right\|_{\Psi_{n,\rho}(\theta^\dagger)}^2,
    \end{align}
    where $\theta^\dagger$ is a fixed vector specified by any method.
    Then, once $\rho$ is given, its computation is as easy as that of conventional GMM.

    Another alternative is the iterated GMM-type estimator:
    \begin{align}
        \hat \theta_{n, \rho}^{(t+1)}
        =
        \argmin_{\theta \in \Theta}
        \left\| \bar q_n(\theta) \right\|_{\Psi_{n,\rho}(\hat \theta_{n,\rho}^{(t)})}^2,
    \end{align}
    where $\hat \theta_{n,\rho}^{(0)}$ is an initial estimator obtained by any method.
    The iteration stops when a certain convergence criterion is met.
    Convergence of such iterations generally requires additional contractivity conditions (\citealp{hansen2021inference}).

    Both estimators are computationally more attractive than the original NA-GMM.
    In linear models such as the SAR model in Section \ref{sec:SAR}, they also admit closed-form expressions.
    Moreover, in that case, it is relatively straightforward to verify a desirable bias reduction property relative to the naive GMM; see Proposition \ref{prop:op-norm}.

    Under correct model specification, CU-GMM and iterated GMM are first-order asymptotically equivalent (\citealp{newey2004higher}), and fixed-weight one-step GMM is consistent for the same true parameter.
    Under moment misspecification, however, they generally converge to different limits because different objective functions are associated with different pseudo-true values.
    Consequently, the choice of initial values, such as $\theta^\dagger$ or $\hat \theta_{n,\rho}^{(0)}$, may affect the final estimate in a nontrivial way.
\begin{comment}
    Finally, in general, a small discrepancy among these GMM estimates should not be interpreted as evidence of small network errors.
    Such a discrepancy only indicates that the effect of network errors on the moment mismatch is limited.
    Since network errors affect the estimator through $\bm H_m(\theta)\vec{\bm D}_{\calU}$, different network errors may produce similar, or even negligible, effects on the resulting estimates.
\end{comment}
\end{remark}

\section{Bias Reduction Property in SAR Models}\label{sec:SAR}

\subsection{NA-GMM estimator for SAR models}

In this section, with a particular focus on linear SAR models, we investigate the bias reduction property of the NA-GMM estimators.
Specifically, we consider the same model as in Example \ref{ex:sar}:
\begin{align}
        Y_i = \alpha_0 \sum_{j \neq i} G_{ij} Y_j + X_i^\top \beta_0 + \varepsilon_i,
\end{align}
with $Z_i = (\sum_{j \neq i} G_{ij}^{\mathrm{obs}} X_j^\top, X_i^\top)^\top$.
As we have seen above, the moment function can be written as
\begin{align}
    \mu_i(\bm W_n, \bm G_n; \theta) = q_i(\bm W_n,\bm G_n^\text{obs}; \theta) + Z_i  \bm V_i(\bm W_n, \calU; \theta)^\top \bm D_{\calU(i)}
\end{align}
where $\theta = (\alpha, \beta^\top)^\top$, $q_i(\bm W_n, \bm G_n^\text{obs}; \theta) = Z_i ( Y_i - \alpha \sum_{j \neq i} G_{ij}^\text{obs} Y_j  - X_i^\top \beta )$, $\bm V_i(\bm W_n, \calU; \theta) = -\alpha \bm Y_{\calU(i)}$, and $\bm Y_{\calU(i)} = (Y_j)_{j \in \calU(i)}$.
Hence, $\bm H_{m,i}(\theta) = n^{-1} Z_i \bm V_i(\bm W_n,\calU;\theta)^\top = -(\alpha/n) Z_i \bm Y_{\calU(i)}^\top$.
Moreover, let
\begin{align}
    \Gamma_n \coloneqq \frac{1}{n}\left(Z_1 \bm Y_{\calU(1)}^\top, \ldots , Z_n \bm Y_{\calU(n)}^\top\right) \in \bbR^{d_\mu \times m}
\end{align}
so that we can write $\bm H_m(\theta) =  - \alpha \Gamma_n$ and $\bar u_n(\vec{\bm d}_m; \theta) = - \alpha \Gamma_n \vec{\bm d}_m$.
Then, the NA-GMM criterion function can be written as
\begin{align}
    Q_{n,\rho}(\theta) = \| \bar q_n(\theta) \|^2_{\Psi_{n,\rho}(\alpha)}
\end{align}
where
\begin{align}
    \Psi_{n,\rho}(\alpha) \coloneqq \left(\Omega_n^{-1} + \frac{m \alpha^2}{\rho} \Phi_n \right)^{-1},
\end{align}
and $\Phi_n \coloneqq \Gamma_n \Gamma_n^\top$.

To simplify the discussion, let us assume $\Omega_n = I_{d_\mu}$ hereinafter in this section.
As remarked in the previous section, this criterion function is a downweighted GMM criterion in certain directions, compared with the naive GMM.
To see this more clearly, let the eigenvalue decomposition of $\Phi_n$ be
\begin{align}
    \Phi_n = P_n \Lambda_n P_n^\top,
\end{align}
where $\Lambda_n = \mathrm{diag}(\lambda_{n,1},\ldots,\lambda_{n,d_\mu})$, and $\lambda_{n,k} \ge 0$ for all $k \in [d_\mu]$.
Then
\begin{align}
    I_{d_\mu} + \frac{m \alpha^2}{\rho} \Phi_n
    = P_n \left( I_{d_\mu} + \frac{m \alpha^2}{\rho} \Lambda_n \right) P_n^\top,
\end{align}
and hence
\begin{align}
    \Psi_{n,\rho}(\alpha) = \left(I_{d_\mu} + \frac{m \alpha^2}{\rho} \Phi_n\right)^{-1} 
    & = P_n \left( I_{d_\mu} + \frac{m \alpha^2}{\rho} \Lambda_n \right)^{-1} P_n^\top \\
    & = P_n \mathrm{diag} \left( \frac{\rho}{\rho + m\alpha^2\lambda_{n,1}}, \ldots, \frac{\rho}{\rho + m\alpha^2\lambda_{n,d_\mu}} \right) P_n^\top.
\end{align}
Thus, $\Psi_{n,\rho}(\alpha)$ and $\Phi_n$ share the same eigenvectors, and each eigen-direction is shrunk by the factor $\rho/(\rho + m\alpha^2\lambda_{n,k})$.
Directions associated with larger eigenvalues of $\Phi_n$ are therefore downweighted more strongly.

Recall that, in the SAR model, moment misspecification arises through the term $-\bm H_m(\theta_0) \vec{\bm D}_{\calU} = \alpha_0 \Gamma_n \vec{\bm D}_{\calU}$.
Thus, directions associated with larger eigenvalues of $\Phi_n = \Gamma_n\Gamma_n^\top$ are precisely those in which network errors can generate larger moment violations.
The NA-GMM weight matrix $\Psi_{n,\rho}(\alpha)$ downweights these directions, whereas the naive GMM with $\Omega_n=I_{d_\mu}$ assigns equal weight to all directions.

\subsection{Fixed-weight NA-GMM estimator}

It is difficult to analyze the bias reduction property of the original CU-type estimator in depth because it does not have a closed-form expression.
Therefore, here we instead investigate the fixed-weight version of the NA-GMM estimator.

Write
\begin{align}
    \bar q_n(\theta)
    & = \frac{1}{n}\bm Z_n^\top \left(\bm Y_n - \alpha \bm G_n^\text{obs} \bm Y_n - \bm X_n \beta \right)
    = \frac{1}{n}\bm Z_n^\top \left( \bm Y_n - \bm R_n^\text{obs} \theta \right),
\end{align}
where $\bm Y_n = (Y_1, \ldots, Y_n)^\top$, $\bm X_n = (X_1, \ldots, X_n)^\top$, $\bm Z_n = (Z_1, \ldots, Z_n)^\top$, and $\bm R_n^\text{obs} = (\bm G_n^\text{obs} \bm Y_n,\bm X_n)$.
Let $\bm R_n = (\bm G_n \bm Y_n, \bm X_n)$ and $\bm \varepsilon_n = (\varepsilon_1, \ldots, \varepsilon_n)^\top$ so that $\bm Y_n = \bm R_n \theta_0 + \bm \varepsilon_n$.
Then,
\begin{align}
    \bar q_n(\theta)
    & = \frac{1}{n}  \left[ \bm Z_n^\top \left\{ \bm R_n^\text{obs} (\theta_0 - \theta) + (\bm R_n - \bm R_n^\text{obs})\theta_0 + \bm \varepsilon_n \right\} \right] \\
    & = \Pi_n (\theta_0 - \theta) + \bm b_n,
\end{align}
where $\Pi_n \coloneqq n^{-1} \bm Z_n^\top \bm R_n^\text{obs}$, and $\bm b_n \coloneqq n^{-1} \bm Z_n^\top (\alpha_0 \bm D_n \bm Y_n + \bm \varepsilon_n) = \alpha_0 \Gamma_n \vec{\bm D}_\calU + n^{-1} \bm Z_n^\top \bm \varepsilon_n$.

Define, for some $\kappa > 0$,
\begin{align}\label{eq:fw_weight}
    \Psi_{n, \kappa}
    \coloneqq
    \left(I_{d_\mu} + \frac{m}{\kappa}\Phi_n\right)^{-1}
\end{align}
and 
\begin{align}
    \hat \theta_{n, \kappa}^{\text{fw}}
    & \coloneqq \argmin_{\theta \in \Theta} \left(\Pi_n (\theta_0 - \theta) + \bm b_n \right)^\top \Psi_{n, \kappa} \left(\Pi_n (\theta_0 - \theta) + \bm b_n \right) \\
    & = \theta_0 + \left( \Pi_n^\top \Psi_{n, \kappa} \Pi_n \right)^{-1}
    \Pi_n^\top \Psi_{n, \kappa} \bm b_n,
\end{align}
assuming that $\Pi_n^\top \Psi_{n, \kappa} \Pi_n$ is non-singular.
As $\kappa \to \infty$, the weight matrix $\Psi_{n, \kappa}$ collapses to an identity matrix, leading to the naive GMM estimator; that is, $\hat \theta_{n}^{\text{naive}} = \hat \theta_{n,\infty}^{\text{fw}}$.
Write
\begin{align}\label{eq:biasFW}
    \hat \theta_{n, \kappa}^\text{fw} - \theta_0
    & = L_n(\kappa) \vec{\bm D}_\calU +  \left( \Pi_n^\top \Psi_{n, \kappa} \Pi_n \right)^{-1}
    \Pi_n^\top \Psi_{n, \kappa} \left( \frac{1}{n}\bm Z_n^\top \bm \varepsilon_n \right),
\end{align}
where
\begin{align}
    L_n(\kappa)
    \coloneqq
    \alpha_0 \left( \Pi_n^\top \Psi_{n, \kappa} \Pi_n \right)^{-1} \Pi_n^\top \Psi_{n, \kappa} \Gamma_n.
\end{align}
From \eqref{eq:biasFW}, we can clearly see that the bias component of the fixed-weight NA-GMM comes from the network errors transformed through the matrix $L_n(\kappa)$.
Since the second term of \eqref{eq:biasFW} is typically of order $O_P(n^{-1/2})$ for any $\kappa > 0$, the relative performance of the fixed-weight NA-GMM compared with the naive GMM estimator is essentially governed by the first term.
The next proposition shows that $L_n(\kappa)$ generally becomes larger in operator norm as $\kappa$ increases.

\begin{proposition}\label{prop:op-norm}
    Suppose that $\Pi_n^\top \Psi_{n, \kappa} \Pi_n$ is non-singular for any $\kappa > 0$.
    Then, $\|L_n(\kappa)\|_{\mathrm{op}}$ is weakly increasing in $\kappa$.
\end{proposition}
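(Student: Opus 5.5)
The plan is to prove the stronger statement that $L_n(\kappa)L_n(\kappa)^\top$ is nondecreasing in $\kappa$ in the Loewner order. Since $\|L_n(\kappa)\|_{\mathrm{op}}^2 = \lambda_{\max}(L_n(\kappa)L_n(\kappa)^\top)$ and $\lambda_{\max}$ is monotone in the Loewner order, the proposition then follows.

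\textbf{Diagonalization.} First I would pass to the eigenbasis of $\Phi_n = P_n\Lambda_n P_n^\top$. Write $\tilde\Pi_n \coloneqq P_n^\top \Pi_n$ and $S_\kappa \coloneqq \mathrm{diag}(s_1,\ldots,s_{d_\mu})$ with $s_k = \kappa/(\kappa + m\lambda_{n,k}) \in (0,1]$, so that $\Psi_{n,\kappa} = P_n S_\kappa P_n^\top$. Take a singular value decomposition $\Gamma_n = P_n \Lambda_n^{1/2} U_n^\top$ with $U_n^\top U_n = I$ on the relevant block. Then
\begin{align}
L_n(\kappa)L_n(\kappa)^\top = \alpha_0^2 A_\kappa^{-1}\tilde\Pi_n^\top S_\kappa \Lambda_n S_\kappa \tilde\Pi_n A_\kappa^{-1}, \qquad A_\kappa \coloneqq \tilde\Pi_n^\top S_\kappa \tilde\Pi_n .
\end{align}
The key algebraic identity is $S_\kappa\Lambda_n = (\kappa/m)(I - S_\kappa)$, which follows from $\Psi_{n,\kappa}^{-1} = I + (m/\kappa)\Phi_n$. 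It gives
\begin{align}
L_n(\kappa)L_n(\kappa)^\top = \frac{\alpha_0^2\kappa}{m}\left( A_\kappa^{-1} - A_\kappa^{-1}B_\kappa A_\kappa^{-1}\right), \qquad B_\kappa \coloneqq \tilde\Pi_n^\top S_\kappa^2\tilde\Pi_n .
\end{align}
This expresses everything through the single diagonal weight $S_\kappa$.

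\textbf{Reduction to a variance-type matrix.} Next I would use $\mathrm{d}S_\kappa/\mathrm{d}\kappa = \kappa^{-1}S_\kappa(I - S_\kappa)$ and differentiate in $\kappa$. A cleaner route avoids differentiating the bracket directly. Write $M_\kappa \coloneqq \kappa^{-1}(S_\kappa\Lambda_n S_\kappa)\cdot m$, so that $M_\kappa = S_\kappa(I-S_\kappa)$. Then $L L^\top$ is $(\alpha_0^2\kappa/m)$ times a weighted-least-squares ``sandwich''
\begin{align}
A_\kappa^{-1}\tilde\Pi_n^\top S_\kappa(I - S_\kappa)\tilde\Pi_n A_\kappa^{-1}.
\end{align}
This can be read as the covariance of a WLS estimator with weights $s_k$ and heteroskedastic variances $(1-s_k)/s_k = m\lambda_{n,k}/\kappa$. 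Scaled by $\kappa/m$, this is exactly the WLS ``variance'' with variances $\lambda_{n,k}$ and weights $s_k$; note these variances do not depend on $\kappa$. As $\kappa$ increases, the weights $s_k = 1/(1 + m\lambda_{n,k}/\kappa)$ move away from the inverse-variance weights $1/\lambda_{n,k}$. Inverse-variance weighting is Loewner-optimal by Gauss--Markov, which suggests monotonicity. So I would show that $\kappa \mapsto A_\kappa^{-1}\tilde\Pi_n^\top S_\kappa\Lambda_n S_\kappa\tilde\Pi_n A_\kappa^{-1}$ is Loewner-increasing. The argument would be to compute its derivative and rewrite it as $A_\kappa^{-1} C_\kappa^\top D_\kappa C_\kappa A_\kappa^{-1}$ with $D_\kappa$ diagonal and nonnegative. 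The natural candidate for $C_\kappa$ is the residual-maker $(I - \tilde\Pi_n A_\kappa^{-1}\tilde\Pi_n^\top S_\kappa)$ applied to suitable diagonal blocks.

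\textbf{Main obstacle.} The hard step is this derivative computation. It produces cross terms of the form $A^{-1}\tilde\Pi^\top \dot S\,\tilde\Pi A^{-1}\tilde\Pi^\top S\Lambda S\tilde\Pi A^{-1}$, and these must combine into a PSD quadratic form. I would first try completing the square using the identity $\dot S_\kappa = \kappa^{-1}S_\kappa(I-S_\kappa) = \kappa^{-1}(m/\kappa)S_\kappa^2\Lambda_n$. This makes all diagonal factors functions of $S_\kappa$ and $\Lambda_n$ that commute. If a full Loewner argument fails, I would fall back on proving only the scalar statement. For a fixed unit $x$, $\|L_n(\kappa)^\top x\|^2$ is a ratio of weighted sums in the $s_k$, and I would show directly that it is nondecreasing. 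Taking the supremum over $x$ then preserves monotonicity, since a supremum of nondecreasing functions is nondecreasing.
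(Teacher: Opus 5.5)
The setup is correct and targets the same statement as the paper. With $A_\kappa=\tilde\Pi_n^\top S_\kappa\tilde\Pi_n$ you have $L_n(\kappa)L_n(\kappa)^\top=\alpha_0^2K(\kappa)$, where $K(\kappa)\coloneqq A_\kappa^{-1}\tilde\Pi_n^\top S_\kappa\Lambda_nS_\kappa\tilde\Pi_nA_\kappa^{-1}$, and Loewner monotonicity of $K$ gives the claim. Your identities $S_\kappa\Lambda_n=(\kappa/m)(I-S_\kappa)$ and $\mathrm{d}S_\kappa/\mathrm{d}\kappa=\kappa^{-1}S_\kappa(I-S_\kappa)$, and the WLS-sandwich reading, are all right.

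The gap is that the proposal stops exactly where the proof's content lies. You never establish the sign of $\mathrm{d}K/\mathrm{d}\kappa$; you only call it the main obstacle and guess its form. Neither supporting remark closes this. Gauss--Markov with variance $\Lambda_n$ only says each $K(\kappa)$ dominates the inverse-variance (GLS) benchmark. It says nothing about how $K(\kappa_1)$ and $K(\kappa_2)$ compare for two positive values. Your fallback is also not weaker: monotonicity of $x^\top L_n(\kappa)L_n(\kappa)^\top x$ for every fixed $x$ \emph{is} Loewner monotonicity. Moreover, it is a ratio of weighted sums only when $d_\theta=1$; otherwise $A_\kappa^{-1}$ is a genuine matrix inverse.

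The missing computation is short in $t=1/\kappa$, which is how the paper proceeds. Write $S$ for $S_\kappa$ as a function of $t$, so $\mathrm{d}S/\mathrm{d}t=-m\Lambda_nS^2$. Set $A=\tilde\Pi_n^\top S\tilde\Pi_n$, $N=\tilde\Pi_n^\top S\Lambda_nS\tilde\Pi_n$ and $E=\tilde\Pi_n^\top S\Lambda_nS\Lambda_nS\tilde\Pi_n$. Then $\mathrm{d}A/\mathrm{d}t=-mN$, $\mathrm{d}N/\mathrm{d}t=-2mE$, and
\[
\frac{\mathrm{d}}{\mathrm{d}t}\left(A^{-1}NA^{-1}\right)=-2m\,A^{-1}\left(E-NA^{-1}N\right)A^{-1},
\qquad
E-NA^{-1}N=Y^\top(I-P_X)Y\succeq 0 .
\]
Here $X\coloneqq S^{1/2}\tilde\Pi_n$, $Y\coloneqq S^{1/2}\Lambda_nS\tilde\Pi_n$, and $P_X$ is the orthogonal projection onto the column space of $X$, so that $X^\top X=A$, $X^\top Y=N$ and $Y^\top Y=E$. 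Since $(I-P_X)Y=S^{1/2}C_\kappa$ with $C_\kappa\coloneqq(I-\tilde\Pi_nA^{-1}\tilde\Pi_n^\top S)\Lambda_nS\tilde\Pi_n$, this gives
\[
\mathrm{d}K/\mathrm{d}\kappa=(2m/\kappa^2)\,A^{-1}C_\kappa^\top S\,C_\kappa A^{-1}.
\]
That is exactly the residual-maker form you conjectured, with $D_\kappa=(2m/\kappa^2)S_\kappa$; your guess was right but unverified.

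Alternatively, your Gauss--Markov intuition becomes a derivative-free proof if you apply it twice with $\kappa$-dependent variances. Let $V_\Sigma(W)$ denote the WLS sandwich, which is linear in $\Sigma$, and let $\Sigma_\kappa\coloneqq I+(m/\kappa)\Lambda_n=S_\kappa^{-1}$. GLS optimality gives $V_{\Sigma_{\kappa_1}}(S_{\kappa_1})\preceq V_{\Sigma_{\kappa_1}}(S_{\kappa_2})$ and $V_{\Sigma_{\kappa_2}}(S_{\kappa_2})\preceq V_{\Sigma_{\kappa_2}}(S_{\kappa_1})$. Adding these, the $V_I$ terms cancel and you get $(m/\kappa_1-m/\kappa_2)\{V_{\Lambda_n}(S_{\kappa_2})-V_{\Lambda_n}(S_{\kappa_1})\}\succeq 0$, which is the claim for $\kappa_1<\kappa_2$.
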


\begin{remark}[An implication of Proposition \ref{prop:op-norm}]
The monotonicity result in Proposition \ref{prop:op-norm} does not imply that the fixed-weight NA-GMM estimator uniformly reduces the bias for every realization of network errors as $\kappa \to 0$.
For a given $\vec{\bm D}_\calU$, the bias can exceed that of the naive GMM in some directions.
Nevertheless, the result implies a sharp comparison in a worst-case sense.
Indeed, noting that $\sup_{\|\vec{\bm d}_m\|\le c} \|L_n(\kappa)\vec{\bm d}_m \| = c \|L_n(\kappa)\|_{\mathrm{op}}$, Proposition \ref{prop:op-norm} gives
\begin{align}
    \sup_{\|\vec{\bm d}_m\|\le c}
    \left\| L_n(\kappa)\vec{\bm d}_m \right\| \le \sup_{\|\vec{\bm d}_m\|\le c} \left\| L_n(\infty)\vec{\bm d}_m \right\|.
\end{align}
Thus, although the fixed-weight NA-GMM estimator may not uniformly reduce the bias for all realizations of network errors, it achieves at least asymptotically a smaller worst-case bias than the naive GMM.

For the original CU-type NA-GMM, such a direct comparison is not available due to the dependence of the weight matrix on $\alpha$.
Nevertheless, the form of the weight matrix suggests a similar bias reduction mechanism.
Indeed, as shown in Section \ref{sec:MC}, numerical experiments indicate that the CU-type NA-GMM generally achieves a smaller bias than the fixed-weight estimator.
\end{remark}

\section{Asymptotic Properties of the NA-GMM Estimator}\label{sec:asymp}

In this section, we investigate the asymptotic properties of the NA-GMM estimator.
We first consider a general NA-GMM estimator in a local misspecification setting under some high-level conditions.
We then study the asymptotic properties of the estimator in SAR models under general moment misspecification and provide more primitive conditions.

\subsection{Asymptotic results in a general setting}

\renewcommand{\theassumption}{\thesection.1.\arabic{assumption}}
\setcounter{assumption}{0}

Since the NA-GMM estimator is defined based on misspecified moment conditions, its probability limit generally does not coincide with the true $\theta_0$ (\citealp{hall2003large,hansen2021inference}).
To characterize the limiting behavior of the NA-GMM estimator, we define the \textit{pseudo-true} parameter value as follows:
\begin{align}
    \theta_{n,\rho}^*
    & \coloneqq \argmin_{\theta \in \Theta} Q_{n,\rho}^*(\theta), 
\end{align}
where $Q_{n,\rho}^*(\theta) \coloneqq \| \varphi_n(\theta) \|^2_{\Psi_{n, \rho}^*(\theta)}$, $\varphi_n(\theta) \coloneqq \bbE[\bar q_n(\theta)]$, and
\begin{align}
    \Psi_{n, \rho}^*(\theta) \coloneqq \left(\Omega_n^{-1} + \frac{m}{\rho} \bbE [\bm H_m(\theta) \bm H_m(\theta)^\top] \right)^{-1}.
\end{align}
We first show that, under the following set of conditions, the NA-GMM estimator $\hat \theta_{n, \rho}$ converges in probability to $\theta_{n,\rho}^*$.
 
\begin{assumption}[Identification]\label{as:ident}
    $Q_{n,\rho}^*(\theta)$ is continuous on $\Theta$, and for each $\epsilon>0$,
    \begin{align}
    \inf_{\theta \in \Theta: \: \| \theta - \theta_{n,\rho}^*\| \ge \epsilon} \{ Q_{n,\rho}^*(\theta) - Q_{n,\rho}^*(\theta_{n,\rho}^*)\} \ge c_\epsilon > 0
    \end{align}
    for all sufficiently large $n$.
\end{assumption}

\begin{assumption}\label{as:IV}
    $\{Z_i\}$ are non-stochastic and uniformly bounded in $i \in [n]$ for $n \ge 1$.
\end{assumption}

\begin{assumption}\label{as:omega}
    $\Omega_n \in \bbR^{d_\mu \times d_\mu}$ is a symmetric, non-stochastic matrix that satisfies $0 < \underline{c}_\Omega \le \lambda_{\min}(\Omega_n) \le \lambda_{\max}(\Omega_n) \le \bar c_\Omega < \infty$ for all sufficiently large $n$.
\end{assumption}

\begin{assumption}\label{as:unifmom}
    (i) $\sup_{\theta \in \Theta} \| \varphi_n(\theta) \| = O(1)$.
    (ii) $\sup_{\theta \in \Theta} \| \bar q_n(\theta) - \varphi_n(\theta) \| = o_P(1)$.
    (iii) Let $\{a_i\}_{i=1}^n$ be any non-stochastic uniformly bounded scalar sequence, and define $\bar a_{n,j} \coloneqq n^{-1} \sum_{i: \: j \in \calU(i)} a_i$ for each $j \in [n]$. 
    \begin{align}
    \sup_{\theta \in \Theta} \left| \frac{1}{n}\sum_{j \in [n]} \bar a_{n,j} \left\{ V_j(\bm W_n,\theta)^2 - \bbE \left[V_j(\bm W_n, \theta)^2\right] \right\} \right| = o_P(m^{-1}).
    \end{align}
\end{assumption}

Assumptions \ref{as:ident} and \ref{as:unifmom} are high-level conditions that need to be verified case by case.
In the next subsection, we provide more primitive alternative conditions to these in the case of SAR models.

In Assumption \ref{as:IV}, we assume that the IV vector $Z_i$ is fixed and bounded.
$Z_i$ typically comprises the covariate vector $X_i$ and its network-lagged version constructed using the observed interaction matrix $\bm G_n^\text{obs}$, and, as mentioned above, $\bm G_n^\text{obs}$ is treated as non-random.
Thus, the assumption essentially requires the covariates to be non-stochastic, which is a common setup in the literature on spatial and network econometrics.
Assumption \ref{as:omega} assumes that the GMM weight matrix is fixed, which precludes two-step optimal GMM-type estimators.
Combined with Assumption \ref{as:IV}, typical choices of $\Omega_n$ include $\Omega_n = I_{d_\mu}$ and $\Omega_n = (n^{-1} \bm Z_n^\top \bm Z_n)^{-1}$.
These assumptions are introduced mainly to simplify the theoretical presentation and to highlight the main properties of NA-GMM.
Relaxing them could be possible, but is beyond the scope of this paper.

\begin{theorem}[Consistency for the pseudo-true value]\label{thm:consistency}
    Suppose that Assumptions \ref{as:ident} -- \ref{as:unifmom} hold.
    Then, for each given $\rho > 0$, $\|\hat \theta_{n, \rho} - \theta_{n,\rho}^*\| = o_P(1)$.
\end{theorem}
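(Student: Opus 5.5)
The argument is the standard extremum-estimator route: identifiable uniqueness plus uniform convergence of the criterion. By Lemma \ref{lem:NA-GMM-obj}, $Q_{n,\rho}(\theta)=\|\bar q_n(\theta)\|^2_{\Psi_{n,\rho}(\theta)}$. Assumption \ref{as:ident} supplies the separation of $\theta^*_{n,\rho}$ uniformly in $n$. It therefore suffices to show
\begin{align}
\sup_{\theta\in\Theta}|Q_{n,\rho}(\theta)-Q^*_{n,\rho}(\theta)|=o_P(1).
\end{align}
Consistency then follows from the usual chain. First, $Q^*_{n,\rho}(\hat\theta_{n,\rho})\le Q_{n,\rho}(\hat\theta_{n,\rho})+o_P(1)\le Q_{n,\rho}(\theta^*_{n,\rho})+o_P(1)\le Q^*_{n,\rho}(\theta^*_{n,\rho})+o_P(1)$. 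Second, the event $\{\|\hat\theta_{n,\rho}-\theta^*_{n,\rho}\|\ge\epsilon\}$ forces this difference to be at least $c_\epsilon$, which has vanishing probability.

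For the uniform convergence, write
\begin{align}
Q_{n,\rho}-Q^*_{n,\rho}=(\bar q_n-\varphi_n)^\top\Psi_{n,\rho}(\bar q_n+\varphi_n)+\varphi_n^\top(\Psi_{n,\rho}-\Psi^*_{n,\rho})\varphi_n .
\end{align}
Both $\Psi_{n,\rho}(\theta)$ and $\Psi^*_{n,\rho}(\theta)$ are inverses of $\Omega_n^{-1}$ plus a positive semidefinite matrix. Hence $\|\Psi\|_{\mathrm{op}}\le\lambda_{\max}(\Omega_n)\le\bar c_\Omega$ uniformly, by Assumption \ref{as:omega}. With Assumption \ref{as:unifmom}(i),(ii), the first term is $o_P(1)\cdot\bar c_\Omega\cdot O_P(1)$ uniformly. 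For the second term I will use
\begin{align}
\Psi-\Psi^*=\Psi\,\frac{m}{\rho}\bigl(\bbE[\bm H_m\bm H_m^\top]-\bm H_m\bm H_m^\top\bigr)\Psi^*,
\end{align}
so its size is bounded by $\bar c_\Omega^2 \rho^{-1}\, m\sup_\theta\|\bm H_m\bm H_m^\top-\bbE[\bm H_m\bm H_m^\top]\|$ times $O(1)$. This reduces everything to showing $\sup_\theta\|\bm H_m(\theta)\bm H_m(\theta)^\top-\bbE[\cdot]\|=o_P(m^{-1})$.

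The main step is to express $\bm H_m\bm H_m^\top$ in the form of Assumption \ref{as:unifmom}(iii). Compute
\begin{align}
\bm H_m\bm H_m^\top=\sum_i \bm H_{m,i}\bm H_{m,i}^\top=\frac1{n^2}\sum_i Z_iZ_i^\top\sum_{j\in\calU(i)}V_j(\bm W_n;\theta)^2 .
\end{align}
Consider the $(k,l)$ entry. Exchanging the sums gives $\frac1n\sum_j\bar a_{n,j}V_j^2$ with $a_i=Z_{ik}Z_{il}$. This sequence is non-stochastic and uniformly bounded by Assumption \ref{as:IV}. So each entry minus its expectation is $o_P(m^{-1})$ uniformly in $\theta$ by Assumption \ref{as:unifmom}(iii). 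Since $d_\mu$ is fixed, the Frobenius norm is also $o_P(m^{-1})$. Multiplying by $m/\rho$ with $\rho$ fixed gives $o_P(1)$.

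I expect the main obstacle to be this bookkeeping: recognizing that the $m/\rho$ inflation is exactly offset by the $o_P(m^{-1})$ rate. One also needs the weight bounds to hold uniformly regardless of how large $m\bbE[\bm H_m\bm H_m^\top]$ is, which the inverse-of-$\Omega_n^{-1}$-plus-PSD structure gives for free.
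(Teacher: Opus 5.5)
Your proposal is correct and matches the paper's proof. The paper likewise proves $\sup_{\theta\in\Theta}|Q_{n,\rho}(\theta)-Q^*_{n,\rho}(\theta)|=o_P(1)$ (Lemma \ref{lem:unifconv}) by writing each entry of $\bm H_m\bm H_m^\top$ in the form of Assumption \ref{as:unifmom}(iii) with $a_i=Z_{ik}Z_{i\ell}$ and using $A^{-1}-B^{-1}=A^{-1}(B-A)B^{-1}$ together with the $\Omega_n^{-1}$-plus-PSD bound on both weight matrices; it then cites standard M-estimation theory under Assumption \ref{as:ident}, and your explicit chain of inequalities simply spells out that final step.
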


Next, we derive the asymptotic distribution of the NA-GMM estimator.
As pointed out in \cite{hall2003large}, under general moment misspecification, the form of the asymptotic distribution critically depends on the convergence rate and limiting distribution of the weight matrix $\Psi_{n,\rho}(\theta)$.
Thus, a fully general discussion would complicate the analysis and would not necessarily be informative for practical use.
To avoid this complexity, we first consider a \textit{local misspecification} case in this subsection.
The case of general moment misspecification will be discussed in the next subsection.

Let $\mathcal N_0 \subset \Theta$ be a given fixed neighborhood of $\theta_0$.

\begin{assumption}[Locally misspecified moment condition]\label{as:popmom}
    (i) $\varphi_n(\theta_0) = \delta_n/\sqrt{n}$ and $\delta \coloneqq \lim_{n \to \infty} \delta_n$ exists.
    (ii) $\bar q_n(\theta)$ is twice continuously differentiable on $\mathcal N_0$, and, for each $\ell$-th element $\varphi_{n,\ell}(\theta)$ of $\varphi_n(\theta)$, $\max_{1\le \ell\le d_\mu} \sup_{\theta \in \mathcal N_0} \| \partial_{\theta\theta^\top}\varphi_{n,\ell}(\theta) \|_{\mathrm{op}} = O(1)$.
\end{assumption}

Assumption \ref{as:popmom}(i) is a local misspecification condition, which can be equivalently written as $\bbE[\bar u_n(\vec{\bm D}_{\calU}; \theta_0)] = -\delta_n/\sqrt{n}$.
In the SAR model in Section \ref{sec:SAR}, we have
\begin{align}
    \bbE[\bar u_n(\vec{\bm D}_{\calU}; \theta_0)] 
    & = - \alpha_0 \bbE[\Gamma_n] \vec{\bm D}_{\calU} \\
    & = - \frac{1}{n} \alpha_0 \left(Z_1 \bbE[\bm Y_{\calU(1)}]^\top, \ldots , Z_n \bbE[\bm Y_{\calU(n)}]^\top \right) \vec{\bm D}_{\calU} = O\left( \frac{m}{n} \right),
\end{align}
provided that $\bbE[Y_j]$'s and $\vec{\bm D}_{\calU}$ are uniformly bounded.
Thus, Assumption \ref{as:popmom}(i) is satisfied if $m = O(\sqrt{n})$.
This condition covers empirical situations in which only a small part of the network is uncertain.
For instance, in survey-based network data, respondents may be allowed to nominate only a fixed number of peers.
Most respondents may not reach this nomination limit, but a few respondents may have more peers than the limit allows.
The condition also covers cases in which the survey is incomplete only for a small number of schools, villages, or firms.

\begin{assumption}\label{as:Psi}
    Each element of $\Psi_{n,\rho}(\theta)$ is twice continuously differentiable on $\mathcal N_0$, such that
    \begin{align}
        \max_{1 \le j \le d_\theta} \sup_{\theta \in \mathcal N_0} \left\| \partial_{\theta_j}\Psi_{n,\rho}^*(\theta) \right\|_{\mathrm{op}} = O(1), \quad 
        \max_{1\le j,k\le d_\theta} \sup_{\theta\in\mathcal N_0} \left\| \partial_{\theta_j\theta_k}\Psi_{n,\rho}^*(\theta) \right\|_{\mathrm{op}} = O(1).
    \end{align}
\end{assumption}

\begin{assumption}\label{as:K}
    Let $\dot\varphi_n(\theta) \coloneqq \partial_{\theta^\top} \varphi_n(\theta)$, $\Psi_\rho^* \coloneqq \lim_{n \to \infty} \Psi_{n,\rho}^*(\theta_0)$, $\dot\varphi \coloneqq \lim_{n \to \infty} \dot\varphi_n(\theta_0)$, and $K_\rho \coloneqq \dot\varphi^\top \Psi_\rho^* \dot\varphi$.
    $\lambda_{\min}(K_\rho) > 0$.
\end{assumption}

\begin{assumption}[LLN]\label{as:LLN}
    (i) $\sup_{\theta\in\mathcal N_0} \| \partial_{\theta^\top} \bar q_n(\theta)-\dot\varphi_n(\theta) \| = o_P(1)$, and $\max_{1\le \ell\le d_\mu} \sup_{\theta\in\mathcal N_0} \| \partial_{\theta\theta^\top}\bar q_{n,\ell}(\theta) - \partial_{\theta\theta^\top}\varphi_{n,\ell}(\theta) \| = o_P(1)$.
    (ii) $\sup_{\theta\in\mathcal N_0} \| \Psi_{n,\rho}(\theta) - \Psi_{n,\rho}^*(\theta) \|_{\mathrm{op}} = o_P(1)$, $\max_{1\le j\le d_\theta} \sup_{\theta\in\mathcal N_0} \| \partial_{\theta_j}\Psi_{n,\rho}(\theta) - \partial_{\theta_j}\Psi_{n,\rho}^*(\theta) \|_{\mathrm{op}} = o_P(1)$, and $\max_{1\le j,k\le d_\theta} \sup_{\theta\in\mathcal N_0} \| \partial_{\theta_j\theta_k}\Psi_{n,\rho}(\theta) - \partial_{\theta_j\theta_k}\Psi_{n,\rho}^*(\theta) \|_{\mathrm{op}} = o_P(1)$.
\end{assumption}

\begin{assumption}[CLT]\label{as:CLT} $\sqrt{n}\left(\bar q_n(\theta_0)-\varphi_n(\theta_0)\right) \overset{d}{\to} N(0,\Sigma_q)$.
\end{assumption}

Assumptions \ref{as:Psi} and \ref{as:K} are mild regularity conditions.
By contrast, Assumptions \ref{as:LLN} and \ref{as:CLT} directly impose asymptotic properties on the sample moments and the weight matrix.
In the SAR model considered below, these conditions can be verified under relatively primitive conditions.

\begin{theorem}[Asymptotic normality under local misspecification]\label{thm:local}
    Suppose that Assumptions \ref{as:ident} -- \ref{as:CLT} hold.
    Then,
\begin{align}
    \sqrt{n}(\hat\theta_{n,\rho} - \theta_{n,\rho}^*)
    \overset{d}{\to} N\left(\bm 0, K_\rho^{-1}\dot\varphi^\top \Psi_\rho^* \Sigma_q \Psi_\rho^* \dot\varphi K_\rho^{-1} \right),
\end{align}
and
\begin{align}
    \sqrt{n}(\hat\theta_{n,\rho} - \theta_0)
    \overset{d}{\to} N\left(- K_\rho^{-1} \dot\varphi^\top \Psi_\rho^*\delta, K_\rho^{-1}\dot\varphi^\top \Psi_\rho^* \Sigma_q \Psi_\rho^* \dot\varphi K_\rho^{-1} \right).
\end{align}
\end{theorem}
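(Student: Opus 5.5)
The plan is a standard two-layer expansion. First I show that the pseudo-true value $\theta_{n,\rho}^*$ is itself $n^{-1/2}$-close to $\theta_0$ and compute its local bias. Then I expand the sample first-order condition around $\theta_{n,\rho}^*$. Throughout, I assume $\theta_0$ is interior to $\Theta$, so $\mathcal N_0$ can be taken as an open neighbourhood and first-order conditions hold with equality for large $n$.

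\emph{Step 1: locating the pseudo-true value.} By Assumption \ref{as:popmom}(i) and the boundedness of $\Psi_{n,\rho}^*$ (it is dominated by $\Omega_n$, so Assumption \ref{as:omega} applies), we have $Q_{n,\rho}^*(\theta_0)=O(n^{-1})$. Hence
\begin{align}
0\le Q_{n,\rho}^*(\theta_0)-Q_{n,\rho}^*(\theta_{n,\rho}^*)\le Q_{n,\rho}^*(\theta_0)=O(n^{-1}).
\end{align}
Assumption \ref{as:ident} then forces $\|\theta_0-\theta_{n,\rho}^*\|<\epsilon$ eventually, for every $\epsilon>0$. So $\theta_{n,\rho}^*\to\theta_0$ and lies in $\mathcal N_0$ eventually.

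The population first-order condition is, for each $j$,
\begin{align}
2\,\partial_{\theta_j}\varphi_n(\theta^*)^\top\Psi_{n,\rho}^*(\theta^*)\varphi_n(\theta^*)+\varphi_n(\theta^*)^\top\partial_{\theta_j}\Psi_{n,\rho}^*(\theta^*)\varphi_n(\theta^*)=0 .
\end{align}
Write $\varphi_n(\theta^*)=\delta_n/\sqrt n+\dot\varphi_n(\theta_0)(\theta^*-\theta_0)+O(\|\theta^*-\theta_0\|^2)$ using Assumption \ref{as:popmom}(ii). The quadratic $\partial\Psi^*$ term is of order $\|\varphi_n(\theta^*)\|^2$ by Assumption \ref{as:Psi}. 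Using Assumption \ref{as:K}, the condition first yields $\|\theta^*-\theta_0\|=O(n^{-1/2})$ and then
\begin{align}
\sqrt n(\theta_{n,\rho}^*-\theta_0)=-K_\rho^{-1}\dot\varphi^\top\Psi_\rho^*\delta+o(1).
\end{align}
A by-product I will reuse is $\dot\varphi_n(\theta^*)^\top\Psi_{n,\rho}^*(\theta^*)\varphi_n(\theta^*)=O(n^{-1})$.

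\emph{Step 2: sample expansion.} By Theorem \ref{thm:consistency} and Step 1, $\hat\theta_{n,\rho}\to_P\theta_0$, so $\hat\theta_{n,\rho}$ is interior with probability approaching one. The sample gradient has components
\begin{align}
g_{n,j}(\theta)=2\,\partial_{\theta_j}\bar q_n(\theta)^\top\Psi_{n,\rho}(\theta)\bar q_n(\theta)+\bar q_n(\theta)^\top\partial_{\theta_j}\Psi_{n,\rho}(\theta)\bar q_n(\theta).
\end{align}
I would mean-value expand $0=g_n(\hat\theta)=g_n(\theta^*)+\mathcal H_n(\bar\theta)(\hat\theta-\theta^*)$.

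\emph{The score at $\theta^*$.} Decompose
\begin{align}
\bar q_n(\theta^*)=\{\bar q_n(\theta_0)-\varphi_n(\theta_0)\}+\varphi_n(\theta^*)+r_n .
\end{align}
Here $r_n=o_P(1)\cdot O(n^{-1/2})$, by a mean-value argument using the derivative LLN in Assumption \ref{as:LLN}(i) together with Step 1. So $\bar q_n(\theta^*)=O_P(n^{-1/2})$, which makes the quadratic $\partial\Psi_{n,\rho}$ term $O_P(n^{-1})$. Replacing $\partial\bar q_n$ and $\Psi_{n,\rho}$ by their population counterparts uses Assumption \ref{as:LLN}. The $\varphi_n(\theta^*)$ piece is $O(n^{-1})$ by the by-product of Step 1. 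Hence
\begin{align}
\sqrt n\,g_n(\theta^*)=2\dot\varphi^\top\Psi_\rho^*\sqrt n\{\bar q_n(\theta_0)-\varphi_n(\theta_0)\}+o_P(1).
\end{align}

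\emph{The Hessian.} $\mathcal H_n(\bar\theta)\to_P 2K_\rho$. The terms involving $\partial^2\bar q_n$ or $\partial\Psi$ and $\partial^2\Psi$ are all multiplied by $\bar q_n(\bar\theta)$, and $\bar q_n(\bar\theta)\to_P\varphi_n(\theta_0)=o(1)$. The uniform bounds in Assumptions \ref{as:popmom}, \ref{as:Psi} and \ref{as:LLN} then send these terms to zero.

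\emph{Conclusion.} Inverting via Assumption \ref{as:K} and applying Assumption \ref{as:CLT} with Slutsky gives the first display. Adding the deterministic shift from Step 1 gives the second display.

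\emph{Main obstacle.} The delicate point is the control of $\bar q_n$ at the random point $\bar\theta$ and at $\theta^*$ in Step 2. We have only an $o_P(1)$ uniform LLN for $\partial\bar q_n$ and no stochastic equicontinuity at rate $\sqrt n$. I would handle it by always routing through $\theta_0$ with mean-value expansions whose increments are $O_P(n^{-1/2})$. This requires first establishing the preliminary rate $\|\hat\theta-\theta^*\|=O_P(n^{-1/2})$. I would obtain that rate from the same expansion, since $\mathcal H_n(\bar\theta)$ is invertible with probability approaching one.
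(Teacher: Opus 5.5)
Your proof is correct, and its skeleton is the same as the paper's. Both arguments first pin down $\theta_{n,\rho}^*$ within $O(n^{-1/2})$ of $\theta_0$, together with its bias, from the population first-order condition. Both then mean-value expand the sample first-order condition about $\theta_{n,\rho}^*$, with the Hessian tending to $2K_\rho$. Two sub-steps are done differently.

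For locating $\theta_{n,\rho}^*$, the paper's Lemma~\ref{lem:rootn_bias} compares $Q_{n,\rho}^*$ on the sphere $\|\theta-\theta_0\|=C/\sqrt n$ with $Q_{n,\rho}^*(\theta_0)$. By itself this only certifies a local minimizer inside that ball, and the lemma does not call on Assumption~\ref{as:ident}. Your preliminary step, $0\le Q_{n,\rho}^*(\theta_0)-Q_{n,\rho}^*(\theta_{n,\rho}^*)=O(n^{-1})$ combined with well-separatedness, gives $\theta_{n,\rho}^*\to\theta_0$ explicitly. Both the rate and the bias are then read off the first-order condition, so your version is the tighter one.

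For the score, the paper expands $S_{n,\rho}(\theta_{n,\rho}^*)$ back to $\theta_0$. The term $2\dot\varphi_n(\theta_0)^\top\Psi_{n,\rho}^*(\theta_0)\delta_n$ in $\sqrt n S_{n,\rho}(\theta_0)$ is then cancelled by $[2K_{n,\rho}+o_P(1)]\sqrt n(\theta_{n,\rho}^*-\theta_0)$ via \eqref{eq:bias}. You instead split $\bar q_n(\theta_{n,\rho}^*)$ and remove the $\varphi_n(\theta_{n,\rho}^*)$ piece directly with the population first-order condition. This avoids a second expansion of the sample score and uses the bias formula only for the second display.

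Finally, the rate $\|\hat\theta_{n,\rho}-\theta_{n,\rho}^*\|=O_P(n^{-1/2})$ that you flag as the main obstacle is never an input to your argument. Beyond the uniform LLNs, the Hessian limit needs only consistency, through $\bar q_n(\bar\theta)=o_P(1)$, which follows from Assumption~\ref{as:unifmom}(ii). The score is evaluated at the non-random point $\theta_{n,\rho}^*$. So the rate is simply an output of the expansion.
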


The second result of Theorem \ref{thm:local} implies that, as a special case of Assumption \ref{as:popmom}(i), if $\delta_n = o(1)$, then the limiting distribution of $\sqrt{n}(\hat\theta_{n,\rho} - \theta_0)$ is centered at zero.
Hence, even when the moment conditions are correctly specified, there is little risk in using the NA-GMM estimator as an alternative to the naive estimator.

\subsection{Asymptotic results for SAR models}

\renewcommand{\theassumption}{\thesection.2.\arabic{assumption}}
\setcounter{assumption}{0}

We next study the asymptotic properties of the NA-GMM estimator for SAR models in a general network error setting.
The model and the choice of IVs are the same as those in Example \ref{ex:sar} and Section \ref{sec:SAR}.
The GMM weight matrix $\Omega_n$ is allowed to be any matrix satisfying Assumption \ref{as:omega}.

We first show that the high-level conditions introduced in the previous subsection can be verified under the following set of primitive assumptions.

\begin{assumption}\label{as:error}
    (i) $\{\varepsilon_i\}$ are independent and identically distributed (IID).
    (ii) $\bbE[\varepsilon_i] = 0$, and $\bbE |\varepsilon_i|^{4 + c} < \infty$ for some $c > 0$, such that $\sigma_\varepsilon^2 \coloneqq \bbE[\varepsilon_i^2]$, $\mu_{\varepsilon,3} \coloneqq \bbE[\varepsilon_i^3]$, and $\mu_{\varepsilon,4} \coloneqq \bbE[\varepsilon_i^4]$ exist.
\end{assumption}

\begin{assumption}\label{as:X}
    $\{X_i\}$ are non-stochastic and uniformly bounded in $i \in [n]$ for $n \ge 1$.
\end{assumption}

\begin{assumption}\label{as:network}
    (i) $\bm M_n \coloneqq (I_n - \alpha_0 \bm G_n)^{-1}$ exists.
    (ii) $\|\bm G_n\|_1$, $\|\bm G_n\|_\infty$, $\|\bm G_n^{\mathrm{obs}}\|_1$, $\|\bm G_n^{\mathrm{obs}}\|_\infty$, $\|\bm M_n\|_1$, and $\|\bm M_n\|_\infty$ are uniformly bounded in $n \ge 1$.
\end{assumption}

\begin{assumption}[Boundedness of uncertainty set]\label{as:ubound}
    Let $\bm U_n \coloneqq (\bm{1}\{(i,j) \in \calU_n\})_{i,j \in [n]}$. 
    $\|\bm U_n\|_1$ and $\|\bm U_n\|_\infty$ are uniformly bounded in $n \ge 1$.
\end{assumption}

\begin{assumption}[Identification]\label{as:identsar}
    (i) $\lambda_{\min} \left(\Pi_{n,-1}^\top \Psi_{n,\rho}^*(\alpha) \Pi_{n,-1} \right)  \ge c_\beta$ uniformly in $\alpha \in \Theta_\alpha$ for all sufficiently large $n$, where
    \begin{align}
        \Psi_{n,\rho}^*(\alpha) \coloneqq \left(\Omega_n^{-1} + \alpha^2 \bbE[ \Xi_{n,m,\rho} ]\right)^{-1},
        \quad \Xi_{n,m,\rho} \coloneqq \frac{m}{\rho} \Phi_n,
    \end{align}
    and $\Pi_n = \left(\pi_{n,1}, \Pi_{n,-1} \right)$, with $\pi_{n,1}$ being the first column of $\Pi_n$ and $\Pi_{n,-1}$ collecting the remaining columns. 
    (ii) For each $\epsilon>0$, 
    \begin{align}
    \inf_{\alpha \in \Theta_\alpha: \: |\alpha - \alpha_{n,\rho}^*| \ge \epsilon} \left\{ \inf_{\beta} Q_{n,\rho}^*(\alpha,\beta) - \inf_{\beta} Q_{n,\rho}^*(\alpha_{n,\rho}^*,\beta) \right\} \ge c_\epsilon > 0
    \end{align}
    for all sufficiently large $n$.
\end{assumption}

\begin{assumption}\label{as:variance}
For a given $v=(v_1^\top,v_2^\top,v_3^\top)^\top \in \bbR^{2d_\mu+d_\mu^2}$, let $\mathcal A_n(v)$ be a $1 \times n$ vector and $\mathcal B^s_n(v)$ be an $n \times n$ symmetric matrix satisfying 
\begin{align}
    \sqrt{n}
    v^\top \begin{pmatrix}
        \bar q_n(\theta_0) - \varphi_n(\theta_0) \\
        \pi_{n,1} - \bbE[\pi_{n,1}] \\
        \operatorname{vec}(\Xi_{n,m,\rho} - \bbE[\Xi_{n,m,\rho}])
    \end{pmatrix}
    = \mathcal A_n(v) \bm\varepsilon_n + \bm \varepsilon_n^\top \mathcal B^s_n(v) \bm \varepsilon_n - \bbE[\bm\varepsilon_n^\top \mathcal B^s_n(v) \bm\varepsilon_n],
\end{align}
whose definitions are given in \eqref{eq:varcomp}.
Let $\omega_{n,\rho}^2(v) \coloneqq \operatorname{Var} (\mathcal A_n(v)\bm \varepsilon_n + \bm \varepsilon_n^\top \mathcal B_n^s(v) \bm \varepsilon_n)$.
For any $v$ such that $\|v\| = 1$, $\omega_\rho^2(v) \coloneqq \lim_{n\to\infty} \omega_{n,\rho}^2(v)$ exists, and $\omega_{n,\rho}^2(v) \ge c_\omega > 0$ for all sufficiently large $n$.
\end{assumption}

In Assumption \ref{as:error}, we assume that the error terms are IID.
This assumption is stronger than necessary, but is imposed for simplicity and to demonstrate the properties of the NA-GMM estimator more transparently.
Noting that $Z_i = (\sum_{j \neq i} G_{ij}^{\mathrm{obs}} X_j^\top, X_i^\top)^\top$, Assumption \ref{as:X} parallels Assumption \ref{as:IV}.
Assumption \ref{as:network} is standard in the literature and limits the magnitude of social interactions.

Assumption \ref{as:ubound} is essential.
It requires that the size of the uncertainty set for each unit, $m_i$, is uniformly bounded, and also that each unit does not appear in other units' uncertainty sets too many times.
The assumption implies that $m = O(n)$, and therefore accommodates a wider range of network uncertainty than the local misspecification setting discussed above.
Roughly speaking, the uncertainty set should have a locally clustered structure.
This condition should be natural in many empirical applications where the network consists of multiple clusters, such as schools, grades, villages, or firms, and units in different clusters typically do not have close interactions.

Assumption \ref{as:identsar} is the main identification condition in the context of SAR models, where condition (i) ensures the identification of the pseudo-true parameter $\beta_{n,\rho}^*$ for $\beta$, and condition (ii) assumes the identifiability of $\alpha^*_{n,\rho}$, the pseudo-true parameter for $\alpha$.
Assumption \ref{as:variance} is a technical condition that rules out variance degeneracy of the NA-GMM estimator.
In Lemma \ref{lem:SAR_CLT}, we prove that $\mathcal A_n(v) \bm\varepsilon_n + \bm \varepsilon_n^\top \mathcal B^s_n(v) \bm \varepsilon_n - \bbE[\bm\varepsilon_n^\top \mathcal B^s_n(v) \bm\varepsilon_n]$ is asymptotically distributed as $N(0, \omega_\rho^2(v))$ under the assumptions made here.
The explicit expression for $\omega_\rho^2(v)$ can be obtained by applying the definitions of $\mathcal A_n(v)$ and $\mathcal B_n^s(v)$ to (3.2) of \cite{kelejian2001asymptotic}.

\medskip

We first show in the next lemma that the assumptions given above are sufficient to verify the main high-level conditions imposed in Theorems \ref{thm:consistency} and \ref{thm:local}.

\begin{lemma}[Verification of high-level conditions]\label{lem:verify}
    Suppose that Assumptions \ref{as:omega}, \ref{as:error} -- \ref{as:variance} hold.
    Then, Assumptions \ref{as:ident}, \ref{as:unifmom}, \ref{as:LLN}, and \ref{as:CLT} are satisfied.
\end{lemma}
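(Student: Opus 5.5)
The plan is to reduce all four high-level conditions to properties of linear and quadratic forms in $\bm\varepsilon_n$, using the reduced form $\bm Y_n = \bm M_n(\bm X_n\beta_0 + \bm\varepsilon_n)$.

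\textbf{Step 1: reduction to linear--quadratic forms.} In the SAR model, $\bar q_n(\theta) = n^{-1}\bm Z_n^\top(\bm Y_n - \alpha\bm G_n^{\mathrm{obs}}\bm Y_n - \bm X_n\beta)$ is affine in $\theta$, with random coefficients $n^{-1}\bm Z_n^\top\bm Y_n$ and $\pi_{n,1} = n^{-1}\bm Z_n^\top\bm G_n^{\mathrm{obs}}\bm Y_n$. Likewise $\Psi_{n,\rho}(\alpha) = (\Omega_n^{-1}+\alpha^2\Xi_{n,m,\rho})^{-1}$ depends on the data only through $\Xi_{n,m,\rho} = (m/\rho)\Gamma_n\Gamma_n^\top$. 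Each entry of $\Gamma_n\Gamma_n^\top$ is $n^{-2}\sum_i Z_{i,k}Z_{i,l}\sum_{j\in\calU(i)}Y_j^2$, so with $m=O(n)$ the matrix $\Xi_{n,m,\rho}$ is a normalized quadratic form $\bm Y_n^\top \bm C_n \bm Y_n/n$. Here $\bm C_n$ is diagonal with bounded entries, by Assumptions \ref{as:IV} and \ref{as:ubound}. Every random object therefore has the form $n^{-1}(\bm a_n^\top\bm\varepsilon_n + \bm\varepsilon_n^\top\bm B_n\bm\varepsilon_n)$. Assumptions \ref{as:X}, \ref{as:network} and \ref{as:ubound} bound the row and column sums of $\bm B_n$ (a product of $\bm M_n$, $\bm G_n^{\mathrm{obs}}$ and the diagonal weights) and make $\bm a_n$ uniformly bounded. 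Standard variance formulas for such forms (Kelejian--Prucha, Lee) under Assumption \ref{as:error} then give variance $O(1/n)$.

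\textbf{Step 2: the LLN-type conditions.}
\begin{itemize}
\item Assumption \ref{as:unifmom}(i) follows from boundedness of the means and compactness of $\Theta$.
\item Assumption \ref{as:unifmom}(ii) holds because $\bar q_n-\varphi_n$ is affine in $\theta$ with $O_P(n^{-1/2})$ coefficients, so uniformity in $\theta$ is immediate.
\item For Assumption \ref{as:unifmom}(iii), $V_j^2=\alpha^2Y_j^2$, so the supremum is $\sup_\alpha\alpha^2$ times a single quadratic form with weights $\bar a_{n,j}=O(1/n)$, which are uniformly bounded after rescaling by Assumption \ref{as:ubound}. That form is $O_P(n^{-3/2})=o_P(m^{-1})$ when $m=O(n)$.
\item Assumption \ref{as:LLN}(i) is trivial: $\partial_\theta\bar q_n$ involves only $\pi_{n,1}$ and $n^{-1}\bm Z_n^\top\bm X_n$, and the second derivatives vanish.
\item For Assumption \ref{as:LLN}(ii), $\Xi_{n,m,\rho}-\bbE[\Xi_{n,m,\rho}]=O_P(n^{-1/2})$. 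The map $(\alpha,\Xi)\mapsto(\Omega_n^{-1}+\alpha^2\Xi)^{-1}$ is smooth on positive semidefinite $\Xi$ with inverse norm at most $\bar c_\Omega$ by Assumption \ref{as:omega}. The first and second $\alpha$-derivatives are products of this inverse with $\Xi$, which is $O_P(1)$, so the inverse-difference identity yields uniform convergence in operator norm over compact $\Theta_\alpha$.
\end{itemize}

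\textbf{Step 3: identification.} I will verify Assumption \ref{as:ident} by profiling. For fixed $\alpha$, $Q^*_{n,\rho}(\alpha,\beta)$ is a quadratic in $\beta$ whose Hessian is bounded below by Assumption \ref{as:identsar}(i). The minimizer $\beta^*(\alpha)$ is therefore unique and continuous, and
\[
Q^*_{n,\rho}(\alpha,\beta)-Q^*_{n,\rho}(\alpha,\beta^*(\alpha)) \ge c_\beta\|\beta-\beta^*(\alpha)\|^2 .
\]
Combine this with Assumption \ref{as:identsar}(ii) for $\alpha$, splitting into the cases $|\alpha-\alpha^*|\ge\epsilon/2$ and $|\alpha-\alpha^*|<\epsilon/2$ with $\|\beta-\beta^*\|$ large. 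In the second case, use uniform continuity of $\beta^*(\cdot)$, which follows from bounded derivatives. This gives well-separatedness, and continuity is clear.

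\textbf{Step 4: the CLT (main obstacle).} For Assumption \ref{as:CLT}, use the representation in Assumption \ref{as:variance} restricted to $v=(v_1^\top,0,0)^\top$. Then $\sqrt n\,v_1^\top(\bar q_n(\theta_0)-\varphi_n(\theta_0))$ is a centered linear--quadratic form, and Lemma \ref{lem:SAR_CLT} (the Kelejian--Prucha CLT) gives asymptotic normality with variance $\omega^2_\rho(v)$, nondegenerate by Assumption \ref{as:variance}. Cram\'er--Wold then yields $N(0,\Sigma_q)$, with $\Sigma_q$ read off from $\omega_\rho^2$. The delicate part is checking the Kelejian--Prucha conditions: uniformly bounded row and column sums of $\mathcal B_n^s(v)$, and $\sup_n n^{-1}\sum_i|\mathcal A_{n,i}(v)|^{2+\eta}<\infty$. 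This requires care with the $\sqrt n$ scaling and the $m/n$ factor in $\Xi$; Assumption \ref{as:ubound} controls the nested sums over $\calU(i)$.
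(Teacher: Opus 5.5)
Your proposal is correct and follows essentially the same route as the paper. The paper also profiles out $\beta$ via Assumption \ref{as:identsar}(i) and splits on $|\alpha-\alpha_{n,\rho}^*|$ for identification. It writes $\bar q_n-\varphi_n$, $\pi_{n,1}-\bbE[\pi_{n,1}]$ and $\Xi_{n,m,\rho}-\bbE[\Xi_{n,m,\rho}]$ as $O_P(n^{-1/2})$ linear--quadratic forms in $\bm\varepsilon_n$, with the $O_P(n^{-3/2})$ bound giving Assumption \ref{as:unifmom}(iii), and it reads Assumption \ref{as:CLT} off Lemma \ref{lem:SAR_CLT}.

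One small fix: in the identification step, do not take the threshold on $|\alpha-\alpha_{n,\rho}^*|$ to be exactly $\epsilon/2$. Take some $\epsilon_1\le\epsilon/2$ that is small relative to the uniform-in-$n$ Lipschitz constant of $\beta_{n,\rho}^\circ(\cdot)$; this is harmless because Assumption \ref{as:identsar}(ii) holds for every $\epsilon_1>0$. Your appeal to uniform-in-$n$ continuity of $\beta_{n,\rho}^\circ(\cdot)$ is in fact slightly more careful than the paper's bare appeal to continuity.
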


Lemma \ref{lem:verify} implies that, under Assumptions \ref{as:error}--\ref{as:variance}, the NA-GMM estimator for the SAR model is consistent for the pseudo-true parameter and asymptotically normal in the sense of Theorem \ref{thm:local}, if the misspecification of the moment conditions is mild.
In fact, under the same conditions, we can show that, even when the moment misspecification is not local, the estimator is asymptotically normally distributed around the pseudo-true parameter.

\begin{theorem}[Asymptotic normality]\label{thm:SAR}
    Suppose that Assumptions \ref{as:omega}, \ref{as:error} -- \ref{as:variance} hold.
    In addition, assume that $H_{\rho}^* \coloneqq \lim_{n \to \infty} H_{n,\rho}^*(\theta_{n,\rho}^*)$ exists and is non-singular, and that $\mathcal J_{\rho} \coloneqq \lim_{n \to \infty} \mathcal J_{n,\rho}$ exists.
    Here, $H_{n,\rho}^*(\theta) \coloneqq \partial_{\theta \theta^\top} Q_{n, \rho}^*(\theta)$, whose explicit expression is given in \eqref{eq:hessian}.
    The definition of $\mathcal J_{n,\rho}$ is provided in the proof.
    Then,
    \begin{align}
    \sqrt n(\hat\theta_{n,\rho}-\theta_{n,\rho}^*)
    \overset{d}{\to}
    N\left(\bm 0, (H_\rho^*)^{-1}\mathcal J_\rho\Sigma_\rho^{\mathrm{SAR}}\mathcal J_\rho^\top (H_\rho^*)^{-1}\right),
    \end{align}
    where the definition of $\Sigma_\rho^{\mathrm{SAR}}$ is provided in Lemma \ref{lem:SAR_CLT}.
\end{theorem}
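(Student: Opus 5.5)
Under non-local misspecification I would use a Hall--Inoue (2003) style expansion of the first-order condition. The key point is that the score involves both the sample moments and the random weight matrix, so the fluctuations of $\Psi_{n,\rho}(\alpha)$ enter the limit at first order.

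\textbf{Step 1 (consistency).} By Lemma \ref{lem:verify} and Theorem \ref{thm:consistency}, $\hat\theta_{n,\rho}-\theta_{n,\rho}^*=o_P(1)$.

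\textbf{Step 2 (expansion of the first-order condition).} In the SAR model, $\bar q_n(\theta)=n^{-1}\bm Z_n^\top\bm Y_n-\Pi_n\theta$ is linear in $\theta$. The weight matrix is
\begin{align}
\Psi_{n,\rho}(\alpha)=(\Omega_n^{-1}+\alpha^2\Xi_{n,m,\rho})^{-1},
\end{align}
a smooth function of $\alpha$ and of the random matrix $\Xi_{n,m,\rho}$. Define the score
\begin{align}
S_n(\theta)=-2\Pi_n^\top\Psi_{n,\rho}(\alpha)\bar q_n(\theta)+e_1\,\bar q_n(\theta)^\top\partial_\alpha\Psi_{n,\rho}(\alpha)\bar q_n(\theta),
\end{align}
where $e_1$ is the first unit vector. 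The score $S_n$ is a smooth deterministic map $\mathcal S_n$ of $\theta$ and of the random triple $\xi_n:=(\bar q_n(\theta_0),\pi_{n,1},\operatorname{vec}\Xi_{n,m,\rho})$, since $\Pi_{n,-1}=n^{-1}\bm Z_n^\top\bm X_n$ is non-random. Let $S_n^*(\theta)=\mathcal S_n(\theta,\bbE\xi_n)=\partial_\theta Q^*_{n,\rho}(\theta)$. The population first-order condition gives $S^*_n(\theta^*_{n,\rho})=0$, which requires $\theta^*_{n,\rho}$ to be interior; I would state this as implicit in the assumptions. A mean value expansion then yields
\begin{align}
0=S_n(\hat\theta)=S_n(\theta^*_{n,\rho})+\partial_{\theta^\top}S_n(\bar\theta)(\hat\theta-\theta^*_{n,\rho}).
\end{align}
Define
\begin{align}
\mathcal J_{n,\rho}:=\partial_{\xi^\top}\mathcal S_n(\theta^*_{n,\rho},\bbE\xi_n).
\end{align}
A delta-method step gives
\begin{align}
\sqrt n\, S_n(\theta^*_{n,\rho})=\mathcal J_{n,\rho}\sqrt n(\xi_n-\bbE\xi_n)+o_P(1).
\end{align}
This uses that $\sqrt n(\xi_n-\bbE\xi_n)=O_P(1)$ and that the second derivatives of $\mathcal S_n$ are bounded near the expectation. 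The latter follows because $\Psi$ is uniformly bounded with bounded derivatives: $\lambda_{\max}(\Psi)\le\bar c_\Omega$, $\Theta$ is compact, and $\|\Xi_{n,m,\rho}\|=O_P(1)$ since $m=O(n)$ by Assumption \ref{as:ubound}. The linearity of $\bar q_n(\theta)$ in $\theta$ makes the $\theta$-dependence harmless.

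\textbf{Step 3 (Hessian).} I would show $\sup_{\theta\in\mathcal N}\|\partial_{\theta^\top}S_n(\theta)-H^*_{n,\rho}(\theta)\|=o_P(1)$ on a neighborhood $\mathcal N$ of $\theta^*_{n,\rho}$. This uses the LLN parts of Lemma \ref{lem:verify}, namely Assumption \ref{as:LLN}-type convergence of $\Pi_n$ and $\Xi_{n,m,\rho}$, together with equicontinuity of $H^*_{n,\rho}$. By consistency and the existence of $H_\rho^*$, it follows that $\partial_{\theta^\top}S_n(\bar\theta)\to H^*_\rho$, which is nonsingular by hypothesis.

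\textbf{Step 4 (CLT) and conclusion.} Lemma \ref{lem:SAR_CLT} gives $\sqrt n(\xi_n-\bbE\xi_n)\overset{d}{\to}N(0,\Sigma^{\mathrm{SAR}}_\rho)$ via the linear--quadratic form representation of Assumption \ref{as:variance} and the Cramér--Wold device. By Slutsky,
\begin{align}
\sqrt n(\hat\theta-\theta^*_{n,\rho})=-(H^*_\rho)^{-1}\mathcal J_\rho\sqrt n(\xi_n-\bbE\xi_n)+o_P(1),
\end{align}
which yields the stated sandwich variance.

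\textbf{Main obstacle.} The delicate step is Step 2. It requires controlling the remainder from the nonlinear dependence of $\Psi_{n,\rho}$ on $\Xi_{n,m,\rho}$, whose dimension-scaled entries are quadratic forms in $\bm\varepsilon_n$, and showing that this remainder is $o_P(n^{-1/2})$. Here $\bbE\xi_n$ moves with $n$, so no fixed limit point can be used. I would handle this with the resolvent identity
\begin{align}
A^{-1}-B^{-1}=A^{-1}(B-A)B^{-1},
\end{align}
applied twice, combined with the bound $\|\Xi_{n,m,\rho}-\bbE\Xi_{n,m,\rho}\|=O_P(n^{-1/2})$. That bound follows from variances of quadratic forms under Assumptions \ref{as:network} and \ref{as:ubound}, which make each column of the relevant matrices have bounded support and bounded row and column sums.
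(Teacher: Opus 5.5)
Your proposal is correct and follows essentially the same route as the paper. Both arguments run a mean-value expansion of the first-order condition around $\theta_{n,\rho}^*$, show convergence of the sample Hessian to $H_{n,\rho}^*(\theta_{n,\rho}^*)$, linearize $S_{n,\rho}(\theta_{n,\rho}^*)-S_{n,\rho}^*(\theta_{n,\rho}^*)$ in $(\bar q_n(\theta_0)-\varphi_n(\theta_0),\,\pi_{n,1}-\bbE[\pi_{n,1}],\,\operatorname{vec}(\Xi_{n,m,\rho}-\bbE[\Xi_{n,m,\rho}]))$ up to an $O_P(n^{-1})$ remainder, and finish with Lemma \ref{lem:SAR_CLT} and Slutsky; your Jacobian $\mathcal J_{n,\rho}=\partial_{\xi^\top}\mathcal S_n(\theta_{n,\rho}^*,\bbE\xi_n)$ is exactly the matrix the paper assembles by hand from the product expansion and the resolvent identity for $\Psi_{n,\rho}-\Psi_{n,\rho}^*$, so your delta-method step is a compact version of the same computation.
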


As shown in Theorem \ref{thm:SAR}, the NA-GMM estimator for the SAR model is asymptotically normal around the pseudo-true parameter under fairly general network error settings.
However, statistical inference based on this result appears challenging, not only because the asymptotic variance takes an extremely complicated form, but also, more fundamentally, because it depends on the true parameter $\theta_0$ and the true interaction matrix $\bm G_n$, which are fully unknown in our context.
One might consider alternative simulation-based inference procedures; however, these generally require prior knowledge of the true (or, at least, approximately true) interaction structure (e.g., \citealp{kojevnikov2021bootstrap, conley2023bootstrap}).
Moreover, there is some debate about the empirical usefulness of statistical inference for pseudo-true parameters (e.g., \citealp{hansen2021inference, andrews2026true}), despite the desirable bias reduction property of NA-GMM estimators discussed in Proposition \ref{prop:op-norm}.

Considering these issues, we suggest using the NA-GMM method as a diagnostic tool for assessing the robustness of the naive GMM estimator to network uncertainty.
In the next subsection, we provide several diagnostic analyses that may be useful in empirical applications.

\subsection{Diagnostics for network uncertainty}\label{subsec:diag}

\subsubsection{Sensitivity of $\hat\theta_{n,\rho}$ to $\rho$}

In NA-GMM, $\rho$ should be interpreted more as a sensitivity parameter to network errors than as a regularization parameter governing the prediction performance.
Thus, the NA-GMM method can be used most informatively when we examine how the estimate $\hat\theta_{n,\rho}$ changes along the path of $\rho$.
If the estimate remains stable even for relatively small values of $\rho$, this suggests that the naive GMM estimate is not very sensitive to the type of network uncertainty captured by the network adjustment procedure.
By contrast, if the estimate changes substantially as $\rho$ decreases, this indicates that the empirical conclusion may be sensitive to potential network errors.
Thus, such a path serves as a direct diagnostic for the robustness of the naive GMM estimator to network uncertainty.

\subsubsection{Moment fit improvement}

As a diagnostic statistic for quantifying the improvement in moment fit due to the network adjustment, we consider
\begin{align}
    r_n(\rho) \coloneqq 1 - \frac{\left\| \bar q_n(\hat \theta_{n, \rho}) + \bar u_n(\vec{\bm d}_m^\circ(\hat \theta_{n, \rho}); \hat \theta_{n, \rho})\right\|_{\Omega_n}}{\left\| \bar q_n(\hat \theta_n^\text{naive}) \right\|_{\Omega_n}},
\end{align}
where $\vec{\bm d}_m^\circ(\theta)$ is the optimal network correction vector, whose definition is given in \eqref{eq:dcirc} in Appendix.
This statistic measures the relative improvement in moment fit achieved by the NA-GMM adjustment.
\begin{comment}
By direct calculation, we can find that the numerator of the second term can be computed as
\begin{align}
    \left\| \bar q_n(\hat \theta_{n, \rho}) + \bar u_n(\vec{\bm d}_m^\circ(\hat \theta_{n, \rho}); \hat \theta_{n, \rho})\right\|_{\Omega_n} 
    %& = \left\| \bar q_n(\hat \theta_{n, \rho}) - \bm H_m(\hat \theta_{n, \rho}) \bm B_m(\hat \theta_{n, \rho})^{-1} \bm H_m(\hat \theta_{n, \rho})^\top \Omega_n \bar q_n(\hat \theta_{n, \rho}) \right\|_{\Omega_n} \\
    %& = \left\| \Omega_n^{-1} \left\{ \Omega_n - \Omega_n \bm H_m(\hat \theta_{n, \rho})  \bm B_m(\hat \theta_{n, \rho})^{-1} \bm H_m(\hat \theta_{n, \rho})^\top \Omega_n \right\} \bar q_n(\hat \theta_{n, \rho}) \right\|_{\Omega_n} \\
    %& = \left\| \Omega_n^{-1} \Psi_{n,\rho}(\hat \theta_{n, \rho}) \bar q_n(\hat \theta_{n, \rho}) \right\|_{\Omega_n} \\
    & = \left( \bar q_n(\hat \theta_{n, \rho})^\top \Psi_{n,\rho}(\hat \theta_{n, \rho}) \Omega_n^{-1} \Psi_{n,\rho}(\hat \theta_{n, \rho}) \bar q_n(\hat \theta_{n, \rho}) \right)^{1/2} 
\end{align}
\end{comment}
Note that $0 \le r_n(\rho) \le 1$ holds as long as $\left\| \bar q_n(\hat \theta_n^\text{naive}) \right\|_{\Omega_n} > 0$.
Indeed, we have
\begin{align}
    \left\| \bar q_n(\hat \theta_{n, \rho}) + \bar u_n(\vec{\bm d}_m^\circ(\hat \theta_{n, \rho}); \hat \theta_{n, \rho})\right\|_{\Omega_n}^2
    & \le \left\| \bar q_n(\hat \theta_{n, \rho}) + \bar u_n(\vec{\bm d}_m^\circ(\hat \theta_{n, \rho}); \hat \theta_{n, \rho})\right\|_{\Omega_n}^2 + \frac{\rho}{m} \left\|\vec{\bm d}_m^\circ(\hat \theta_{n, \rho})\right\|^2 \\
    & \le \left\| \bar q_n(\hat \theta_n^\text{naive}) + \bar u_n(\vec{\bm d}_m^\circ(\hat \theta_n^\text{naive}); \hat \theta_n^\text{naive})\right\|_{\Omega_n}^2 + \frac{\rho}{m} \left\|\vec{\bm d}_m^\circ(\hat \theta_n^\text{naive})\right\|^2 \\
    & \le \left\| \bar q_n(\hat \theta_n^\text{naive}) \right\|_{\Omega_n}^2 .
\end{align}
Hence, a value of $r_n(\rho)$ close to one indicates a large improvement in moment fit, whereas a value close to zero indicates that the network adjustment yields little improvement for this $\rho$.

\subsubsection{$J$-type statistic}

As another diagnostic statistic for measuring the extent of moment fit improvement, we consider a $J$-type statistic.
In SAR models, define
\begin{align}
    J_n(\rho) \coloneqq \frac{n}{\hat\sigma^2_{n,\rho}}
    \left\| \bar q_n(\hat \theta_{n,\rho}) + \bar u_n(\vec{\bm d}_m^\circ(\hat \theta_{n,\rho}); \hat \theta_{n,\rho}) \right\|_{\Omega_n}^2,
\end{align}
where $\hat\sigma^2_{n,\rho} \coloneqq n^{-1} \left\| \bm Y_n - \bm R_n^\text{obs}\hat \theta_{n,\rho} \right\|^2$.
When $\rho=\infty$, the adjustment term disappears and $\hat \theta_{n,\infty}$ reduces to the naive GMM estimator.
In particular, if $\Omega_n = (n^{-1} \bm Z_n^\top \bm Z_n)^{-1}$, we have
\begin{align}
    J_n(\infty) = \frac{1}{\hat\sigma^2_{n,\infty}}(\bm Y_n-\bm R_n^\text{obs}\hat \theta_n^\text{naive})^\top \bm Z_n(\bm Z_n^\top\bm Z_n)^{-1}\bm Z_n^\top (\bm Y_n-\bm R_n^\text{obs}\hat \theta_n^\text{naive}),
\end{align}
which coincides with the classical Sargan $J$-statistic.
Similar to the preceding diagnostic statistic, $J_n(\rho)$ summarizes the improvement in the fit of the moment conditions.

\section{Monte Carlo Simulations}\label{sec:MC}

In this section, we examine the finite sample performance of NA-GMM in linear SAR models.
The purpose of the simulation is to evaluate the bias reduction property of the fixed-weight NA-GMM and the CU-type original NA-GMM estimators.

The simulation design is as follows.
For each sample size $n \in \{500,1000\}$, we first randomly allocate $n$ units on a square lattice of size $\sqrt{1.2n} \times \sqrt{1.2n}$.
The true adjacency matrix $\bm A_n$ is then constructed based on these locations.
For each unit $i$, we first define its peer candidates as units located within distance three from $i$, and then randomly select a number of candidates from $\{0,1,\ldots,5\}$ to form links with $i$.
The true interaction matrix $\bm G_n$ is obtained by row-normalizing this adjacency matrix.
For each $i$, the uncertainty set $\calU(i)$ consists of all units excluding $i$ located within distance five from $i$.
The outcome is then generated from the linear SAR model
\begin{align}
        Y_i = \alpha_0 \sum_{j \neq i} G_{ij} Y_j + \beta_{00} + X_{1i} \beta_{10} + X_{2i} \beta_{20} + \varepsilon_i,
\end{align}
where $\alpha_0 = 0.6$, $(\beta_{00}, \beta_{10}, \beta_{20}) = (1.2,-1,1.4)$, $X_{1i} \sim N(0,1)$, $X_{2i} \sim |N(0,1)|$, and $\varepsilon_i \sim N(0,1)$ independently across $i$.

After the data are generated, we create the imprecisely observed network $\bm A_n^{\mathrm{obs}}$ by deleting true links and adding false links to $\bm A_n$.
Specifically, each true link in the uncertainty set is deleted with probability $p_{\mathrm{drop}} \in \{0,0.2,0.3\}$, and each absent link in the uncertainty set is added as a false link with probability $p_{\mathrm{add}} \in \{0,0.05,0.1\}$.
The observed interaction matrix $\bm G_n^{\mathrm{obs}}$ is obtained by row-normalizing $\bm A_n^{\mathrm{obs}}$.
For each design, we compare the performance of the naive GMM estimator based on $\bm G_n^{\mathrm{obs}}$, the fixed-weight NA-GMM estimator, and the original NA-GMM estimator.
The IVs for the spatial autoregressive term are constructed from the first and second spatial lags of $(X_1, X_2)$ based on $\bm G_n^{\mathrm{obs}}$.
For NA-GMM and fixed-weight NA-GMM, we consider two GMM weight matrices: the identity weight $\Omega_n = I_{d_\mu}$ and the 2SLS weight $\Omega_n = (n^{-1}\bm Z_n^\top \bm Z_n)^{-1}$.
Note that the naive GMM estimator with the 2SLS weight corresponds to the usual 2SLS estimator.
For the penalty parameter in the NA-GMM estimators, we consider $\rho \in \{2^{-6},2^{-5},\ldots,2^{14}\}$, where we set $\kappa = \rho/\alpha_0^2$ for fixed-weight NA-GMM.
Each simulation setup is repeated 500 times, and the performance of each estimator is evaluated by the average bias and root mean squared error (RMSE) for estimating $\alpha_0$ and $\beta_{10}$.

To save space, the summary tables for the simulation results with $\rho = 2^{-3}$, $2^8$, and $2^{12}$ are provided in Appendix \ref{app:table}.
The main findings from these tables are as follows.
First, for the estimation of the coefficient $\beta_{10}$, the three GMM estimators perform very similarly, and all of them produce accurate estimates of $\beta_{10}$ in all scenarios.
This is not surprising because network errors are generated independently of the covariates.

We now focus on the estimation of $\alpha_0$.
When there are no network errors, not only the naive GMM estimator but also the two NA-GMM estimators estimate $\alpha_0$ very accurately.
This is an expected result because, when there are no errors in the observed interaction matrix, the population optimal network correction is $\vec{\bm d}_m = \bm 0$, which reduces the NA-GMM criterion to the standard naive GMM criterion.
In the presence of network errors, for both types of GMM weights, the original NA-GMM estimator with small $\rho$ generally outperforms the other estimators in terms of bias reduction.
This corroborates the desirable bias reduction property of the NA-GMM method.

Interestingly, although Proposition \ref{prop:op-norm} shows that the fixed-weight NA-GMM estimator also has a bias reduction property, the magnitude of bias reduction is much smaller than that of the original NA-GMM estimator.
In particular, almost no bias reduction is observed when we set $\Omega_n = (n^{-1}\bm Z_n^\top \bm Z_n)^{-1}$.
This finding may be explained by the structure of the GMM weight matrix.
As shown in \eqref{eq:fw_weight}, when $\Omega_n = (n^{-1}\bm Z_n^\top \bm Z_n)^{-1}$, the fixed-weight NA-GMM estimator uses the weight matrix $\Psi_{n,\kappa} = (n^{-1}\bm Z_n^\top \bm Z_n + (m/\kappa)\Phi_n)^{-1}$.
Noting that $\Phi_n = n^{-2} \sum_{i \in [n]} \sum_{j \in \calU(i)} Z_i Z_i^\top Y_j^2$, if $n^{-1}\sum_{j \in \calU(i)} Y_j^2$ does not vary much across $i$, then the network adjustment component $(m/\kappa)\Phi_n$ becomes approximately proportional to $n^{-1}\bm Z_n^\top \bm Z_n$ for any $\kappa$.
In such cases, fixed-weight NA-GMM would behave very similarly to 2SLS, as observed here.
These findings indicate that allowing the GMM weight matrix to depend on the parameter plays an important role in reducing the bias.
Another interesting finding is that, even for the original NA-GMM estimator, we do not observe any bias reduction when network errors arise only from dropping existing links.
This indicates that, while NA-GMM can effectively reduce some types of bias, it may be less effective for other types.
These points deserve further investigation in future work.

Figures \ref{fig:alpha_I} and \ref{fig:alpha_S} summarize the simulation results for estimating $\alpha_0$.
The gray area in each panel corresponds to the simulated pointwise 95 percent interval at each value of $\rho$.
The results discussed above can also be visually confirmed in these figures.

\begin{figure}[ht]
\begin{center}
\includegraphics[width = 12cm]{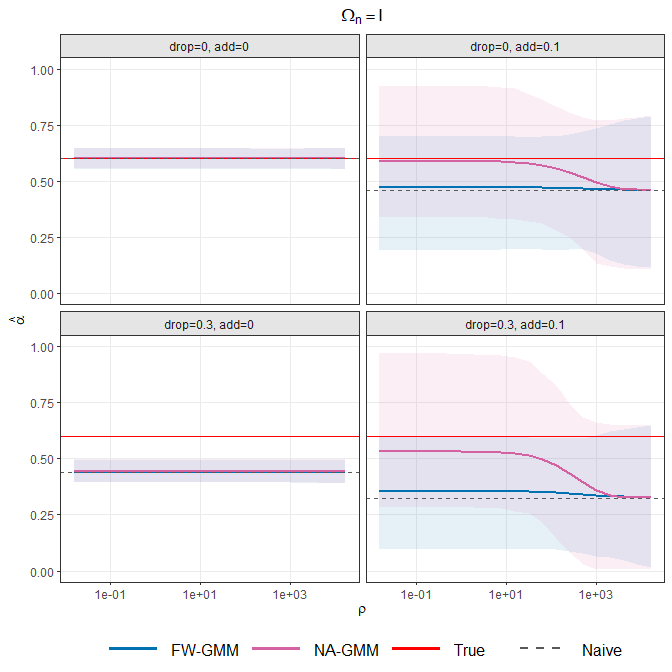}
\caption{Simulation results for estimating $\alpha_0$: $\Omega_n = I_{d_\mu}$}
\label{fig:alpha_I}
\end{center}
\end{figure}

\begin{figure}[ht]
\begin{center}
\includegraphics[width = 12cm]{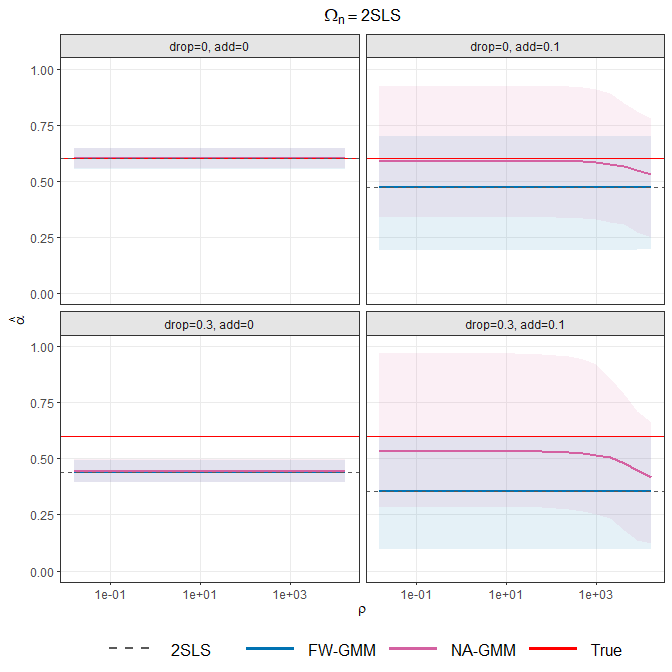}
\caption{Simulation results for estimating $\alpha_0$: $\Omega_n =$ 2SLS weight}
\label{fig:alpha_S}
\end{center}
\end{figure}

\section{An Empirical Application: Spatial Diffusion of COVID-19 Infections}\label{sec:empir}

In this section, we illustrate the NA-GMM method by applying it to U.S. county-level COVID-19 infection rate data in 2022.
County-level COVID-19 case data for 2022 were obtained from the data archive of the Centers for Disease Control and Prevention (CDC).\footnote{\url{https://data.cdc.gov/}}
For county characteristics, we use the median household income, the share of bachelor's degree (or higher) holders, and the unemployment rate.
These variables, together with the county population data, were obtained from the U.S. Department of Agriculture (USDA) Economic Research Service county-level data.\footnote{\url{https://www.ers.usda.gov/data-products/county-level-data-sets}}

After excluding non-contiguous regions, isolated counties, and one county with an extreme outlying infection rate, the final sample consists of 3,099 contiguous U.S. counties.
Using the annual infection rate, defined as total cases in 2022 divided by the population, as the outcome variable, we consider the following SAR model:
\begin{align}
    \texttt{infection rate}_i 
    & = \alpha_0 \sum_{j \neq i} G_{ij} \texttt{infection\ rate}_j \\
    & \quad + \left(1, \ln \texttt{median income}_i, \texttt{bachelor}_i, \texttt{unemployment}_i\right)^\top \beta_0 + \varepsilon_i.
\end{align}
Here, the observed counterpart of $G_{ij}$, $G^\text{obs}_{ij}$, is the row-normalized version of $A^\text{obs}_{ij}$, where $A^\text{obs}_{ij}$ is equal to one if counties $i$ and $j$ are adjacent and zero otherwise.
Figure \ref{fig:map} shows a map of $\texttt{infection\ rate}_i$ across counties.
This pattern indicates that COVID-19 infection rates in adjacent counties are strongly related, suggesting that $\alpha_0$ is relatively large and positive in the SAR model.

\begin{figure}[ht]
\begin{center}
\includegraphics[width = 16cm]{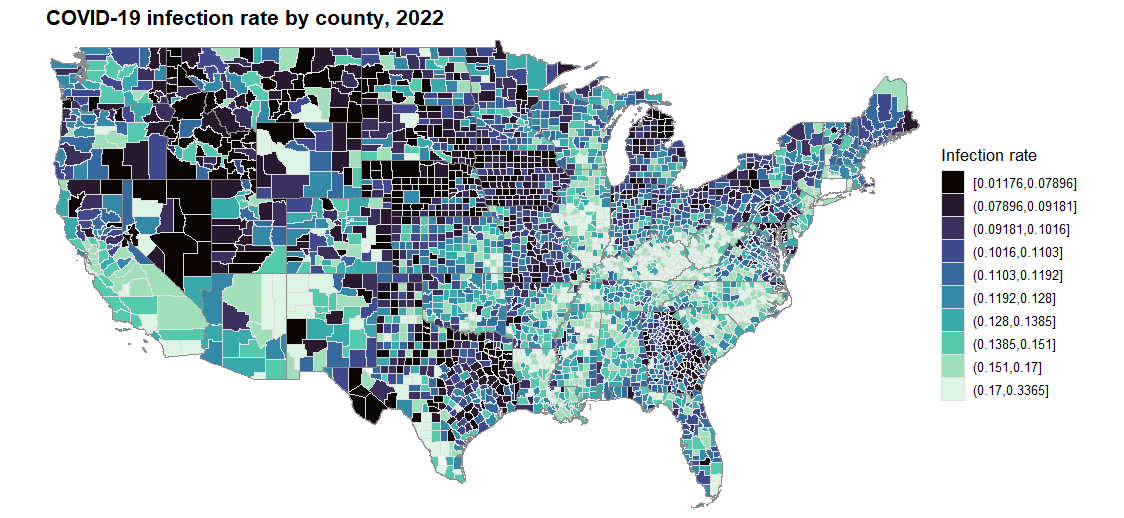}
\caption{County-level COVID-19 infection rates in the U.S.}
\label{fig:map}
\end{center}
\end{figure}

We first estimated the model by 2SLS, using the first and second spatial lags of the regressors as IVs.
The estimate of $\alpha_0$ was about 0.8 and significantly positive.
However, the computed Sargan $J$-statistic was 12.17, suggesting possible misspecification of the moment conditions.
We next estimated the same model using the NA-GMM method.
Figure \ref{fig:sb} reports the path of the NA-GMM estimate $\hat \alpha_{n,\rho}$ over different values of $\rho$.
As shown in the figure, as $\rho$ tends to infinity, the estimate converges to the 2SLS estimate, which is consistent with our theory.
Moreover, even for very small values of $\rho$, the estimate deviates only slightly from the 2SLS estimate.

Interestingly, the two diagnostic statistics for the improvement in moment fit introduced in Subsection \ref{subsec:diag} show that the moment fit is substantially improved for small values of $\rho$, as reported in Figures \ref{fig:diag1} and \ref{fig:diag2}.
Taken together, these findings suggest that, although the moment conditions appear to be misspecified to some extent, the moment correction induced by the network adjustment does not lead to a substantial change in the estimate of the spatial parameter $\alpha_0$.
Thus, the estimate of $\alpha_0$ appears relatively robust to the type of network uncertainty in this dataset.

\begin{figure}[ht]
\begin{center}
\includegraphics[width = 10cm]{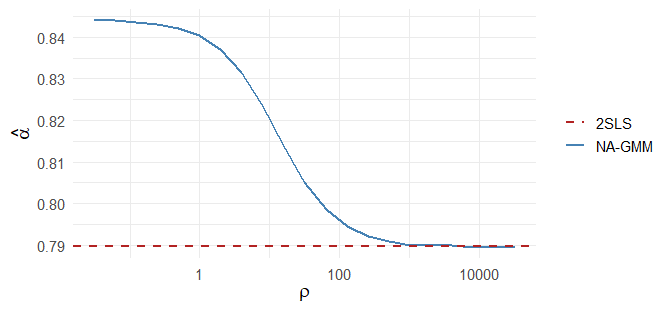}
\caption{Estimated $\alpha^*_{n,\rho}$}
\label{fig:sb}
\end{center}
\end{figure}

\begin{figure}[ht]
\centering
\begin{subfigure}[b]{0.48\textwidth}
\centering
\includegraphics[width = 9cm]{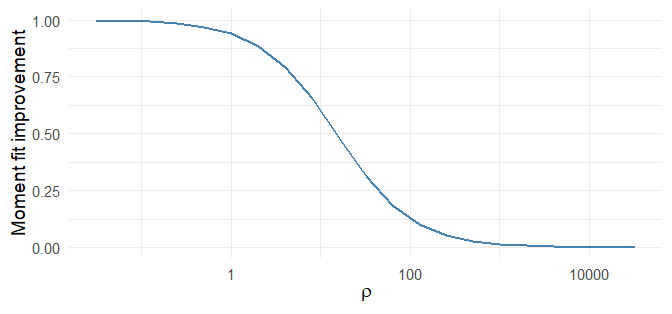}
\caption{Moment fit improvement}
\label{fig:diag1}
\end{subfigure}
\begin{subfigure}[b]{0.48\textwidth}
\centering
\includegraphics[width = 9cm]{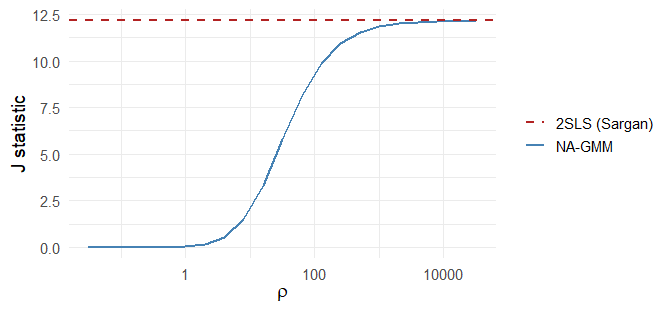}
\caption{$J$-type statistic}
\label{fig:diag2}
\end{subfigure}
\caption{Two diagnostic statistics for network adjustment}
\label{fig:diag}
\end{figure}

\section{Conclusion}\label{sec:conclude}

In this paper, we proposed a network-adjusted generalized method of moments (NA-GMM) framework for IV-based linear social interaction models under network uncertainty.
The proposed method modifies the standard GMM approach by allowing the observed interaction matrix to be adjusted while penalizing the size of the adjustment.
For linear SAR models, we showed that the NA-GMM criterion downweights moment directions that are sensitive to network errors and established a bias reduction property for a fixed-weight version of the estimator.
We also established consistency for the pseudo-true parameter and asymptotic normality under general moment misspecification.
Finally, we proposed diagnostic analyses for evaluating the robustness of naive GMM estimates to network errors.
An empirical application to U.S. county-level COVID-19 infection data demonstrated the usefulness of the proposed method.

The present analysis leaves several open questions and potential extensions.
First, it would be important to further investigate the bias reduction property of NA-GMM; this paper established a related result only for a fixed-weight version in linear SAR models.
Second, developing feasible statistical inference procedures for NA-GMM estimators would be important for improving the applicability of the proposed framework.
The present framework could also be extended to more general models, such as models in which network errors enter nonlinearly and models with endogenous network errors.
Finally, while we adopted a quadratic penalty for the network adjustment for convenience, it would be worth studying more general penalty structures.
These topics are left for future research.

\clearpage
\appendix
\begin{center}
    \Large \textbf{Appendix}
\end{center}

\section{Technical Appendix}

\subsection{Proof of Lemma \ref{lem:NA-GMM-obj}}\label{app:NA-GMM-obj}

Recall that the NA-GMM criterion function is
\begin{align}
    Q_{n,\rho}(\theta) = \inf_{\vec{\bm d}_m}\left\{\left\| \bar q_n(\theta) + \bm H_m(\theta)\vec{\bm d}_m \right\|_{\Omega_n}^2 + \frac{\rho}{m} \| \vec{\bm d}_m \|^2 \right\}.
\end{align}
The first-order condition for the minimization with respect to $\vec{\bm d}_m$ is
\begin{align}
    \bm H_m(\theta)^\top \Omega_n \left( \bar q_n(\theta) + \bm H_m(\theta)\vec{\bm d}_m \right) + \frac{\rho}{m} \vec{\bm d}_m = \bm 0 .
\end{align}
Hence,
\begin{align}
    \left( \bm H_m(\theta)^\top \Omega_n \bm H_m(\theta) + \frac{\rho}{m} I_m \right) \vec{\bm d}_m 
    = - \bm H_m(\theta)^\top \Omega_n \bar q_n(\theta).
\end{align}
Therefore, the minimizer is characterized as
\begin{align}\label{eq:dcirc}
    \vec{\bm d}_m^\circ(\theta)
    = - \bm B_m(\theta)^{-1} \bm H_m(\theta)^\top \Omega_n \bar q_n(\theta)
\end{align}
for $\rho > 0$, where $\bm B_m(\theta) \coloneqq \bm H_m(\theta)^\top \Omega_n \bm H_m(\theta) + (\rho/m) I_m$.
Expanding the criterion function, we have
\begin{align}
    Q_{n,\rho}(\theta)
    &= \bar q_n(\theta)^\top \Omega_n \bar q_n(\theta) + 2 \bar q_n(\theta)^\top \Omega_n \bm H_m(\theta)\vec{\bm d}_m^\circ(\theta) \\
    &\quad + \vec{\bm d}_m^\circ(\theta)^\top \bm H_m(\theta)^\top \Omega_n \bm H_m(\theta) \vec{\bm d}_m^\circ(\theta) + \frac{\rho}{m} \vec{\bm d}_m^\circ(\theta)^\top \vec{\bm d}_m^\circ(\theta) \\
    &= \bar q_n(\theta)^\top \Omega_n \bar q_n(\theta) + 2 \bar q_n(\theta)^\top \Omega_n \bm H_m(\theta)\vec{\bm d}_m^\circ(\theta) + \vec{\bm d}_m^\circ(\theta)^\top \bm B_m(\theta) \vec{\bm d}_m^\circ(\theta).
\end{align}
Substituting $\vec{\bm d}_m^\circ(\theta)$ into the second term gives
\begin{align}
    2\bar q_n(\theta)^\top \Omega_n \bm H_m(\theta)\vec{\bm d}_m^\circ(\theta) = - 2\bar q_n(\theta)^\top \Omega_n \bm H_m(\theta)\bm B_m(\theta)^{-1}\bm H_m(\theta)^\top \Omega_n \bar q_n(\theta).
\end{align}
For the third term, we have
\begin{align}
    \vec{\bm d}_m^\circ(\theta)^\top \bm B_m(\theta) \vec{\bm d}_m^\circ(\theta)
    &= \bar q_n(\theta)^\top \Omega_n \bm H_m(\theta)\bm B_m(\theta)^{-1}\bm H_m(\theta)^\top \Omega_n \bar q_n(\theta).
\end{align}
Therefore,
\begin{align}
    Q_{n,\rho}(\theta)
    &= \bar q_n(\theta)^\top \Omega_n \bar q_n(\theta) - 2 \bar q_n(\theta)^\top \Omega_n \bm H_m(\theta)\bm B_m(\theta)^{-1}\bm H_m(\theta)^\top \Omega_n \bar q_n(\theta) \\
    &\quad + \bar q_n(\theta)^\top \Omega_n \bm H_m(\theta)\bm B_m(\theta)^{-1}\bm H_m(\theta)^\top \Omega_n \bar q_n(\theta) \\
    &= \bar q_n(\theta)^\top \Omega_n \bar q_n(\theta) - \bar q_n(\theta)^\top \Omega_n \bm H_m(\theta)\bm B_m(\theta)^{-1}\bm H_m(\theta)^\top \Omega_n \bar q_n(\theta) \\
    &= \bar q_n(\theta)^\top \left[ \Omega_n - \Omega_n \bm H_m(\theta)\bm B_m(\theta)^{-1}\bm H_m(\theta)^\top \Omega_n \right] \bar q_n(\theta).
\end{align}
Recalling the definition of $\bm B_m(\theta)$, we obtain
\begin{align}
Q_{n,\rho}(\theta)
&= \bar q_n(\theta)^\top \left[ \Omega_n - \Omega_n \bm H_m(\theta) \left( \bm H_m(\theta)^\top \Omega_n \bm H_m(\theta) + \frac{\rho}{m} I_m \right)^{-1} \bm H_m(\theta)^\top \Omega_n \right] \bar q_n(\theta).
\end{align}
Finally, noting the identity $(A^{-1} + U C^{-1} U^\top)^{-1} = A - A U (U^\top A U + C)^{-1} U^\top A$, we obtain
\begin{align}
    Q_{n,\rho}(\theta) = \bar q_n(\theta)^\top \left( \Omega_n^{-1} + \frac{m}{\rho}\bm H_m(\theta) \bm H_m(\theta)^\top \right)^{-1} \bar q_n(\theta),
\end{align}
as desired.
\qed

\subsection{Proof of Proposition \ref{prop:op-norm}}

Let $t = \kappa^{-1}$ and define
\begin{align}
    \Psi_n(t)
    \coloneqq
    \left( I_{d_\mu} + t m \Phi_n \right)^{-1},
\end{align}
and
\begin{align}
    K_n(t)
    \coloneqq
    \left( \Pi_n^\top \Psi_n(t) \Pi_n \right)^{-1} \Pi_n^\top \Psi_n(t) \Phi_n \Psi_n(t)\Pi_n \left( \Pi_n^\top \Psi_n(t) \Pi_n \right)^{-1}.
\end{align}
Recalling $\Phi_n = \Gamma_n \Gamma_n^\top$ and $L_n(\kappa) = \alpha_0 \left( \Pi_n^\top \Psi_{n, \kappa} \Pi_n \right)^{-1} \Pi_n^\top \Psi_{n, \kappa} \Gamma_n$, we have
\begin{align}
    \|L_n(\kappa)\|_{\mathrm{op}}^2
    =
    \alpha_0^2
    \sup_{\|\bm c\|=1}
    \bm c^\top K_n(\kappa^{-1}) \bm c.
\end{align}
Thus, the result follows if $\bm c^\top K_n(t)\bm c$ is non-increasing in $t$.

Let
\begin{align}
    A_n(t)
    \coloneqq
    \Pi_n^\top \Psi_n(t) \Pi_n,
    \quad
    B_n(t)
    \coloneqq
    \Pi_n^\top \Psi_n(t) \Phi_n \Psi_n(t) \Pi_n,
    \quad
    C_n(t)
    \coloneqq
    \Pi_n^\top \Psi_n(t) \Phi_n \Psi_n(t) \Phi_n \Psi_n(t) \Pi_n,
\end{align}
so that
\begin{align}
    K_n(t) = A_n(t)^{-1} B_n(t) A_n(t)^{-1}.
\end{align}
The inverse matrix differentiation formula gives
\begin{align}
    \frac{\partial}{\partial t} \Psi_n(t) = - m \Psi_n(t) \Phi_n \Psi_n(t).
\end{align}
Therefore, $\partial A_n(t) / \partial t = - m B_n(t)$ and $\partial B_n(t) /  \partial t = - 2 m C_n(t)$.
Again by the inverse matrix differentiation formula,
\begin{align}
    \frac{\partial}{\partial t} A_n(t)^{-1} = m A_n(t)^{-1} B_n(t) A_n(t)^{-1}.
\end{align}
Hence,
\begin{align}
    \frac{\partial}{\partial t} K_n(t)
    & = \left( \frac{\partial}{\partial t} A_n(t)^{-1} \right) B_n(t) A_n(t)^{-1}
    + A_n(t)^{-1} \left( \frac{\partial}{\partial t} B_n(t) \right) A_n(t)^{-1}
    + A_n(t)^{-1} B_n(t) \left( \frac{\partial}{\partial t} A_n(t)^{-1} \right) \\
    & = 2 m A_n(t)^{-1} \left\{ B_n(t) A_n(t)^{-1} B_n(t) - C_n(t) \right\} A_n(t)^{-1}.
\end{align}

Now, define
\begin{align}
    X_n(t)
    \coloneqq \Psi_n(t)^{1/2} \Pi_n,
    \qquad
    Y_n(t)
    \coloneqq \Psi_n(t)^{1/2} \Phi_n \Psi_n(t)\Pi_n.
\end{align}
Then
\begin{align}
    A_n(t)
    =
    X_n(t)^\top X_n(t),
    \qquad
    B_n(t)
    = Y_n(t)^\top X_n(t) = X_n(t)^\top Y_n(t),
    \qquad
    C_n(t)
    =
    Y_n(t)^\top Y_n(t).
\end{align}
Therefore,
\begin{align}
    B_n(t) A_n(t)^{-1} B_n(t)
    =
    Y_n(t)^\top
    X_n(t)
    \left( X_n(t)^\top X_n(t) \right)^{-1}
    X_n(t)^\top
    Y_n(t).
\end{align}
Letting $P_{X_n(t)} \coloneqq X_n(t) \left( X_n(t)^\top X_n(t) \right)^{-1} X_n(t)^\top$, since $I_{d_\mu} - P_{X_n(t)}$ is positive semidefinite,
\begin{align}
    C_n(t) - B_n(t) A_n(t)^{-1} B_n(t)
    & =
    Y_n(t)^\top
    \left( I_{d_\mu} - P_{X_n(t)} \right)
    Y_n(t)
\end{align}
is positive semidefinite.
This means that, for any $\bm c$,
\begin{align}
    - \bm c^\top
    \left(
    \frac{\partial}{\partial t} K_n(t)
    \right)
    \bm c
    = 2 m
    \bm c^\top A_n(t)^{-1} 
    \left\{
    C_n(t) - B_n(t) A_n(t)^{-1} B_n(t)
    \right\}
     A_n(t)^{-1}\bm c 
    \ge
    0,
\end{align}
which implies the desired result.

\qed

\subsection{Proof of Theorem \ref{thm:consistency}}

\begin{lemma}\label{lem:unifconv}
Suppose that Assumptions \ref{as:IV} -- \ref{as:unifmom} hold.
Then, we have
\begin{align}
    \sup_{\theta \in \Theta} \left| Q_{n,\rho}(\theta) - Q_{n,\rho}^*(\theta) \right| = o_P(1).
\end{align}
\end{lemma}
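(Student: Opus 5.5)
We prove Lemma \ref{lem:unifconv} by writing the difference as a sum of two pieces:
\begin{align}
    Q_{n,\rho}(\theta) - Q_{n,\rho}^*(\theta)
    = \left\{ \|\bar q_n(\theta)\|^2_{\Psi_{n,\rho}(\theta)} - \|\varphi_n(\theta)\|^2_{\Psi_{n,\rho}(\theta)} \right\}
    + \varphi_n(\theta)^\top \left\{ \Psi_{n,\rho}(\theta) - \Psi_{n,\rho}^*(\theta) \right\} \varphi_n(\theta).
\end{align}
The first piece is controlled by uniform convergence of the moments. The second is controlled by uniform convergence of the weight matrices. Both estimates rest on the following bounds, which hold for every $\theta$:
\begin{align}
    0 \preceq \Psi_{n,\rho}(\theta) \preceq \Omega_n, \qquad 0 \preceq \Psi_{n,\rho}^*(\theta) \preceq \Omega_n.
\end{align}
They follow because $\bm H_m \bm H_m^\top$ and its expectation are positive semidefinite, so adding them to $\Omega_n^{-1}$ can only increase it. By Assumption \ref{as:omega}, both operator norms are therefore at most $\bar c_\Omega$, uniformly in $\theta$ and $n$.

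For the first piece, we write $a^\top \Psi a - b^\top \Psi b = (a-b)^\top \Psi (a+b)$. This gives
\begin{align}
    \sup_\theta \left| \|\bar q_n\|^2_{\Psi_{n,\rho}} - \|\varphi_n\|^2_{\Psi_{n,\rho}} \right|
    \le \bar c_\Omega \sup_\theta \|\bar q_n - \varphi_n\| \left( 2\sup_\theta \|\varphi_n\| + \sup_\theta \|\bar q_n - \varphi_n\| \right)
    = o_P(1)
\end{align}
by Assumption \ref{as:unifmom}(i)--(ii).

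For the second piece, we use the identity $\Psi - \Psi^* = \Psi\{(\Psi^*)^{-1} - \Psi^{-1}\}\Psi^*$. This gives
\begin{align}
    \|\Psi_{n,\rho}(\theta) - \Psi_{n,\rho}^*(\theta)\|_{\mathrm{op}}
    \le \bar c_\Omega^2 \, \frac{m}{\rho} \left\| \bm H_m(\theta)\bm H_m(\theta)^\top - \bbE[\bm H_m(\theta)\bm H_m(\theta)^\top] \right\|.
\end{align}
Combined with $\sup_\theta\|\varphi_n\| = O(1)$, it therefore suffices to show
\begin{align}
    \sup_\theta m \left\|\bm H_m \bm H_m^\top - \bbE[\bm H_m \bm H_m^\top]\right\| = o_P(1).
\end{align}

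To get this, we compute the matrix explicitly:
\begin{align}
    \bm H_m(\theta)\bm H_m(\theta)^\top
    = \sum_i \bm H_{m,i}\bm H_{m,i}^\top
    = \frac{1}{n^2}\sum_{i} Z_i Z_i^\top \sum_{j \in \calU(i)} V_j(\bm W_n;\theta)^2.
\end{align}
Exchanging the order of summation, its $(k,l)$ element equals $n^{-1}\sum_j \bar a_{n,j} V_j^2$ with $a_i = Z_{ik} Z_{il}$. This sequence is non-stochastic and uniformly bounded by Assumption \ref{as:IV}. Assumption \ref{as:unifmom}(iii) then says that each centered element is $o_P(m^{-1})$ uniformly in $\theta$. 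Because $d_\mu$ is fixed, the Frobenius norm is also $o_P(m^{-1})$, and multiplying by $m/\rho$ gives $o_P(1)$. Adding the two pieces completes the proof.

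The main obstacle is this last step. The factor $m$, which may diverge, must be exactly offset by the $o_P(m^{-1})$ rate, and this works only because $\bm H_m \bm H_m^\top$ is linear in the $V_j^2$ with the specific weights $\bar a_{n,j}$ that Assumption \ref{as:unifmom}(iii) is built around. Everything else reduces to the uniform bound $\Psi_{n,\rho}, \Psi^*_{n,\rho} \preceq \Omega_n$, which requires no invertibility or eigenvalue control of $\bm H_m \bm H_m^\top$.
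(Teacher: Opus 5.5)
Your proof is correct and follows essentially the same route as the paper. It uses the same explicit computation of $\bm H_m(\theta)\bm H_m(\theta)^\top$ to reduce the weight-matrix error to Assumption \ref{as:unifmom}(iii), the same identity $A^{-1}-B^{-1}=A^{-1}(B-A)B^{-1}$ with uniform operator-norm bounds on $\Psi_{n,\rho}$ and $\Psi^*_{n,\rho}$, and the same split of $Q_{n,\rho}-Q^*_{n,\rho}$ into a moment-error part and a weight-error part. The differences are cosmetic: you merge the paper's first two terms via $(a-b)^\top\Psi(a+b)$, and you state the Loewner bound $\Psi_{n,\rho}(\theta)\preceq\Omega_n$ explicitly where the paper only asserts $O(1)$.
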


\begin{proof}
By definition,
\begin{align}
    \bm H_m(\theta)\bm H_m(\theta)^\top
    & = \frac{1}{n^2}\sum_{i \in [n]} Z_i Z_i^\top [V_i(\bm W_n,\calU;\theta)^\top V_i(\bm W_n,\calU;\theta)] \\
    & = \frac{1}{n^2}\sum_{i \in [n]} \sum_{j\in \calU(i)} Z_i Z_i^\top (V_j(\bm W_n,\theta))^2 \\
    & = \frac{1}{n^2}\sum_{i \in [n]} \sum_{j \in [n]} Z_i Z_i^\top \bm{1}\{j \in \calU(i)\}  (V_j(\bm W_n,\theta))^2.
\end{align}
Hence, for each $(k,\ell)$, 
\begin{align}
    \left(\bm H_m(\theta)\bm H_m(\theta)^\top - \bbE[\bm H_m(\theta)\bm H_m(\theta)^\top]\right)_{k,\ell} = \frac{1}{n}\sum_{j \in [n]} \bar a_{n,j} \left\{ V_j(\bm W_n,\theta)^2 - \bbE[V_j(\bm W_n,\theta)^2] \right\}
\end{align}
with $\bar a_{n,j} = n^{-1} \sum_{i: \: j \in \calU(i)} Z_{ik}Z_{i\ell}$, which is uniformly bounded by Assumption \ref{as:IV}.
Therefore, by Assumption \ref{as:unifmom}(iii),
\begin{align}
    \sup_{\theta\in\Theta}
    \left\|
    \bm H_m(\theta)\bm H_m(\theta)^\top - \bbE[\bm H_m(\theta)\bm H_m(\theta)^\top]
    \right\|
    = o_P(m^{-1}).
\end{align}
Using $A^{-1}-B^{-1}=A^{-1}(B-A)B^{-1}$,
\begin{align}
    & \sup_{\theta\in\Theta}
    \| \Psi_{n,\rho}(\theta)-\Psi_{n,\rho}^*(\theta) \| \\
    & \le
    \frac{m}{\rho} \sup_{\theta\in\Theta}\| \Psi_{n,\rho}(\theta) \|_{\mathrm{op}} \sup_{\theta\in\Theta} \left\|  \bm H_m(\theta)\bm H_m(\theta)^\top - \bbE[\bm H_m(\theta)\bm H_m(\theta)^\top]  \right\| \sup_{\theta\in\Theta} \| \Psi_{n,\rho}^*(\theta) \|_{\mathrm{op}}.
\end{align}
By Assumption \ref{as:omega} and the fact that $\bm H_m(\theta)\bm H_m(\theta)^\top$ and $\bbE[\bm H_m(\theta)\bm H_m(\theta)^\top]$ are positive semidefinite, $\sup_{\theta\in\Theta}\|\Psi_{n,\rho}(\theta)\|_{\mathrm{op}} = O(1)$ and $\sup_{\theta\in\Theta}\|\Psi_{n,\rho}^*(\theta)\|_{\mathrm{op}} = O(1)$.
As a result, we obtain
\begin{align}\label{eq:Psi_unif}
    \sup_{\theta\in\Theta} \| \Psi_{n,\rho}(\theta)-\Psi_{n,\rho}^*(\theta) \| = o_P(1)
\end{align}
for each given $\rho > 0$.

Now, by the triangular inequality,
\begin{align}
    |Q_{n,\rho}(\theta)-Q_{n,\rho}^*(\theta)|
    & \le \left| \bar q_n(\theta)^\top \Psi_{n,\rho}(\theta)\bar q_n(\theta) - \varphi_n(\theta)^\top \Psi_{n,\rho}(\theta) \varphi_n(\theta) \right| \\
    & \quad + \left| \varphi_n(\theta)^\top \left( \Psi_{n,\rho}(\theta)-\Psi_{n,\rho}^*(\theta) \right) \varphi_n(\theta) \right|\\
    & \le \left| \left(\bar q_n(\theta) - \varphi_n(\theta) \right)^\top \Psi_{n,\rho}(\theta)\bar q_n(\theta) \right| + \left| \varphi_n(\theta)^\top \Psi_{n,\rho}(\theta)\left(\bar q_n(\theta) - \varphi_n(\theta) \right) \right| \\
    & \quad + \left| \varphi_n(\theta)^\top \left( \Psi_{n,\rho}(\theta)-\Psi_{n,\rho}^*(\theta) \right) \varphi_n(\theta) \right|.
\end{align}
In addition, by Assumptions \ref{as:unifmom}(i) and \ref{as:unifmom}(ii), we have $\|\bar q_n(\theta)\| \le \|\bar q_n(\theta) - \varphi_n(\theta) \| + \|\varphi_n(\theta) \| = O_P(1)$ uniformly in $\theta \in \Theta$.
Hence,
\begin{align}
    \sup_{\theta\in\Theta} \left| \left(\bar q_n(\theta) - \varphi_n(\theta) \right)^\top \Psi_{n,\rho}(\theta)\bar q_n(\theta) \right|
    & \le \sup_{\theta\in\Theta}\|\bar q_n(\theta) - \varphi_n(\theta) \| \sup_{\theta\in\Theta}\|\Psi_{n,\rho}(\theta)\|_\mathrm{op} \sup_{\theta\in\Theta} \|\bar q_n(\theta)\| \\
    & = o_P(1).
\end{align}
Analogously, the second term is $o_P(1)$ as well.
In addition, the third term is $o_P(1)$ by Assumption \ref{as:unifmom}(i) and \eqref{eq:Psi_unif}.
Therefore, $\sup_{\theta\in\Theta}|Q_{n,\rho}(\theta)-Q_{n,\rho}^*(\theta)| = o_P(1)$, as claimed.
\end{proof}

%%%%%%%%%%%%%%%%%%%%%%%%%%%%%%%%%%%%%%%%%%%%%%

\begin{flushleft}
\textbf{Proof of Theorem \ref{thm:consistency}}
\end{flushleft}

Under the uniform convergence result in Lemma \ref{lem:unifconv} and Assumption \ref{as:ident}, the result follows from the standard M-estimation theory.
\qed

%%%%%%%%%%%%%%%%%%%%%%%%%%%%%%%%%%%%%%%%%%%%%%

\subsection{Proof of Theorem \ref{thm:local}}

\begin{lemma}\label{lem:rootn_bias}
Under Assumptions \ref{as:popmom}, \ref{as:Psi}, and \ref{as:K}, we have 
\begin{align}
    \theta_{n,\rho}^* - \theta_0 = O(n^{-1/2}).
\end{align}
\end{lemma}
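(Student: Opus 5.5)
The plan is to work with the first-order condition of the population criterion $Q_{n,\rho}^*(\theta)=\varphi_n(\theta)^\top\Psi_{n,\rho}^*(\theta)\varphi_n(\theta)$ at its minimizer $\theta_{n,\rho}^*$ and linearize around $\theta_0$. Throughout I take for granted that $\theta_{n,\rho}^*\to\theta_0$, so that $\theta_{n,\rho}^*\in\mathcal N_0$ is interior for large $n$. This convergence follows from Assumption \ref{as:ident} together with the fact that $Q_{n,\rho}^*(\theta_0)=O(n^{-1})$ under Assumption \ref{as:popmom}(i). Indeed, $Q^*_{n,\rho}(\theta^*_{n,\rho})\le Q^*_{n,\rho}(\theta_0)\to 0$, and $Q^*_{n,\rho}\ge 0$. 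If $\theta^*_{n,\rho}$ stayed away from $\theta_0$, continuity and the identification gap would be incompatible with $Q^*_{n,\rho}(\theta_0)\to 0$. This step is somewhat delicate and may need to be imposed implicitly.

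\textbf{Step 1: first-order condition.} For each $j$, the condition reads
\begin{align}
0 = 2\,\partial_{\theta_j}\varphi_n(\theta^*)^\top \Psi^*_{n,\rho}(\theta^*)\varphi_n(\theta^*) + \varphi_n(\theta^*)^\top \partial_{\theta_j}\Psi^*_{n,\rho}(\theta^*)\varphi_n(\theta^*),
\end{align}
where $\theta^*=\theta^*_{n,\rho}$.

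\textbf{Step 2: expansion of $\varphi_n$.} By Assumption \ref{as:popmom}(ii), a Taylor expansion gives
\begin{align}
\varphi_n(\theta^*) = \delta_n/\sqrt n + \dot\varphi_n(\theta_0)(\theta^*-\theta_0) + O(\|\theta^*-\theta_0\|^2).
\end{align}
Write $r_n=\|\theta^*-\theta_0\|$, so that $\|\varphi_n(\theta^*)\| = O(n^{-1/2}+r_n)$. By Assumption \ref{as:Psi}, the second term in the first-order condition is $O((n^{-1/2}+r_n)^2)$. Since $\dot\varphi_n(\theta^*)=\dot\varphi_n(\theta_0)+O(r_n)$ and $\Psi^*_{n,\rho}(\theta^*)=\Psi^*_{n,\rho}(\theta_0)+O(r_n)$, the first term equals
\begin{align}
2\dot\varphi_n(\theta_0)^\top\Psi^*_{n,\rho}(\theta_0)\bigl[\delta_n/\sqrt n + \dot\varphi_n(\theta_0)(\theta^*-\theta_0)\bigr] + O(r_n(n^{-1/2}+r_n)).
\end{align}

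\textbf{Step 3: solving for $\theta^*-\theta_0$.} Let $K_{n}\coloneqq\dot\varphi_n(\theta_0)^\top\Psi^*_{n,\rho}(\theta_0)\dot\varphi_n(\theta_0)$. Then
\begin{align}
K_{n}(\theta^*-\theta_0) = -\dot\varphi_n(\theta_0)^\top\Psi^*_{n,\rho}(\theta_0)\delta_n/\sqrt n + O\bigl((n^{-1/2}+r_n)^2 + r_n n^{-1/2} + r_n^2\bigr).
\end{align}
By Assumption \ref{as:K}, $K_n\to K_\rho$ with $\lambda_{\min}(K_\rho)>0$, so $\lambda_{\min}(K_n)$ is bounded away from zero for large $n$. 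Also, $\delta_n$ is bounded because it converges, and $\dot\varphi_n(\theta_0)$ and $\Psi^*_{n,\rho}(\theta_0)$ are bounded. Hence
\begin{align}
r_n \le c\,n^{-1/2} + c\,(r_n^2 + n^{-1}).
\end{align}
Since $r_n\to0$, the term $c\,r_n^2\le r_n/2$ eventually, which absorbs into the left side and yields $r_n=O(n^{-1/2})$.

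The main obstacle is the preliminary consistency $\theta^*_{n,\rho}\to\theta_0$ at the population level, which is needed for the absorption argument. I would derive it from Assumption \ref{as:ident} combined with the quadratic lower bound
\begin{align}
Q^*_{n,\rho}(\theta)\ge \lambda_{\min}(\Psi^*_{n,\rho})\|\varphi_n(\theta)\|^2 .
\end{align}
The argument then uses local identification via $K_\rho$ near $\theta_0$: it shows $\|\varphi_n(\theta)\|\ge c\|\theta-\theta_0\|-O(n^{-1/2})$ in $\mathcal N_0$. This gives $Q^*_{n,\rho}(\theta)\gtrsim r^2-O(n^{-1})$, which directly gives the rate $r_n=O(n^{-1/2})$ as an alternative route.
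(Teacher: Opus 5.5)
Your proof is correct, but it takes a different route from the paper. The paper does not use the score for this lemma. It evaluates $Q^*_{n,\rho}$ on the sphere $\theta=\theta_0+n^{-1/2}\bm c$, $\|\bm c\|=C$, shows that $n\{Q_{n,\rho}^*(\theta)-Q_{n,\rho}^*(\theta_0)\}=2\delta_n^\top\Psi_{n,\rho}^*(\theta_0)\dot\varphi_n(\theta_0)\bm c+\bm c^\top K_{n,\rho}\bm c+o(1)$, and lets the quadratic term dominate for large $C$.

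That value comparison does not need $\theta^*_{n,\rho}$ to be a critical point. On its own, however, it only places some local minimizer inside the ball of radius $Cn^{-1/2}$. Identifying that point with the global minimizer tacitly requires $\theta^*_{n,\rho}\to\theta_0$ together with control of $Q^*_{n,\rho}$ on the annulus $Cn^{-1/2}\le\|\theta-\theta_0\|\le\epsilon$.

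Your approach linearizes $S^*_{n,\rho}(\theta^*_{n,\rho})=\bm 0$ around $\theta_0$ and absorbs the $O(r_n^2)$ remainder using $r_n\to0$. This covers the whole shrinking region at once and applies directly to the global minimizer. Feeding $r_n=O(n^{-1/2})$ back into your Step 3 also gives $\sqrt n(\theta^*_{n,\rho}-\theta_0)=-K_{n,\rho}^{-1}\dot\varphi_n(\theta_0)^\top\Psi^*_{n,\rho}(\theta_0)\delta_n+O(n^{-1/2})$. That is exactly the bias formula \eqref{eq:bias}, which the paper derives separately in the proof of Theorem \ref{thm:local}.

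The price is that you must make the localization explicit, and you do so correctly. Putting $\theta=\theta_0$ into the identification gap of Assumption \ref{as:ident} gives $Q^*_{n,\rho}(\theta_0)\ge c_\epsilon$ whenever $\|\theta_0-\theta^*_{n,\rho}\|\ge\epsilon$, which contradicts $Q^*_{n,\rho}(\theta_0)=O(n^{-1})$. Assumption \ref{as:ident} is not among the lemma's listed hypotheses, but it holds in Theorem \ref{thm:local}, which is where the lemma is used.

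One correction to your final paragraph: the quadratic lower bound cannot deliver this consistency. The bound $\|\varphi_n(\theta)\|\ge c\|\theta-\theta_0\|-O(n^{-1/2})$ holds only on a small ball around $\theta_0$, where the linear term dominates the second-order remainder. Assumption \ref{as:ident} alone gives consistency; the lower bound is then just an alternative, paper-like way to obtain the rate.
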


\begin{proof}
Fix $\bm c \in \bbR^{d_\theta}$ and set $\theta = \theta_0 + n^{-1/2}\bm c$.
By Taylor expansion and Assumption \ref{as:popmom},
\begin{align}
    \varphi_n(\theta)
    & = \varphi_n(\theta_0) + \dot\varphi_n(\theta_0) (\theta - \theta_0) + o(\|\theta-\theta_0\|) \\
    & = n^{-1/2}\left(\delta_n + \dot\varphi_n(\theta_0)\bm c + o(1)\right).
\end{align}
By Assumption \ref{as:Psi}, $\Psi_{n,\rho}^*(\theta) = \Psi_{n,\rho}^*(\theta_0) + o(1)$.
Hence,
\begin{align}
    n Q_{n,\rho}^*(\theta)
    & = n \varphi_n(\theta)^\top \Psi_{n,\rho}^*(\theta) \varphi_n(\theta) \\
    & = \left[ \delta_n + \dot\varphi_n(\theta_0) \bm c + o(1) \right]^\top \left\{ \Psi_{n,\rho}^*(\theta_0) + o(1) \right\} \left[ \delta_n + \dot\varphi_n(\theta_0) \bm c + o(1) \right] \\
    & = \left[ \delta_n + \dot\varphi_n(\theta_0) \bm c \right]^\top \Psi_{n,\rho}^*(\theta_0) \left[ \delta_n + \dot\varphi_n(\theta_0) \bm c \right] + o(1) \\
    & = \delta_n ^\top \Psi_{n,\rho}^*(\theta_0) \delta_n + 2 \delta_n^\top \Psi_{n,\rho}^*(\theta_0)  \dot\varphi_n(\theta_0) \bm c + \bm c^\top K_{n,\rho} \bm c + o(1),
\end{align}
where $K_{n,\rho} \coloneqq \dot\varphi_n(\theta_0)^\top \Psi_{n,\rho}^*(\theta_0) \dot\varphi_n(\theta_0)$.
Since $Q_{n,\rho}^*(\theta_0) = n^{-1} \delta_n^\top \Psi_{n,\rho}^*(\theta_0) \delta_n$, it follows that
\begin{align}
    n \left\{Q_{n,\rho}^*(\theta) - Q_{n,\rho}^*(\theta_0)\right\} = 2 \delta_n^\top \Psi_{n,\rho}^*(\theta_0) \dot\varphi_n(\theta_0) \bm c + \bm c^\top K_{n,\rho} \bm c + o(1).
\end{align}
The first term is of order $O(\|\bm c\|)$, whereas the second term is $\bm c^\top K_{n,\rho} \bm c \ge \lambda_{\min}(K_{n,\rho})\|\bm c\|^2$. 
Thus, Assumption \ref{as:K} implies that, for sufficiently large $C > 0$, we have
\begin{align}
    \inf_{\|\bm c\| = C} \left\{Q_{n,\rho}^*(\theta_0 + n^{-1/2} \bm c) - Q_{n,\rho}^*(\theta_0)\right\} > 0
\end{align}
for all sufficiently large $n$, which implies that the minimizer $\theta_{n,\rho}^*$ of $Q_{n,\rho}^*(\theta)$ must satisfy $\theta_{n,\rho}^* - \theta_0 = O(n^{-1/2})$.
\end{proof}

%%%%%%%%%%%%%%%%%%%%%%%%%%%%%%%%%%%%%%%%%%%%%%

\bigskip

\begin{flushleft}
\textbf{Proof of Theorem \ref{thm:local}}
\end{flushleft}

Recall $Q_{n,\rho}(\theta) = \bar q_n(\theta)^\top \Psi_{n,\rho}(\theta) \bar q_n(\theta)$, and $Q_{n,\rho}^*(\theta) = \varphi_n(\theta)^\top \Psi_{n,\rho}^*(\theta) \varphi_n(\theta)$, and let $S_{n,\rho}(\theta) \coloneqq \partial_\theta Q_{n,\rho}(\theta)$, and $S_{n,\rho}^*(\theta) \coloneqq \partial_\theta Q_{n,\rho}^*(\theta)$.
The $(j,k)$-th element of $\partial_{\theta^\top} S_{n,\rho}^*(\theta)$ is
\begin{align}
    (\partial_{\theta^\top} S_{n,\rho}^*(\theta))_{jk}
    & = 2 \partial_{\theta_j}\varphi_n(\theta)^\top \Psi_{n,\rho}^*(\theta) \partial_{\theta_k}\varphi_n(\theta) + 2 \partial_{\theta_j\theta_k}\varphi_n(\theta)^\top \Psi_{n,\rho}^*(\theta) \varphi_n(\theta) \\
    & \quad + 2 \partial_{\theta_j}\varphi_n(\theta)^\top \partial_{\theta_k}\Psi_{n,\rho}^*(\theta) \varphi_n(\theta) + 2 \partial_{\theta_k}\varphi_n(\theta)^\top \partial_{\theta_j}\Psi_{n,\rho}^*(\theta) \varphi_n(\theta) \\
    & \quad + \varphi_n(\theta)^\top \partial_{\theta_j\theta_k}\Psi_{n,\rho}^*(\theta) \varphi_n(\theta).
\end{align}
Hence, we can write
\begin{align}
    \partial_{\theta^\top} S_{n,\rho}^*(\theta)
    =
    2K_{n,\rho}
    +
    R_{n,\rho}^*(\theta),
\end{align}
where the $(j,k)$-th element of $R_{n,\rho}^*(\theta)$ is
\begin{align}
    (R_{n,\rho}^*(\theta))_{jk}
    & = 2\left( \partial_{\theta_j}\varphi_n(\theta)^\top \Psi_{n,\rho}^*(\theta) \partial_{\theta_k}\varphi_n(\theta) - \partial_{\theta_j}\varphi_n(\theta_0)^\top \Psi_{n,\rho}^*(\theta_0) \partial_{\theta_k}\varphi_n(\theta_0) \right) \\
    & \quad + 2 \partial_{\theta_j\theta_k}\varphi_n(\theta)^\top \Psi_{n,\rho}^*(\theta) \varphi_n(\theta) \\
    & \quad + 2 \partial_{\theta_j}\varphi_n(\theta)^\top \partial_{\theta_k}\Psi_{n,\rho}^*(\theta) \varphi_n(\theta) + 2 \partial_{\theta_k}\varphi_n(\theta)^\top \partial_{\theta_j}\Psi_{n,\rho}^*(\theta) \varphi_n(\theta) \\
    &\quad + \varphi_n(\theta)^\top \partial_{\theta_j\theta_k}\Psi_{n,\rho}^*(\theta) \varphi_n(\theta).
\end{align}
Now, consider $\theta$ such that $\|\theta-\theta_0\| = o(1)$.
By Assumption \ref{as:popmom}, $\varphi_n(\theta) = o(1)$ and $\partial_{\theta_j}\varphi_n(\theta) = \partial_{\theta_j}\varphi_n(\theta_0) + o(1)$ for $j \in [d_\theta]$.
Moreover, by Assumption \ref{as:Psi}, $\Psi_{n,\rho}^*(\theta) = \Psi_{n,\rho}^*(\theta_0) + o(1)$, $\partial_{\theta_j}\Psi_{n,\rho}^*(\theta) = O(1)$, $\partial_{\theta_j\theta_k}\Psi_{n,\rho}^*(\theta) = O(1)$ for $j,k \in [d_\theta]$.
Then, since $\partial_{\theta_j\theta_k}\varphi_n(\theta) = O(1)$ by Assumption \ref{as:popmom}(ii), each term in $(R_{n,\rho}^*(\theta))_{jk}$ is $o(1)$ uniformly over $\|\theta-\theta_0\| = o(1)$.

\bigskip

Next, since $S_{n,\rho}^*(\theta_{n,\rho}^*) = \bm 0$, a mean-value expansion at $\theta_0$ gives
\begin{align}
    \bm 0
    = S_{n,\rho}^*(\theta_0) + \partial_{\theta^\top} S_{n,\rho}^*(\bar \theta_n^*)(\theta_{n,\rho}^*-\theta_0),
\end{align}
where $\bar \theta_n^* \in [\theta_0, \theta_{n,\rho}^*]$.
By Lemma \ref{lem:rootn_bias}, $\bar\theta_n^* - \theta_0 = O(n^{-1/2})$, and thus $\partial_{\theta^\top} S_{n,\rho}^*(\bar \theta_n^*) = 2 K_{n,\rho} + o(1)$.
Also,
\begin{align}
    S_{n,\rho}^*(\theta_0)
    & = 2 \dot\varphi_n(\theta_0)^\top \Psi_{n,\rho}^*(\theta_0)\varphi_n(\theta_0) +  (\varphi_n(\theta_0)^\top \partial_{\theta_j}\Psi_{n,\rho}^*(\theta_0)\varphi_n(\theta_0))_{j \in [d_\theta]} \\
    & = 2 \dot\varphi_n(\theta_0)^\top \Psi_{n,\rho}^*(\theta_0)\varphi_n(\theta_0) + O(n^{-1})
\end{align}
by Assumptions \ref{as:popmom}(i) and \ref{as:Psi}.
Hence,
\begin{align}
    S_{n,\rho}^*(\theta_0)
    =
    \frac{2}{\sqrt{n}}\dot\varphi_n(\theta_0)^\top \Psi_{n,\rho}^*(\theta_0)\delta_n
    +
    o(n^{-1/2}),
\end{align}
and therefore
\begin{align}\label{eq:bias}
    \sqrt{n}(\theta_{n,\rho}^*-\theta_0)
    = -K_{n,\rho}^{-1}\dot\varphi_n(\theta_0)^\top \Psi_{n,\rho}^*(\theta_0) \delta_n + o(1).
\end{align}

\bigskip

Meanwhile, since $S_{n,\rho}(\hat\theta_{n,\rho}) = \bm 0$, a mean-value expansion at $\theta_{n,\rho}^*$ yields
\begin{align}
    \bm 0 = S_{n,\rho}(\theta_{n,\rho}^*) + \partial_{\theta^\top} S_{n,\rho}(\bar\theta_n)(\hat\theta_{n,\rho} -\theta_{n,\rho}^*),
\end{align}
where $\bar \theta_n \in [\hat\theta_{n,\rho}, \theta_{n,\rho}^*]$.
Noting that $\bar \theta_n - \theta_0 = o_P(1)$ by Theorem \ref{thm:consistency} and Lemma \ref{lem:rootn_bias}, we have
\begin{align}
    \partial_{\theta^\top} S_{n,\rho}(\bar\theta_n)
    = \partial_{\theta^\top} S_{n,\rho}^*(\bar\theta_n) + o_P(1) = 2 K_{n,\rho} + o_P(1)
\end{align}
by Assumption \ref{as:LLN}.
Similarly, expanding $S_{n,\rho}(\theta_{n,\rho}^*)$ around $\theta_0$, we have
\begin{align}
    \sqrt{n} S_{n,\rho}(\theta_{n,\rho}^*)
    & = \sqrt{n} S_{n,\rho}(\theta_0) + \partial_{\theta^\top} S_{n,\rho} (\check\theta_n) \sqrt{n}(\theta_{n,\rho}^*-\theta_0) \\
    & = \sqrt{n} S_{n,\rho}(\theta_0) + [2 K_{n,\rho} + o_P(1)] \sqrt{n} (\theta_{n,\rho}^* - \theta_0)  \\
    & = \sqrt{n} S_{n,\rho}(\theta_0) - 2 \dot\varphi_n(\theta_0)^\top \Psi_{n,\rho}^*(\theta_0)\delta_n + o_P(1),
\end{align}
where $\check\theta_n \in [\theta_0, \theta_{n,\rho}^*]$, and the last equality is from \eqref{eq:bias}.
Noting that $\bar q_n(\theta_0)=O_P(n^{-1/2})$ by Assumptions \ref{as:popmom} and \ref{as:CLT}, it is straightforward to see that 
\begin{align}
    \sqrt{n} S_{n,\rho}(\theta_0)
    & = 2\{\partial_{\theta^\top} \bar q_n(\theta_0)\}^\top \Psi_{n,\rho}(\theta_0)\sqrt{n}\bar q_n(\theta_0) + o_P(1) \\
    & = 2\dot\varphi_n(\theta_0)^\top \Psi_{n,\rho}^*(\theta_0) \sqrt{n} \bar q_n(\theta_0) + o_P(1) \\
    & = 2\dot\varphi_n(\theta_0)^\top \Psi_{n,\rho}^*(\theta_0) \sqrt{n} \left(\bar q_n(\theta_0)-\varphi_n(\theta_0)\right) + 2\dot\varphi_n(\theta_0)^\top \Psi_{n,\rho}^*(\theta_0)\delta_n + o_P(1)
\end{align}
by Assumption \ref{as:LLN}.
Therefore,
\begin{align}
    \sqrt{n} S_{n,\rho}(\theta_{n,\rho}^*) = 2\dot\varphi_n(\theta_0)^\top \Psi_{n,\rho}^*(\theta_0)
    \sqrt{n} \left(\bar q_n(\theta_0)-\varphi_n(\theta_0)\right) + o_P(1),
\end{align}
which leads to
\begin{align}
    \sqrt{n}(\hat\theta_{n,\rho} - \theta_{n,\rho}^*) = -K_{n,\rho}^{-1} \dot\varphi_n(\theta_0)^\top \Psi_{n,\rho}^*(\theta_0) \sqrt{n}\left(\bar q_n(\theta_0) - \varphi_n(\theta_0)\right) + o_P(1).
\end{align}
Consequently, by Assumption \ref{as:CLT},
\begin{align}
    \sqrt{n}(\hat\theta_{n,\rho} - \theta_{n,\rho}^*) \overset{d}{\to} N\left(\bm 0, K_\rho^{-1}\dot\varphi^\top \Psi_\rho^* \Sigma_q \Psi_\rho^* \dot\varphi K_\rho^{-1} \right).
\end{align}
The second result follows from \eqref{eq:bias}.

\qed

%%%%%%%%%%%%%%%%%%%%%%%%%%%%%%%%%%%%%%%%%%%%%%

\section{Proofs of Lemma \ref{lem:verify} and Theorem \ref{thm:SAR}}

Recall that
\begin{align}
    \hat \theta_{n,\rho} & = \argmin_{\theta \in \Theta} Q_{n,\rho}(\theta) \\
        &\quad Q_{n,\rho}(\theta) = \|\bar q_n(\theta) \|^2_{\Psi_{n, \rho}(\alpha)}, \quad \bar q_n(\theta) = \frac{1}{n} \bm Z_n^\top (\bm Y_n - \bm R_n^{\mathrm{obs}}\theta), \quad \Psi_{n, \rho}(\alpha) = \left(\Omega_n^{-1} + \alpha^2 \Xi_{n,m,\rho} \right)^{-1}, \\
    \theta_{n,\rho}^* & = \argmin_{\theta \in \Theta} Q_{n,\rho}^*(\theta) \\
        &\quad Q_{n,\rho}^*(\theta) = \|\varphi_n(\theta) \|^2_{\Psi_{n, \rho}^*(\alpha)}, \quad \varphi_n(\theta) = \frac{1}{n} \bm Z_n^\top (\bm M_n \bm X_n \beta_0 - \bbE[\bm R_n^{\mathrm{obs}}]\theta), \quad \Psi_{n, \rho}^*(\alpha) = \left(\Omega_n^{-1} + \alpha^2 \bbE[\Xi_{n,m,\rho}] \right)^{-1},
\end{align}
where $\bm M_n = (I_n - \alpha_0 \bm G_n)^{-1}$, $\bm R_n^{\mathrm{obs}} = (\bm G_n^{\mathrm{obs}}\bm Y_n,\bm X_n)$, and $\Xi_{n,m,\rho} = (m/\rho) \Phi_n$.
Recall also that $\Pi_n =  n^{-1}\bm Z_n^\top\bm R_n^{\mathrm{obs}} = (\pi_{n,1}, \Pi_{n, -1})$, and since $\Pi_{n, -1}$ is non-random by Assumption \ref{as:X}, 
\begin{align}
    \Pi_n - \bbE[\Pi_n] = (\pi_{n,1} - \bbE[\pi_{n,1}], \bm 0, \ldots, \bm 0)
\end{align}
holds.

The first and second derivatives of $\Psi_{n,\rho}(\alpha)$ with respect to $\alpha$ are
\begin{align}
    \dot \Psi_{n,\rho}(\alpha) 
    & \coloneqq \partial_\alpha \Psi_{n,\rho}(\alpha) \\
    & = -2\alpha \Psi_{n,\rho}(\alpha) \Xi_{n,m,\rho} \Psi_{n,\rho}(\alpha) \\
    \ddot \Psi_{n,\rho}(\alpha) 
    & \coloneqq \partial_{\alpha\alpha} \Psi_{n,\rho}(\alpha) \\
    &= -2 \Psi_{n,\rho}(\alpha) \Xi_{n,m,\rho} \Psi_{n,\rho}(\alpha) + 8\alpha^2 \Psi_{n,\rho}(\alpha) \Xi_{n,m,\rho} \Psi_{n,\rho}(\alpha) \Xi_{n,m,\rho} \Psi_{n,\rho}(\alpha).
\end{align}
Similarly, define
\begin{align}
    \dot \Psi_{n,\rho}^*(\alpha) 
    & \coloneqq -2\alpha \Psi_{n,\rho}^*(\alpha) \bbE[\Xi_{n,m,\rho}] \Psi_{n,\rho}^*(\alpha) \\
    \ddot \Psi_{n,\rho}^*(\alpha) 
    & \coloneqq -2 \Psi_{n,\rho}^*(\alpha) \bbE[\Xi_{n,m,\rho}] \Psi_{n,\rho}^*(\alpha) + 8 \alpha^2 \Psi_{n,\rho}^*(\alpha) \bbE[\Xi_{n,m,\rho}] \Psi_{n,\rho}^*(\alpha) \bbE[\Xi_{n,m,\rho}] \Psi_{n,\rho}^*(\alpha).
\end{align}

Letting $e_1=(1,0,\ldots,0)^\top\in\bbR^{d_\theta}$, the score function is
\begin{align}
    S_{n,\rho}(\theta)
    & \coloneqq \partial_\theta Q_{n,\rho}(\theta)\\
    & = -2 \Pi_n^\top \Psi_{n,\rho}(\alpha) \bar q_n(\theta) + e_1 \bar q_n(\theta)^\top \dot \Psi_{n,\rho}(\alpha) \bar q_n(\theta).
\end{align}
Moreover, the Hessian matrix is
\begin{align}
    H_{n,\rho}(\theta)
    & \coloneqq \partial_{\theta^\top} S_{n,\rho}(\theta) \\
    & = 2 \Pi_n^\top \Psi_{n,\rho}(\alpha) \Pi_n - 2 \Pi_n^\top \dot \Psi_{n,\rho}(\alpha) \bar q_n(\theta) e_1^\top - 2 e_1 \bar q_n(\theta)^\top \dot \Psi_{n,\rho}(\alpha) \Pi_n + e_1 e_1^\top \bar q_n(\theta)^\top \ddot \Psi_{n,\rho}(\alpha) \bar q_n(\theta).
\end{align}
The population counterparts of $S_{n,\rho}(\theta)$ and $H_{n,\rho}(\theta)$ are given by
\begin{align}
    S_{n,\rho}^*(\theta)
    & \coloneqq -2 \bbE[ \Pi_n ]^\top \Psi_{n,\rho}^*(\alpha) \varphi_n(\theta) + e_1 \varphi_n(\theta)^\top \dot \Psi_{n,\rho}^*(\alpha) \varphi_n(\theta)
\end{align}
and
\begin{align}\label{eq:hessian}
    H_{n,\rho}^*(\theta)
    & \coloneqq 2 \bbE[ \Pi_n ]^\top \Psi_{n,\rho}^*(\alpha) \bbE[ \Pi_n ] - 2 \bbE[ \Pi_n ]^\top \dot \Psi_{n,\rho}^*(\alpha) \varphi_n(\theta) e_1^\top - 2 e_1 \varphi_n(\theta)^\top \dot \Psi_{n,\rho}^*(\alpha) \bbE[\Pi_n] \\
    & \quad + e_1 e_1^\top \varphi_n(\theta)^\top \ddot \Psi_{n,\rho}^*(\alpha) \varphi_n(\theta),
\end{align}
respectively.

%%%%%%%%%%%%%%%%%%%%%%%%%%%%%%%%%%%%%%%%%%%%%

\begin{flushleft}
\textbf{Proof of Lemma \ref{lem:verify}}
\end{flushleft} 

\paragraph{Verification of Assumption \ref{as:ident}}

Observe that
\begin{align}
    Q_{n, \rho}^*(\theta) 
    & = \|\varphi_n(\theta) \|^2_{\Psi_{n, \rho}^*(\alpha)} \\
    & = \left(\frac{1}{n} \bm Z_n^\top (\bm M_n \bm X_n \beta_0 - \bbE[\bm R_n^{\mathrm{obs}}]\theta)\right)^\top \Psi_{n, \rho}^*(\alpha)\left(\frac{1}{n} \bm Z_n^\top (\bm M_n \bm X_n \beta_0 - \bbE[\bm R_n^{\mathrm{obs}}]\theta)\right) \\
    & = \left(\bm L_n - \bbE[\Pi_n]\theta \right)^\top \Psi_{n, \rho}^*(\alpha) \left(\bm L_n - \bbE[\Pi_n]\theta \right) \\
    & = \left(\bm L_n - \alpha \bbE[\pi_{n,1}] - \Pi_{n,-1}\beta \right)^\top \Psi_{n,\rho}^*(\alpha) \left(\bm L_n - \alpha \bbE[\pi_{n,1}] - \Pi_{n,-1}\beta \right),
\end{align}
where $\bm L_n \coloneqq n^{-1} \bm Z_n^\top \bm M_n \bm X_n \beta_0$.
By Assumption \ref{as:identsar}(i), the minimizer with respect to $\beta$ exists uniquely for each $\alpha \in \Theta_\alpha$ and is given by
\begin{align}
    \beta_{n,\rho}^{\circ}(\alpha)
    \coloneqq
    \left(\Pi_{n,-1}^\top \Psi_{n,\rho}^*(\alpha)\Pi_{n,-1}\right)^{-1}
    \Pi_{n,-1}^\top \Psi_{n,\rho}^*(\alpha) \left(\bm L_n-\alpha \bbE[\pi_{n,1}]\right).
\end{align}
Moreover, note that
\begin{align}
    Q_{n,\rho}^*(\alpha,\beta)
    & = Q_{n,\rho}^*(\alpha,\beta_{n,\rho}^{\circ}(\alpha)) + \left(\beta - \beta_{n,\rho}^{\circ}(\alpha)
    \right)^\top \Pi_{n,-1}^\top \Psi_{n,\rho}^*(\alpha) \Pi_{n,-1} \left( \beta - \beta_{n,\rho}^{\circ}(\alpha) \right) \\
    & \quad - 2 \underbracket{\left(\beta - \beta_{n,\rho}^{\circ}(\alpha) \right)^\top \Pi_{n,-1}^\top \Psi_{n,\rho}^*(\alpha) \left(\bm L_n - \alpha \bbE[\pi_{n,1}] - \Pi_{n,-1}\beta_{n,\rho}^{\circ}(\alpha) \right)}_{= 0}.
\end{align}
Hence, by Assumption \ref{as:identsar}(i),
\begin{align}\label{eq:beta_ident}
    Q_{n,\rho}^*(\alpha,\beta) - Q_{n,\rho}^*(\alpha,\beta_{n,\rho}^{\circ}(\alpha)) \ge c_\beta \left\| \beta - \beta_{n,\rho}^{\circ}(\alpha) \right\|^2 
\end{align}
for all sufficiently large $n$.

Now, suppose that $\|\theta-\theta_{n,\rho}^*\|\ge\epsilon$.
If $|\alpha-\alpha_{n,\rho}^*|\ge\epsilon_1$, Assumption \ref{as:identsar}(ii) gives
\begin{align}
    Q_{n,\rho}^*(\alpha, \beta) -  Q_{n,\rho}^*(\alpha_{n,\rho}^*,\beta_{n,\rho}^*) \ge Q_{n,\rho}^*(\alpha,\beta_{n,\rho}^{\circ}(\alpha)) -  Q_{n,\rho}^*(\alpha_{n,\rho}^*,\beta_{n,\rho}^*) \ge c_{\epsilon_1}.
\end{align}
On the other hand, suppose that $|\alpha-\alpha_{n,\rho}^*|<\epsilon_1$.
By the continuity of $\beta_{n,\rho}^{\circ}(\alpha)$, choose $\epsilon_1>0$ sufficiently small so that
\begin{align}
    \left\|
        \beta_{n,\rho}^{\circ}(\alpha)
        -
        \beta_{n,\rho}^{\circ}(\alpha_{n,\rho}^*)
    \right\|
    < \epsilon_2 
\end{align}
for some small $\epsilon_2 > 0$.
Since $\beta_{n,\rho}^{\circ}(\alpha_{n,\rho}^*)=\beta_{n,\rho}^*$ and
$\|\theta-\theta_{n,\rho}^*\|\ge\epsilon$, it follows that $\| \beta - \beta_{n,\rho}^{\circ}(\alpha) \| \ge \epsilon_3$ for some $\epsilon_3 > 0$.
Therefore, by \eqref{eq:beta_ident} and the fact that $Q_{n,\rho}^*(\alpha,\beta_{n,\rho}^{\circ}(\alpha))
\ge Q_{n,\rho}^*(\theta_{n,\rho}^*)$,
\begin{align}
    Q_{n,\rho}^*(\alpha,\beta) - Q_{n,\rho}^*(\theta_{n,\rho}^*)
    \ge c_\beta\epsilon_3^2 
\end{align}
for any $\theta \in \Theta$ such that $\|\theta-\theta_{n,\rho}^*\| \ge \epsilon$.
Hence, Assumption \ref{as:ident} holds.

\paragraph{Verification of Assumption \ref{as:unifmom}}

Recall that
\begin{align}
    \varphi_n(\theta) 
    & = \frac{1}{n}\bm Z_n^\top\left(\bm M_n\bm X_n\beta_0 - \bbE[\bm R_n^{\mathrm{obs}}]\theta\right) \\
    & = \frac{1}{n}\bm Z_n^\top\left( \bm M_n\bm X_n\beta_0 - \left\{\bm G_n^{\mathrm{obs}} \bm M_n\bm X_n\beta_0, \bm X_n \right\}\theta\right) .
\end{align}
Then, by the boundedness of $\bm Z_n$, $\bm M_n$, and $\bm G_n^{\mathrm{obs}}$, together with the compactness of $\Theta$, Assumption \ref{as:unifmom}(i) is satisfied.

\medskip

Next, uniformly in $\theta \in\Theta$,
\begin{align}
    \bar q_n(\theta)-\varphi_n(\theta) 
    & = \frac{1}{n}\bm Z_n^\top\left( \bm Y_n - \bm M_n\bm X_n\beta_0 - \left\{ \bm R_n^{\mathrm{obs}} - \bbE[\bm R_n^{\mathrm{obs}}] \right\}\theta\right) \\
    & = \frac{1}{n}\bm Z_n^\top \bm M_n\bm\varepsilon_n - \frac{\alpha}{n}\bm Z_n^\top \bm G_n^{\mathrm{obs}}\bm M_n \bm\varepsilon_n.
\end{align}
Under Assumptions \ref{as:error} and \ref{as:network}, it is easy to see that $\bbE \|\bm Z_n^\top \bm M_n\bm\varepsilon_n/n\|^2 = O(n^{-1})$ and  $\bbE \|\bm Z_n^\top \bm G_n^{\mathrm{obs}} \bm M_n \bm\varepsilon_n/n\|^2 = O(n^{-1})$.
Thus, Assumption \ref{as:unifmom}(ii) follows from Markov's inequality.

\medskip

In the SAR model, $V_j(\bm W_n,\theta)=-\alpha Y_j$.
Hence, to verify Assumption \ref{as:unifmom}(iii), it suffices to show
\begin{align}
    \left|\frac{1}{n}\sum_{j\in[n]}\bar a_{n,j}(Y_j^2-\bbE[Y_j^2])\right|=o_P(m^{-1}).
\end{align}
By the definition of $\bar a_{n,j}$,
\begin{align}\label{eq:aorder}
    |\bar a_{n,j}|
    \le
    \frac{C}{n}\left|\{i \in [n]: \: j \in \calU(i)\} \right| = O(n^{-1}),
\end{align}
by Assumption \ref{as:ubound}.
Let $M_{j.}$ be the $j$-th row of $\bm M_n$, and write $Y_j = M_{j.} \bm X_n \beta_0 + M_{j.} \bm \varepsilon_n$.
Then,
\begin{align}\label{eq:Y2}
    \begin{split}
    Y_j^2
    & = (M_{j.} \bm X_n \beta_0)^2 + 2\beta_0^\top \bm X_n^\top M_{j.}^\top M_{j.} \bm\varepsilon_n +
    \bm\varepsilon_n^\top M_{j.}^\top M_{j.}\bm\varepsilon_n \\
    \bbE[ Y_j^2 ]
    & = (M_{j.} \bm X_n \beta_0)^2 + \bbE[\bm\varepsilon_n^\top M_{j.}^\top M_{j.}\bm\varepsilon_n].
    \end{split}
\end{align}
Then,
\begin{align}
    \frac{1}{n}\sum_{j \in [n]}\bar a_{n,j}(Y_j^2-\bbE[Y_j^2])
    & =
    \frac{2}{n}\sum_{j \in [n]} \bar a_{n,j} \beta_0^\top \bm X_n^\top M_{j.}^\top M_{j.} \bm\varepsilon_n \\
    & \quad + \frac{1}{n} \sum_{j \in [n]} \bar a_{n,j} \left\{ \bm\varepsilon_n^\top M_{j.}^\top M_{j.}\bm\varepsilon_n - \bbE[\bm\varepsilon_n^\top M_{j.}^\top M_{j.}\bm\varepsilon_n] \right\}.
\end{align}
For the first term, the coefficient of $\varepsilon_k$ is
\begin{align}
    \frac{2}{n}\sum_{j\in[n]}\bar a_{n,j}(M_{j.}\bm X_n\beta_0)M_{jk} = O(n^{-2})
\end{align}
by \eqref{eq:aorder} and Assumption \ref{as:network}(ii).
Hence, by Assumption \ref{as:error},
\begin{align}
    \frac{2}{n}\sum_{j\in[n]}\bar a_{n,j}\beta_0^\top\bm X_n^\top M_{j.}^\top M_{j.}\bm\varepsilon_n
    =
    O_P(n^{-3/2}).
\end{align}

For the second term, observe that the coefficient on $\varepsilon_k\varepsilon_\ell$ is
\begin{align}
    \frac{1}{n}\sum_{j \in [n]} \bar a_{n,j} M_{jk} M_{j\ell}.
\end{align}
Again, by \eqref{eq:aorder} and Assumption \ref{as:network}(ii), we have
\begin{align}
    \sum_{k=1}^n \sum_{\ell=1}^n \left( \frac{1}{n}\sum_{j \in [n]} \bar a_{n,j} M_{jk} M_{j\ell} \right)^2
    & \le \max_{k,\ell} \left| \frac{1}{n}\sum_{j\in[n]}\bar a_{n,j}M_{jk}M_{j\ell} \right| \cdot \sum_{k=1}^n \sum_{\ell=1}^n \left| \frac{1}{n}\sum_{j\in[n]}\bar a_{n,j}M_{jk}M_{j\ell} \right| \\
    & = O(n^{-2})\cdot O(n^{-1})=O(n^{-3}).
\end{align}
Under Assumption \ref{as:error}, the variance of the second term is
\begin{align}
    &\operatorname{Var}\left[
        \sum_{k=1}^n\sum_{\ell=1}^n \left( \frac{1}{n}\sum_{j\in[n]}\bar a_{n,j}M_{jk}M_{j\ell} \right) \left( \varepsilon_k\varepsilon_\ell-\bbE[\varepsilon_k\varepsilon_\ell] \right)
    \right] \\
    &\quad = \sum_{k=1}^n\sum_{\ell=1}^n \sum_{r=1}^n\sum_{s=1}^n \left( \frac{1}{n}\sum_{j\in[n]}\bar a_{n,j}M_{jk}M_{j\ell} \right) \left( \frac{1}{n}\sum_{j\in[n]}\bar a_{n,j}M_{jr}M_{js} \right) \operatorname{Cov} \left( \varepsilon_k\varepsilon_\ell, \varepsilon_r\varepsilon_s \right) \\
    &\quad \le C \sum_{k=1}^n\sum_{\ell=1}^n \left( \frac{1}{n}\sum_{j\in[n]}\bar a_{n,j}M_{jk}M_{j\ell} \right)^2 =  O(n^{-3}).
\end{align}
Thus,
\begin{align}
    \frac{1}{n}\sum_{j\in[n]}\bar a_{n,j} \left[ \bm\varepsilon_n^\top M_{j.}^\top M_{j.}\bm\varepsilon_n - \bbE[\bm\varepsilon_n^\top M_{j.}^\top M_{j.}\bm\varepsilon_n] \right] = O_P(n^{-3/2}).
\end{align}
Noting that Assumption \ref{as:ubound} implies $m/n=O(1)$, this completes the proof.

\paragraph{Verification of Assumption \ref{as:LLN}}

Since $\partial_{\theta^\top}\bar q_n(\theta)=-\Pi_n$,
\begin{align}
    \partial_{\theta^\top}\bar q_n(\theta)-\dot\varphi_n(\theta)
    = -(\Pi_n-\bbE[\Pi_n])
    = -\left(
        \frac{1}{n}\bm Z_n^\top\bm G_n^{\mathrm{obs}}\bm M_n\bm\varepsilon_n,\bm 0
    \right),
\end{align}
independent of $\theta$.
By Assumption \ref{as:error} and Markov's inequality,
\begin{align}
    \sup_{\theta\in\mathcal N_0}\left\|\partial_{\theta^\top}\bar q_n(\theta)-\dot\varphi_n(\theta)\right\|=O_P(n^{-1/2})=o_P(1),
\end{align}
which verifies the first part of Assumption \ref{as:LLN}(i).
The second part follows trivially.

\medskip

Next, since $\Psi_{n,\rho}(\theta)=\Psi_{n,\rho}(\alpha)$ and $\Psi^*_{n,\rho}(\theta)=\Psi^*_{n,\rho}(\alpha)$ are functions of $\alpha$ only, all derivatives with respect to $\beta$ are zero.
We first show that $\|\Xi_{n,m,\rho}-\bbE[\Xi_{n,m,\rho}]\|_{\mathrm{op}}=o_P(1)$.
By the definition of $\Gamma_n$ and $\Phi_n = \Gamma_n\Gamma_n^\top$,
\begin{align}
    \Phi_n = \frac{1}{n^2} \sum_{i \in [n]} \sum_{j \in \calU(i)} Z_i Z_i^\top Y_j^2 .
\end{align}
Thus, for each $(r,s)$,
\begin{align}
    \left( \Xi_{n,m,\rho}-\bbE[\Xi_{n,m,\rho}] \right)_{rs}
    =
    \frac{1}{n}\sum_{j\in[n]}
    \left(\frac{m}{\rho n}\sum_{i:j\in\calU(i)}Z_{ir}Z_{is}\right)
    (Y_j^2-\bbE[Y_j^2]).
\end{align}
By Assumption \ref{as:ubound}, 
\begin{align}\label{eq:boundZZ}
    \left|\frac{m}{\rho n}\sum_{i:j\in\calU(i)}Z_{ir}Z_{is}\right|=O(1)
\end{align}
uniformly in $j,r,s$.
Then, following the analogous arguments as in the verification of Assumption \ref{as:unifmom}(iii), we can find that
\begin{align}
    [\Xi_{n,m,\rho}-\bbE[\Xi_{n,m,\rho}]]_{rs} = O_P(n^{-1/2}),
\end{align}
and thus $\|\Xi_{n,m,\rho}-\bbE[\Xi_{n,m,\rho}]\|_{\mathrm{op}}=O_P(n^{-1/2})=o_P(1)$.

Since $\Xi_{n,m,\rho}$ and $\bbE[\Xi_{n,m,\rho}]$ are positive semidefinite, $\Psi_{n,\rho}(\alpha)$ and $\Psi^*_{n,\rho}(\alpha)$ are uniformly bounded in $\alpha$ by Assumption \ref{as:omega}.
Therefore,
\begin{align}
    \sup_{\theta\in\mathcal N_0}\|\Psi_{n,\rho}(\theta)-\Psi^*_{n,\rho}(\theta)\|_{\mathrm{op}} = \sup_{\theta\in\mathcal N_0}\| \Psi_{n,\rho}(\alpha)\left[\alpha^2(\bbE[\Xi_{n,m,\rho}]-\Xi_{n,m,\rho})\right]\Psi^*_{n,\rho}(\alpha) \|_{\mathrm{op}} =o_P(1),
\end{align}
which verifies the first part of Assumption \ref{as:LLN}(ii).
The preceding result and $\|\Xi_{n,m,\rho}-\bbE[\Xi_{n,m,\rho}]\|_{\mathrm{op}}=o_P(1)$ also imply uniform convergence of the first and second derivative differences.
This verifies Assumption \ref{as:LLN}(ii).

\paragraph{Verification of Assumption \ref{as:CLT}}

Assumption \ref{as:CLT} follows directly from Lemma \ref{lem:SAR_CLT}.
\qed

%%%%%%%%%%%%%%%%%%%%%%%%%%%%%%%%%%%%%%%%%%%%%

\begin{lemma}\label{lem:SAR_CLT}
    Suppose that Assumptions \ref{as:error} -- \ref{as:ubound} and \ref{as:variance} hold.
    Then, for each given $\rho > 0$,
    \begin{align}
        \sqrt{n}
        \begin{pmatrix}
            \bar q_n(\theta_0) - \varphi_n(\theta_0) \\
            \pi_{n,1} - \bbE[\pi_{n,1}] \\
            \operatorname{vec}(\Xi_{n,m,\rho} - \bbE[\Xi_{n,m,\rho}])
        \end{pmatrix}
        \overset{d}{\to} N(\bm 0, \Sigma_\rho^{\mathrm{SAR}}),
    \end{align}
    where $\Sigma_\rho^{\mathrm{SAR}}\in\bbR^{d_\mu(2+d_\mu)\times d_\mu(2+d_\mu)}$ is the symmetric matrix satisfying
    \begin{align}
        \omega_\rho^2(v)=v^\top\Sigma_\rho^{\mathrm{SAR}}v
    \end{align}
    for any non-zero $v\in\bbR^{d_\mu(2+d_\mu)}$.
\end{lemma}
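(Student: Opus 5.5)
We prove the joint CLT by the Cramér--Wold device, reducing it to a scalar CLT for linear-quadratic forms in the IID errors. Fix $v\in\bbR^{d_\mu(2+d_\mu)}$ with $\|v\|=1$. The first step is to write each block of the stacked vector explicitly as a linear-quadratic form in $\bm\varepsilon_n$, which defines $\mathcal A_n(v)$ and $\mathcal B_n^s(v)$ as referred to in Assumption \ref{as:variance}.

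The three blocks are handled as follows.
\begin{itemize}
\item From the verification of Assumption \ref{as:unifmom}(ii), $\bar q_n(\theta_0)-\varphi_n(\theta_0) = n^{-1}\bm Z_n^\top(I_n-\alpha_0\bm G_n^{\mathrm{obs}})\bm M_n\bm\varepsilon_n$. This block is purely linear.
\item Similarly, $\pi_{n,1}-\bbE[\pi_{n,1}] = n^{-1}\bm Z_n^\top\bm G_n^{\mathrm{obs}}\bm M_n\bm\varepsilon_n$, also linear.
\item For the third block, use $\Phi_n=n^{-2}\sum_i\sum_{j\in\calU(i)}Z_iZ_i^\top Y_j^2$ together with the decomposition \eqref{eq:Y2}. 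Then $(\Xi_{n,m,\rho}-\bbE\Xi_{n,m,\rho})_{rs}$ is a linear term with coefficient $2n^{-1}\sum_j w_{j,rs}(M_{j.}\bm X_n\beta_0)M_{j.}$, plus a centered quadratic form $\bm\varepsilon_n^\top(n^{-1}\sum_j w_{j,rs}M_{j.}^\top M_{j.})\bm\varepsilon_n$. Here $w_{j,rs}=(m/(\rho n))\sum_{i:j\in\calU(i)}Z_{ir}Z_{is}$ is $O(1)$ by \eqref{eq:boundZZ}.
\end{itemize}

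Multiply by $\sqrt n$ and contract with $v$. This gives
\[
\mathcal A_n(v)=n^{-1/2}\bigl[v_1^\top\bm Z_n^\top(I_n-\alpha_0\bm G_n^{\mathrm{obs}})\bm M_n+v_2^\top\bm Z_n^\top\bm G_n^{\mathrm{obs}}\bm M_n+\textstyle\sum_{r,s}v_{3,rs}\,2\sum_j w_{j,rs}(M_{j.}\bm X_n\beta_0)M_{j.}\bigr]
\]
and
\[
\mathcal B_n^s(v)=n^{-1/2}\textstyle\sum_{r,s}v_{3,rs}\sum_jw_{j,rs}M_{j.}^\top M_{j.}.
\]
The matrix $\mathcal B_n^s(v)$ is symmetric by construction.

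The second step is to check the conditions of the Kelejian--Prucha (2001) CLT for linear-quadratic forms. We need three things.
\begin{enumerate}
\item $\sup_n n^{-1}\sum_k|\sqrt n\,\mathcal A_{n,k}(v)|^{2+\eta}<\infty$, which follows if the entries of $\sqrt n\,\mathcal A_n(v)$ are uniformly bounded.
\item The matrix $\sqrt n\,\mathcal B_n^s(v)$ has uniformly bounded row and column sums.
\item The variance is bounded away from zero.
\end{enumerate}

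For item 1, entries of $\sqrt n\mathcal A_n(v)$ are bounded because $\bm Z_n,\bm X_n$ are bounded (Assumptions \ref{as:IV}, \ref{as:X}) and because $\|\bm G_n^{\mathrm{obs}}\|_1$, $\|\bm M_n\|_1$, $\|\bm M_n\|_\infty$ are bounded (Assumption \ref{as:network}). In particular, $|\sum_j w_j(M_{j.}\bm X_n\beta_0)M_{jk}|\le c\|\bm M_n\|_\infty\|\bm M_n\|_1$.

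For item 2, the row sums of $\sum_j w_jM_{j.}^\top M_{j.}$ satisfy $\sum_\ell|\sum_jw_jM_{jk}M_{j\ell}|\le c\|\bm M_n\|_1\|\bm M_n\|_\infty$. This uses $|w_j|=O(1)$, which in turn relies on Assumption \ref{as:ubound} and $m=O(n)$. The column sums are bounded by symmetry.

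For item 3, the variance is bounded below by $c_\omega$ by Assumption \ref{as:variance}, and Assumption \ref{as:error} supplies the $4+c$ moments. Theorem 1 of \citet{kelejian2001asymptotic} then yields
\[
(\mathcal A_n\bm\varepsilon_n+\bm\varepsilon_n^\top\mathcal B_n^s\bm\varepsilon_n-\bbE[\cdot])/\omega_{n,\rho}(v)\overset{d}{\to}N(0,1).
\]
Since $\omega_{n,\rho}^2(v)\to\omega_\rho^2(v)$, the limit law is $N(0,\omega_\rho^2(v))$.

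The final step identifies $\Sigma_\rho^{\mathrm{SAR}}$. The map $v\mapsto\omega^2_{n,\rho}(v)$ is a quadratic form $v^\top\Sigma_{n,\rho}v$, because $\mathcal A_n$ and $\mathcal B_n^s$ are linear in $v$. Its pointwise limit on the unit sphere (extended by homogeneity) is therefore a quadratic form. By polarization this defines a symmetric $\Sigma_\rho^{\mathrm{SAR}}=\lim\Sigma_{n,\rho}$ with $\omega_\rho^2(v)=v^\top\Sigma_\rho^{\mathrm{SAR}}v$. The Cramér--Wold device concludes the proof.

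The main obstacle is the bookkeeping in step two for the quadratic block: showing that $\sqrt n$ times the $n^{-1}$-scaled coefficients, with weights $w_{j,rs}$ aggregated over the uncertainty sets, have bounded absolute row and column sums. This is where Assumption \ref{as:ubound} enters essentially; without it $w_{j,rs}$ could grow. I would handle it by bounding $\max_j|w_{j,rs}|\le c\,(m/n)\|\bm U_n\|_1$ and then applying the $\|\cdot\|_1$, $\|\cdot\|_\infty$ submultiplicativity of $\bm M_n$.
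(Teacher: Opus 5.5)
Your proposal is correct and follows the paper's proof essentially step by step. Both use a Cram\'er--Wold reduction, write the three blocks as linear--quadratic forms in $\bm\varepsilon_n$, and verify the Kelejian--Prucha conditions through bounded entries of $\sqrt n\,\mathcal A_n(v)$ and bounded row and column sums of $\sqrt n\,\mathcal B_n^s(v)$; your $(I_n-\alpha_0\bm G_n^{\mathrm{obs}})\bm M_n$ agrees with the paper's $\alpha_0(\bm G_n-\bm G_n^{\mathrm{obs}})\bm M_n+I_n$ because $\bm M_n=I_n+\alpha_0\bm G_n\bm M_n$. Your polarization argument identifying $\Sigma_\rho^{\mathrm{SAR}}$ (extended from the unit sphere by homogeneity) is a detail the paper leaves implicit, and boundedness of $Z_i$ should be credited to Assumptions \ref{as:X} and \ref{as:network} rather than Assumption \ref{as:IV}, which is not among the lemma's hypotheses.
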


\begin{proof}
First, noting that 
\begin{align}
    \bm Y_n-\bm R_n^{\mathrm{obs}}\theta_0 = \alpha_0(\bm G_n-\bm G_n^{\mathrm{obs}})\bm Y_n + \bm \varepsilon_n,
\end{align}
we have
\begin{align}
    \bar q_n(\theta_0)-\varphi_n(\theta_0)
    & = \frac{1}{n}\bm Z_n^\top \left[\alpha_0(\bm G_n-\bm G_n^{\mathrm{obs}}) \left\{\bm Y_n - \bbE[\bm Y_n]\right\} + \bm \varepsilon_n \right] \\
    & = \frac{1}{n}\bm Z_n^\top \left[\alpha_0(\bm G_n-\bm G_n^{\mathrm{obs}}) \bm M_n + I_n \right] \bm \varepsilon_n.
\end{align}

Second, since $\pi_{n,1} = n^{-1} \bm Z_n^\top \bm G_n^{\mathrm{obs}}\bm Y_n$,  we obtain
\begin{align}
    \pi_{n,1} - \bbE[\pi_{n,1}]
    & = \frac{1}{n}\bm Z_n^\top \bm G_n^{\mathrm{obs}} \left\{\bm Y_n-\bbE[\bm Y_n]\right\} \\
    & = \frac{1}{n}\bm Z_n^\top \bm G_n^{\mathrm{obs}} \bm M_n \bm \varepsilon_n.
\end{align}

Third, 
\begin{align}
    \operatorname{vec}\left( \Xi_{n,m,\rho} - \bbE[\Xi_{n,m,\rho}] \right)
    = \frac{1}{n} \sum_{j \in [n]}  \operatorname{vec}\left( \frac{m}{\rho n} \sum_{i: \: j\in\calU(i)}Z_iZ_i^\top \right) ( Y_j^2 - \bbE[Y_j^2] ).
\end{align}
In view of \eqref{eq:Y2},
\begin{align}
    \operatorname{vec}\left( \Xi_{n,m,\rho} - \bbE[\Xi_{n,m,\rho}] \right)_\ell
    & = \frac{2}{n} \sum_{j \in [n]}  \zeta_{n,j}^{(\ell)} \beta_0^\top \bm X_n^\top M_{j.}^\top M_{j.} \bm\varepsilon_n + \frac{1}{n} \sum_{j \in [n]}  \zeta_{n,j}^{(\ell)}\left\{ \bm\varepsilon_n^\top M_{j.}^\top M_{j.}\bm\varepsilon_n - \bbE[\bm\varepsilon_n^\top M_{j.}^\top M_{j.}\bm\varepsilon_n]\right\} \\
    & = 2 \beta_0^\top \bm X_n^\top \bm S_{n, \ell} \bm\varepsilon_n + \bm\varepsilon_n^\top \bm S_{n, \ell} \bm\varepsilon_n - \bbE[\bm\varepsilon_n^\top \bm S_{n, \ell} \bm\varepsilon_n],
\end{align}
where 
\begin{align}
    \zeta_{n,j}^{(\ell)}
    \coloneqq \operatorname{vec}\left( \frac{m}{\rho n} \sum_{i: \: j\in\calU(i)}Z_i Z_i^\top \right)_\ell,
    \quad \bm S_{n, \ell}
    \coloneqq  n^{-1} \sum_{j \in [n]} \zeta_{n,j}^{(\ell)} M_{j.}^\top M_{j.}.
\end{align}

To establish the joint asymptotic normality, we use the Cramer--Wold device.
Specifically, for an arbitrary conformable vector $v=(v_1^\top,v_2^\top,v_3^\top)^\top \in \bbR^{d_\mu + d_\mu + d_\mu^2}$ such that $\|v\| = 1$, define
\begin{align}
    T_n
    & \coloneqq v_1^\top \sqrt n \left(\bar q_n(\theta_0)-\varphi_n(\theta_0)\right) + v_2^\top \sqrt n
    \left(\pi_{n,1} - \bbE[\pi_{n,1}] \right) + v_3^\top \sqrt n \operatorname{vec} \left( \Xi_{n,m,\rho} - \bbE[\Xi_{n,m,\rho}] \right) \\
    & = \mathcal A_n \bm\varepsilon_n + \bm \varepsilon_n^\top \mathcal B_n \bm \varepsilon_n - \bbE[\bm\varepsilon_n^\top \mathcal B_n \bm\varepsilon_n],
\end{align}
where
\begin{align}\label{eq:varcomp}
    \begin{split}
    \mathcal A_n \equiv \mathcal A_n(v)
    & \coloneqq \frac{1}{\sqrt n} v_1^\top \bm Z_n^\top \left\{ \alpha_0(\bm G_n-\bm G_n^{\mathrm{obs}})\bm M_n + I_n \right\} + \frac{1}{\sqrt n} v_2^\top \bm Z_n^\top \bm G_n^{\mathrm{obs}}\bm M_n + 2\sqrt n \sum_{\ell \in [d_\mu^2]} v_{3,\ell}\beta_0^\top \bm X_n^\top \bm S_{n,\ell} \\
    \mathcal B_n \equiv \mathcal B_n(v)
    & \coloneqq \sqrt n \sum_{\ell \in [d_\mu^2]} v_{3,\ell} \bm S_{n,\ell}.
    \end{split}
\end{align}
Let
\begin{align}
    \mathcal B_n^s \coloneqq \frac{\mathcal B_n+\mathcal B_n^\top}{2} = (B_{n,ij}^s)_{i,j \in [n]},
    \qquad
    Q_n \coloneqq \sqrt{n} \mathcal A_n\bm\varepsilon_n  + \sqrt{n} \bm\varepsilon_n^\top \mathcal B_n^s \bm\varepsilon_n.
\end{align}
Note that $\bm\varepsilon_n^\top \mathcal B_n \bm\varepsilon_n=\bm\varepsilon_n^\top \mathcal B_n^s \bm\varepsilon_n$.
We now apply Theorem 1 of \cite{kelejian2001asymptotic} to $Q_n$.
Assumptions 1 and 3(b) of \cite{kelejian2001asymptotic} and the variance nondegeneracy condition $n^{-1} \operatorname{Var}(Q_n) \ge c > 0$ for all large $n$ are directly assumed in Assumptions \ref{as:error} and \ref{as:variance}, respectively.
Thus, it remains to verify their Assumption 2, which is sufficient to show that the elements of $\sqrt{n} \mathcal A_n$ are uniformly bounded and that $\sqrt{n} \| \mathcal B_n^s\|_\infty < \infty$.

First, we verify the condition for the linear part.
Let $e_i$ be the $i$-th unit vector in $\bbR^n$.
The $i$-th element $\sqrt n A_{n,i}$ of $\sqrt n\mathcal A_n$ is
\begin{align}
    \sqrt n A_{n,i}
    &=
    \alpha_0 v_1^\top \bm Z_n^\top \bm D_n\bm M_n e_i
    + v_1^\top Z_i
    + v_2^\top \bm Z_n^\top \bm G_n^{\mathrm{obs}}\bm M_n e_i
    + 2n\sum_{\ell\in[d_\mu^2]}v_{3,\ell}\beta_0^\top \bm X_n^\top \bm S_{n,\ell}e_i .
\end{align}
The first three terms are uniformly bounded by Assumptions \ref{as:X} and \ref{as:network}.
For the last term, by the definition of $\bm S_{n,\ell}$,
\begin{align}
    n\beta_0^\top \bm X_n^\top \bm S_{n,\ell} e_i = \sum_{j \in [n]} \zeta_{n,j}^{(\ell)} \left(M_{j.} \bm X_n \beta_0 \right) M_{ji}.
\end{align}
The term $M_{j.} \bm X_n \beta_0$ is uniformly bounded by Assumptions \ref{as:X} and \ref{as:network}.
Moreover $\zeta_{n,j}^{(\ell)}$ is uniformly bounded in $j$ as shown in \eqref{eq:boundZZ}.
As a result, we obtain $\sup_{i \in [n]: \: n \ge 1}|\sqrt n A_{n,i}|<\infty$.

Next, we verify the condition for the quadratic part.
Since $\sqrt n\mathcal B_n=n\sum_{\ell\in[d_\mu^2]}v_{3,\ell}\bm S_{n,\ell}$, it is enough to show $\| n\bm S_{n,\ell} \|_\infty < \infty$.
By the definition,
\begin{align}
    (n \bm S_{n,\ell})_{ik}
    = \sum_{j \in [n]} \zeta_{n,j}^{(\ell)}M_{ji}M_{jk},
    \quad \sum_{k \in [n]} \left|(n \bm S_{n,\ell})_{ik}\right|
    \le \sum_{j \in [n]} \left|\zeta_{n,j}^{(\ell)}\right| \cdot \left| M_{ji} \right| \sum_{k \in [n]} \left| M_{jk} \right| < \infty
\end{align}
by Assumption \ref{as:network} and the uniform boundedness of $\zeta_{n,j}^{(\ell)}$.
Thus, $\|\sqrt n\mathcal B_n\|_\infty<\infty$.
The same bound holds for $\sqrt n\mathcal B_n^\top$, and hence $\sqrt n\|\mathcal B_n^s\|_\infty<\infty$ uniformly in $n$.

Consequently, Theorem 1 of \cite{kelejian2001asymptotic} gives
\begin{align}
    \frac{Q_n-\bbE[Q_n]}{\{\operatorname{Var}(Q_n)\}^{1/2}}
    \overset{d}{\to}N(0,1).
\end{align}
Finally, since $T_n = n^{-1/2}\{Q_n-\bbE[Q_n]\}$ and $\operatorname{Var}(Q_n) = n\omega_{n,\rho}^2(v)$, it follows that
\begin{align}
    T_n \overset{d}{\to} N(0, \omega_\rho^2(v)),
\end{align}
where $\omega_\rho^2(v) = \lim_{n\to\infty} \omega_{n,\rho}^2(v)$.
Since $v = (v_1^\top, v_2^\top, v_3^\top)^\top$ is arbitrary as long as $\|v\| = 1$, the Cramer--Wold device implies the desired result.
\end{proof}

%%%%%%%%%%%%%%%%%%%%%%%%%%%%%%%%%%%%%%%%%%%%%%%%

\begin{flushleft}
\textbf{Proof of Theorem \ref{thm:SAR}}
\end{flushleft} 

By the consistency result, $\hat\theta_{n,\rho}-\theta_{n,\rho}^*=o_P(1)$.
The first-order conditions give $S_{n,\rho}(\hat\theta_{n,\rho}) = \bm 0$ and $S_{n,\rho}^*(\theta_{n,\rho}^*) = \bm 0$.
By a mean-value expansion around $\theta_{n,\rho}^*$,
\begin{align}
    \bm 0 =  S_{n,\rho}(\theta_{n,\rho}^*) - S_{n,\rho}^*(\theta_{n,\rho}^*) + H_{n,\rho}(\tilde\theta_n)(\hat\theta_{n,\rho} - \theta_{n,\rho}^*),
\end{align}
where $\tilde\theta_n \in [\hat\theta_{n,\rho}, \theta_{n,\rho}^*]$.
Therefore,
\begin{align}
    \sqrt n(\hat\theta_{n,\rho}-\theta_{n,\rho}^*) = - \left(H_{n,\rho}(\tilde\theta_n)\right)^{-1}
    \sqrt n \left[ S_{n,\rho}(\theta_{n,\rho}^*) - S_{n,\rho}^*(\theta_{n,\rho}^*)\right].
\end{align}
By Lemma \ref{lem:SAR_CLT},
\begin{align}
    \bar q_n(\theta_0)-\varphi_n(\theta_0)=O_P(n^{-1/2}),
    \quad
    \pi_{n,1}-\bbE[\pi_{n,1}]=O_P(n^{-1/2}),
    \quad
    \Xi_{n,m,\rho}-\bbE[\Xi_{n,m,\rho}]=O_P(n^{-1/2}).
\end{align}
Since $\Pi_{n,-1}$ is non-stochastic, this also implies $\Pi_n-\bbE[\Pi_n]=O_P(n^{-1/2})$.
Thus, by the continuous mapping theorem and Lemma \ref{lem:verify}, we obtain
\begin{align}
    H_{n,\rho}(\tilde\theta_n) - H_{n,\rho}^*(\theta_{n,\rho}^*) = o_P(1).
\end{align}

Next, observe that
\begin{align}
    S_{n,\rho}(\theta) - S_{n,\rho}^*(\theta)
    & = -2 \left( \Pi_n^\top \Psi_{n,\rho}(\alpha) \bar q_n(\theta) - \bbE[ \Pi_n ]^\top \Psi_{n,\rho}^*(\alpha) \varphi_n(\theta) \right) \\
    & \quad + e_1 \left(\bar q_n(\theta)^\top \dot \Psi_{n,\rho}(\alpha) \bar q_n(\theta) - \varphi_n(\theta)^\top \dot \Psi_{n,\rho}^*(\alpha) \varphi_n(\theta)\right)
\end{align}
We first decompose the first term.
At $\theta=\theta_{n,\rho}^*$,
\begin{align}
    \Pi_n^\top \Psi_{n,\rho}(\alpha_{n,\rho}^*) \bar q_n(\theta_{n,\rho}^*) 
    & = \left[ \bbE[\Pi_n]^\top + \left(\Pi_n-\bbE[\Pi_n]\right)^\top \right]
    \left[\Psi_{n,\rho}^*(\alpha_{n,\rho}^*) + \left(\Psi_{n,\rho}(\alpha_{n,\rho}^*) - \Psi_{n,\rho}^*(\alpha_{n,\rho}^*)\right)
    \right] \\
    &\quad \times \left[ \varphi_n(\theta_{n,\rho}^*) + \left( \bar q_n(\theta_{n,\rho}^*) - \varphi_n(\theta_{n,\rho}^*)\right) \right].
\end{align}
Subtracting $\bbE[\Pi_n]^\top\Psi_{n,\rho}^*(\alpha_{n,\rho}^*)\varphi_n(\theta_{n,\rho}^*)$ from this, we obtain
\begin{align}\label{eq:decomp1}
    \begin{split}
    &\Pi_n^\top \Psi_{n,\rho}(\alpha_{n,\rho}^*) \bar q_n(\theta_{n,\rho}^*)
    -\bbE[\Pi_n]^\top \Psi_{n,\rho}^*(\alpha_{n,\rho}^*) \varphi_n(\theta_{n,\rho}^*) \\
    &= \bbE[\Pi_n]^\top \Psi_{n,\rho}^*(\alpha_{n,\rho}^*) \left( \bar q_n(\theta_{n,\rho}^*)-\varphi_n(\theta_{n,\rho}^*)\right) \\
    &\quad + \left(\Pi_n - \bbE[\Pi_n]\right)^\top  \Psi_{n,\rho}^*(\alpha_{n,\rho}^*)\varphi_n(\theta_{n,\rho}^*) \\
    &\quad + \bbE[\Pi_n]^\top \left(\Psi_{n,\rho}(\alpha_{n,\rho}^*) - \Psi_{n,\rho}^*(\alpha_{n,\rho}^*)\right) \varphi_n(\theta_{n,\rho}^*) \\
    &\quad + \left( \Pi_n - \bbE[\Pi_n] \right)^\top \Psi_{n,\rho}^*(\alpha_{n,\rho}^*) \left(\bar q_n(\theta_{n,\rho}^*)-\varphi_n(\theta_{n,\rho}^*)\right) \\
    &\quad + \left( \Pi_n - \bbE[\Pi_n] \right)^\top \left(\Psi_{n,\rho}(\alpha_{n,\rho}^*) - \Psi_{n,\rho}^*(\alpha_{n,\rho}^*)\right) \varphi_n(\theta_{n,\rho}^*) \\
    &\quad + \bbE[\Pi_n]^\top \left(\Psi_{n,\rho}(\alpha_{n,\rho}^*) - \Psi_{n,\rho}^*(\alpha_{n,\rho}^*)\right) \left(\bar q_n(\theta_{n,\rho}^*) - \varphi_n(\theta_{n,\rho}^*) \right) \\
    &\quad + \left(\Pi_n - \bbE[\Pi_n] \right)^\top \left(\Psi_{n,\rho}(\alpha_{n,\rho}^*) - \Psi_{n,\rho}^*(\alpha_{n,\rho}^*)\right) \left( \bar q_n(\theta_{n,\rho}^*) - \varphi_n(\theta_{n,\rho}^*)\right).
    \end{split}
\end{align}
Here, note that
\begin{align}\label{eq:Psidecomp}
    \begin{split}
    \Psi_{n,\rho}(\alpha_{n,\rho}^*)-\Psi_{n,\rho}^*(\alpha_{n,\rho}^*)
    &= -(\alpha_{n,\rho}^*)^2 \Psi_{n,\rho}^*(\alpha_{n,\rho}^*) \left(\Xi_{n,m,\rho}-\bbE[\Xi_{n,m,\rho}]\right) \Psi_{n,\rho}(\alpha_{n,\rho}^*) \\
    &= -(\alpha_{n,\rho}^*)^2 \Psi_{n,\rho}^*(\alpha_{n,\rho}^*) \left(\Xi_{n,m,\rho}-\bbE[\Xi_{n,m,\rho}]\right) \Psi_{n,\rho}^*(\alpha_{n,\rho}^*) + O_P(n^{-1}) \\
    &= O_P(n^{-1/2}).
    \end{split}
\end{align}
by Lemma \ref{lem:SAR_CLT}.
Moreover, by Lemma \ref{lem:SAR_CLT}, we can see that the last four terms in \eqref{eq:decomp1} are $O_P(n^{-1})$.
Therefore, since
\begin{align}\label{eq:qstar}
    \begin{split}
    \bar q_n(\theta_{n,\rho}^*)-\varphi_n(\theta_{n,\rho}^*)
    &= \left(\bar q_n(\theta_0)-\varphi_n(\theta_0)\right) + \left(\Pi_n-\bbE[\Pi_n]\right) (\theta_0-\theta_{n,\rho}^*) \\
    &= \left(\bar q_n(\theta_0)-\varphi_n(\theta_0)\right) + (\alpha_0-\alpha_{n,\rho}^*) \left(\pi_{n,1}-\bbE[\pi_{n,1}]\right),
    \end{split}
\end{align}
we have
\begin{align}
    &\Pi_n^\top \Psi_{n,\rho}(\alpha_{n,\rho}^*) \bar q_n(\theta_{n,\rho}^*)
    -\bbE[\Pi_n]^\top \Psi_{n,\rho}^*(\alpha_{n,\rho}^*) \varphi_n(\theta_{n,\rho}^*) \\
    &= \bbE[\Pi_n]^\top \Psi_{n,\rho}^*(\alpha_{n,\rho}^*) \left( \bar q_n(\theta_{n,\rho}^*)-\varphi_n(\theta_{n,\rho}^*)\right) \\
    &\quad + \left(\Pi_n - \bbE[\Pi_n]\right)^\top  \Psi_{n,\rho}^*(\alpha_{n,\rho}^*)\varphi_n(\theta_{n,\rho}^*) \\
    &\quad + \bbE[\Pi_n]^\top \left(\Psi_{n,\rho}(\alpha_{n,\rho}^*) - \Psi_{n,\rho}^*(\alpha_{n,\rho}^*)\right) \varphi_n(\theta_{n,\rho}^*) + O_P(n^{-1}) \\
    &= \bbE[\Pi_n]^\top \Psi_{n,\rho}^*(\alpha_{n,\rho}^*) \left(\bar q_n(\theta_0) - \varphi_n(\theta_0)\right) \\
    &\quad + (\alpha_0-\alpha_{n,\rho}^*) \bbE[\Pi_n]^\top \Psi_{n,\rho}^*(\alpha_{n,\rho}^*) \left(\pi_{n,1}-\bbE[\pi_{n,1}]\right) + e_1 \varphi_n(\theta_{n,\rho}^*)^\top \Psi_{n,\rho}^*(\alpha_{n,\rho}^*) \left(\pi_{n,1}-\bbE[\pi_{n,1}]\right)  \\
    &\quad -(\alpha_{n,\rho}^*)^2\bbE[\Pi_n]^\top \Psi_{n,\rho}^*(\alpha_{n,\rho}^*) \left(\Xi_{n,m,\rho}-\bbE[\Xi_{n,m,\rho}]\right) \Psi_{n,\rho}^*(\alpha_{n,\rho}^*)\varphi_n(\theta_{n,\rho}^*) + O_P(n^{-1}).
\end{align}

\bigskip

Next, we expand the quadratic term.
Write
\begin{align}
    \bar q_n(\theta_{n,\rho}^*)^\top \dot\Psi_{n,\rho}(\alpha_{n,\rho}^*) \bar q_n(\theta_{n,\rho}^*) 
    &=
    \left[
        \varphi_n(\theta_{n,\rho}^*)
        +\left(\bar q_n(\theta_{n,\rho}^*)-\varphi_n(\theta_{n,\rho}^*)\right)
    \right]^\top
    \left[
        \dot\Psi_{n,\rho}^*(\alpha_{n,\rho}^*)
        +\left(\dot\Psi_{n,\rho}(\alpha_{n,\rho}^*)-\dot\Psi_{n,\rho}^*(\alpha_{n,\rho}^*)\right)
    \right] \\
    &\quad
    \times
    \left[
        \varphi_n(\theta_{n,\rho}^*)
        +\left(\bar q_n(\theta_{n,\rho}^*)-\varphi_n(\theta_{n,\rho}^*)\right)
    \right].
\end{align}
Subtracting $\varphi_n(\theta_{n,\rho}^*)^\top \dot\Psi_{n,\rho}^*(\alpha_{n,\rho}^*) \varphi_n(\theta_{n,\rho}^*)$ from this, we obtain
\begin{align}
    &\bar q_n(\theta_{n,\rho}^*)^\top \dot\Psi_{n,\rho}(\alpha_{n,\rho}^*) \bar q_n(\theta_{n,\rho}^*)
    -\varphi_n(\theta_{n,\rho}^*)^\top \dot\Psi_{n,\rho}^*(\alpha_{n,\rho}^*) \varphi_n(\theta_{n,\rho}^*) \\
    &=
    2\varphi_n(\theta_{n,\rho}^*)^\top
    \dot\Psi_{n,\rho}^*(\alpha_{n,\rho}^*)
    \left(\bar q_n(\theta_{n,\rho}^*)-\varphi_n(\theta_{n,\rho}^*)\right) \\
    &\quad
    +\varphi_n(\theta_{n,\rho}^*)^\top
    \left(\dot\Psi_{n,\rho}(\alpha_{n,\rho}^*)-\dot\Psi_{n,\rho}^*(\alpha_{n,\rho}^*)\right)
    \varphi_n(\theta_{n,\rho}^*) \\
    &\quad
    +\left(\bar q_n(\theta_{n,\rho}^*)-\varphi_n(\theta_{n,\rho}^*)\right)^\top
    \dot\Psi_{n,\rho}^*(\alpha_{n,\rho}^*)
    \left(\bar q_n(\theta_{n,\rho}^*)-\varphi_n(\theta_{n,\rho}^*)\right) \\
    &\quad
    +2\varphi_n(\theta_{n,\rho}^*)^\top
    \left(\dot\Psi_{n,\rho}(\alpha_{n,\rho}^*)-\dot\Psi_{n,\rho}^*(\alpha_{n,\rho}^*)\right)
    \left(\bar q_n(\theta_{n,\rho}^*)-\varphi_n(\theta_{n,\rho}^*)\right) \\
    &\quad
    +\left(\bar q_n(\theta_{n,\rho}^*)-\varphi_n(\theta_{n,\rho}^*)\right)^\top
    \left(\dot\Psi_{n,\rho}(\alpha_{n,\rho}^*)-\dot\Psi_{n,\rho}^*(\alpha_{n,\rho}^*)\right)
    \left(\bar q_n(\theta_{n,\rho}^*)-\varphi_n(\theta_{n,\rho}^*)\right).
\end{align}
Note that \eqref{eq:qstar} with Lemma \ref{lem:SAR_CLT} implies that $\bar q_n(\theta_{n,\rho}^*) - \varphi_n(\theta_{n,\rho}^*) = O_P(n^{-1/2})$.
Also, it is straightforward to see that $\dot\Psi_{n,\rho}(\alpha_{n,\rho}^*)-\dot\Psi_{n,\rho}^*(\alpha_{n,\rho}^*)=O_P(n^{-1/2})$.
Therefore,
\begin{align}
    &\bar q_n(\theta_{n,\rho}^*)^\top \dot\Psi_{n,\rho}(\alpha_{n,\rho}^*) \bar q_n(\theta_{n,\rho}^*)
    -\varphi_n(\theta_{n,\rho}^*)^\top \dot\Psi_{n,\rho}^*(\alpha_{n,\rho}^*) \varphi_n(\theta_{n,\rho}^*) \\
    &= 2\varphi_n(\theta_{n,\rho}^*)^\top \dot\Psi_{n,\rho}^*(\alpha_{n,\rho}^*) \left(\bar q_n(\theta_{n,\rho}^*)-\varphi_n(\theta_{n,\rho}^*)\right) \\
    &\quad 
    +\varphi_n(\theta_{n,\rho}^*)^\top
    \left(\dot\Psi_{n,\rho}(\alpha_{n,\rho}^*)-\dot\Psi_{n,\rho}^*(\alpha_{n,\rho}^*)\right)
    \varphi_n(\theta_{n,\rho}^*)+O_P(n^{-1}).
\end{align}
Moreover, since $\dot\Psi_{n,\rho}(\alpha)=-2\alpha \Psi_{n,\rho}(\alpha) \Xi_{n,m,\rho} \Psi_{n,\rho}(\alpha)$, we have
\begin{align}
    \dot\Psi_{n,\rho}(\alpha_{n,\rho}^*)-\dot\Psi_{n,\rho}^*(\alpha_{n,\rho}^*)
    &=
    -2\alpha_{n,\rho}^*
    \left(
        \Psi_{n,\rho}(\alpha_{n,\rho}^*) \Xi_{n,m,\rho} \Psi_{n,\rho}(\alpha_{n,\rho}^*)
        -
        \Psi_{n,\rho}^*(\alpha_{n,\rho}^*) \bbE[\Xi_{n,m,\rho}] \Psi_{n,\rho}^*(\alpha_{n,\rho}^*)
    \right).
\end{align}
A similar decomposition as above gives
\begin{align}
    &\Psi_{n,\rho}(\alpha_{n,\rho}^*) \Xi_{n,m,\rho} \Psi_{n,\rho}(\alpha_{n,\rho}^*)
    -
    \Psi_{n,\rho}^*(\alpha_{n,\rho}^*) \bbE[\Xi_{n,m,\rho}] \Psi_{n,\rho}^*(\alpha_{n,\rho}^*) \\
    &=
    \Psi_{n,\rho}^*(\alpha_{n,\rho}^*)
    \left(\Xi_{n,m,\rho}-\bbE[\Xi_{n,m,\rho}]\right)
    \Psi_{n,\rho}^*(\alpha_{n,\rho}^*) \\
    &\quad
    +
    \left(\Psi_{n,\rho}(\alpha_{n,\rho}^*)-\Psi_{n,\rho}^*(\alpha_{n,\rho}^*)\right)
    \bbE[\Xi_{n,m,\rho}]
    \Psi_{n,\rho}^*(\alpha_{n,\rho}^*) \\
    &\quad
    +
    \Psi_{n,\rho}^*(\alpha_{n,\rho}^*)
    \bbE[\Xi_{n,m,\rho}]
    \left(\Psi_{n,\rho}(\alpha_{n,\rho}^*)-\Psi_{n,\rho}^*(\alpha_{n,\rho}^*)\right) \\
    &\quad
    +
    \left(\Psi_{n,\rho}(\alpha_{n,\rho}^*)-\Psi_{n,\rho}^*(\alpha_{n,\rho}^*)\right)
    \left(\Xi_{n,m,\rho}-\bbE[\Xi_{n,m,\rho}]\right)
    \Psi_{n,\rho}^*(\alpha_{n,\rho}^*) \\
    &\quad
    +
    \Psi_{n,\rho}^*(\alpha_{n,\rho}^*)
    \left(\Xi_{n,m,\rho}-\bbE[\Xi_{n,m,\rho}]\right)
    \left(\Psi_{n,\rho}(\alpha_{n,\rho}^*)-\Psi_{n,\rho}^*(\alpha_{n,\rho}^*)\right) \\
    &\quad
    +
    \left(\Psi_{n,\rho}(\alpha_{n,\rho}^*)-\Psi_{n,\rho}^*(\alpha_{n,\rho}^*)\right)
    \bbE[\Xi_{n,m,\rho}]
    \left(\Psi_{n,\rho}(\alpha_{n,\rho}^*)-\Psi_{n,\rho}^*(\alpha_{n,\rho}^*)\right) \\
    &\quad
    +
    \left(\Psi_{n,\rho}(\alpha_{n,\rho}^*)-\Psi_{n,\rho}^*(\alpha_{n,\rho}^*)\right)
    \left(\Xi_{n,m,\rho}-\bbE[\Xi_{n,m,\rho}]\right)
    \left(\Psi_{n,\rho}(\alpha_{n,\rho}^*)-\Psi_{n,\rho}^*(\alpha_{n,\rho}^*)\right) \\
    &=
    \Psi_{n,\rho}^*(\alpha_{n,\rho}^*)
    \left(\Xi_{n,m,\rho}-\bbE[\Xi_{n,m,\rho}]\right)
    \Psi_{n,\rho}^*(\alpha_{n,\rho}^*) \\
    &\quad
    +
    \left(\Psi_{n,\rho}(\alpha_{n,\rho}^*)-\Psi_{n,\rho}^*(\alpha_{n,\rho}^*)\right)
    \bbE[\Xi_{n,m,\rho}]
    \Psi_{n,\rho}^*(\alpha_{n,\rho}^*) \\
    &\quad
    +
    \Psi_{n,\rho}^*(\alpha_{n,\rho}^*)
    \bbE[\Xi_{n,m,\rho}]
    \left(\Psi_{n,\rho}(\alpha_{n,\rho}^*)-\Psi_{n,\rho}^*(\alpha_{n,\rho}^*)\right)
    +O_P(n^{-1}).
\end{align}
Moreover, by \eqref{eq:Psidecomp},
\begin{align}
    &\Psi_{n,\rho}(\alpha_{n,\rho}^*) \Xi_{n,m,\rho} \Psi_{n,\rho}(\alpha_{n,\rho}^*)
    -
    \Psi_{n,\rho}^*(\alpha_{n,\rho}^*) \bbE[\Xi_{n,m,\rho}] \Psi_{n,\rho}^*(\alpha_{n,\rho}^*) \\
    &=
    \Psi_{n,\rho}^*(\alpha_{n,\rho}^*)
    \left(\Xi_{n,m,\rho}-\bbE[\Xi_{n,m,\rho}]\right)
    \Psi_{n,\rho}^*(\alpha_{n,\rho}^*) \\
    &\quad
    -(\alpha_{n,\rho}^*)^2
    \Psi_{n,\rho}^*(\alpha_{n,\rho}^*)
    \left(\Xi_{n,m,\rho}-\bbE[\Xi_{n,m,\rho}]\right)
    \Psi_{n,\rho}^*(\alpha_{n,\rho}^*)
    \bbE[\Xi_{n,m,\rho}]
    \Psi_{n,\rho}^*(\alpha_{n,\rho}^*) \\
    &\quad
    -(\alpha_{n,\rho}^*)^2
    \Psi_{n,\rho}^*(\alpha_{n,\rho}^*)
    \bbE[\Xi_{n,m,\rho}]
    \Psi_{n,\rho}^*(\alpha_{n,\rho}^*)
    \left(\Xi_{n,m,\rho}-\bbE[\Xi_{n,m,\rho}]\right)
    \Psi_{n,\rho}^*(\alpha_{n,\rho}^*)
    +O_P(n^{-1}).
\end{align}
Therefore,
\begin{align}
    \dot\Psi_{n,\rho}(\alpha_{n,\rho}^*)-\dot\Psi_{n,\rho}^*(\alpha_{n,\rho}^*)
    &=
    -2\alpha_{n,\rho}^*
    \Psi_{n,\rho}^*(\alpha_{n,\rho}^*)
    \left(\Xi_{n,m,\rho}-\bbE[\Xi_{n,m,\rho}]\right)
    \Psi_{n,\rho}^*(\alpha_{n,\rho}^*) \\
    &\quad
    +2(\alpha_{n,\rho}^*)^3
    \Psi_{n,\rho}^*(\alpha_{n,\rho}^*)
    \left(\Xi_{n,m,\rho}-\bbE[\Xi_{n,m,\rho}]\right)
    \Psi_{n,\rho}^*(\alpha_{n,\rho}^*)
    \bbE[\Xi_{n,m,\rho}]
    \Psi_{n,\rho}^*(\alpha_{n,\rho}^*) \\
    &\quad
    +2(\alpha_{n,\rho}^*)^3
    \Psi_{n,\rho}^*(\alpha_{n,\rho}^*)
    \bbE[\Xi_{n,m,\rho}]
    \Psi_{n,\rho}^*(\alpha_{n,\rho}^*)
    \left(\Xi_{n,m,\rho}-\bbE[\Xi_{n,m,\rho}]\right)
    \Psi_{n,\rho}^*(\alpha_{n,\rho}^*)
    +O_P(n^{-1}).
\end{align}
Combining these results, in view of \eqref{eq:qstar}, we obtain
\begin{align}
    &\bar q_n(\theta_{n,\rho}^*)^\top \dot\Psi_{n,\rho}(\alpha_{n,\rho}^*) \bar q_n(\theta_{n,\rho}^*) - \varphi_n(\theta_{n,\rho}^*)^\top \dot\Psi_{n,\rho}^*(\alpha_{n,\rho}^*) \varphi_n(\theta_{n,\rho}^*) \\
    &=
    2\varphi_n(\theta_{n,\rho}^*)^\top \dot\Psi_{n,\rho}^*(\alpha_{n,\rho}^*) \left(\bar q_n(\theta_0)-\varphi_n(\theta_0)\right) \\
    &\quad
    +2(\alpha_0-\alpha_{n,\rho}^*)\varphi_n(\theta_{n,\rho}^*)^\top \dot\Psi_{n,\rho}^*(\alpha_{n,\rho}^*) \left(\pi_{n,1}-\bbE[\pi_{n,1}]\right) \\
    &\quad
    - 2 \alpha_{n,\rho}^* \varphi_n(\theta_{n,\rho}^*)^\top \Psi_{n,\rho}^*(\alpha_{n,\rho}^*) \left(\Xi_{n,m,\rho}-\bbE[\Xi_{n,m,\rho}]\right) \Psi_{n,\rho}^*(\alpha_{n,\rho}^*) \varphi_n(\theta_{n,\rho}^*) \\
    &\quad
    + 2 (\alpha_{n,\rho}^*)^3 \varphi_n(\theta_{n,\rho}^*)^\top \Psi_{n,\rho}^*(\alpha_{n,\rho}^*) \left(\Xi_{n,m,\rho}-\bbE[\Xi_{n,m,\rho}]\right) \Psi_{n,\rho}^*(\alpha_{n,\rho}^*) \bbE[\Xi_{n,m,\rho}] \Psi_{n,\rho}^*(\alpha_{n,\rho}^*) \varphi_n(\theta_{n,\rho}^*) \\
    &\quad
    + 2(\alpha_{n,\rho}^*)^3 \varphi_n(\theta_{n,\rho}^*)^\top \Psi_{n,\rho}^*(\alpha_{n,\rho}^*) \bbE[\Xi_{n,m,\rho}] \Psi_{n,\rho}^*(\alpha_{n,\rho}^*) \left(\Xi_{n,m,\rho}-\bbE[\Xi_{n,m,\rho}]\right) \Psi_{n,\rho}^*(\alpha_{n,\rho}^*) \varphi_n(\theta_{n,\rho}^*)+O_P(n^{-1}).
\end{align}
Consequently, we have
\small\begin{align}\label{eq:scorefull}
    \begin{split}
    &S_{n,\rho}(\theta_{n,\rho}^*)-S_{n,\rho}^*(\theta_{n,\rho}^*) \\
    &= -2 \bbE[\Pi_n]^\top \Psi_{n,\rho}^*(\alpha_{n,\rho}^*) \left(\bar q_n(\theta_0)-\varphi_n(\theta_0)\right) \\
    &\quad + 2 e_1 \varphi_n(\theta_{n,\rho}^*)^\top \dot\Psi_{n,\rho}^*(\alpha_{n,\rho}^*)  \left(\bar q_n(\theta_0)-\varphi_n(\theta_0)\right)\\
    & \\
    &\quad -2 (\alpha_0-\alpha_{n,\rho}^*) \bbE[\Pi_n]^\top \Psi_{n,\rho}^*(\alpha_{n,\rho}^*) \left(\pi_{n,1}-\bbE[\pi_{n,1}]\right) \\
    &\quad -2 e_1 \varphi_n(\theta_{n,\rho}^*)^\top \Psi_{n,\rho}^*(\alpha_{n,\rho}^*) \left(\pi_{n,1}-\bbE[\pi_{n,1}]\right) \\
    &\quad + 2 e_1 (\alpha_0-\alpha_{n,\rho}^*) \varphi_n(\theta_{n,\rho}^*)^\top \dot\Psi_{n,\rho}^*(\alpha_{n,\rho}^*) \left(\pi_{n,1}-\bbE[\pi_{n,1}]\right) \\
    & \\
    &\quad + 2 (\alpha_{n,\rho}^*)^2 \bbE[\Pi_n]^\top \Psi_{n,\rho}^*(\alpha_{n,\rho}^*) \left(\Xi_{n,m,\rho}-\bbE[\Xi_{n,m,\rho}]\right) \Psi_{n,\rho}^*(\alpha_{n,\rho}^*) \varphi_n(\theta_{n,\rho}^*) \\
    &\quad - 2 e_1 \alpha_{n,\rho}^* \varphi_n(\theta_{n,\rho}^*)^\top \Psi_{n,\rho}^*(\alpha_{n,\rho}^*) \left(\Xi_{n,m,\rho}-\bbE[\Xi_{n,m,\rho}]\right) \Psi_{n,\rho}^*(\alpha_{n,\rho}^*) \varphi_n(\theta_{n,\rho}^*) \\
    &\quad + 2 e_1 (\alpha_{n,\rho}^*)^3 \varphi_n(\theta_{n,\rho}^*)^\top \Psi_{n,\rho}^*(\alpha_{n,\rho}^*) \left(\Xi_{n,m,\rho}-\bbE[\Xi_{n,m,\rho}]\right) \Psi_{n,\rho}^*(\alpha_{n,\rho}^*) \bbE[\Xi_{n,m,\rho}] \Psi_{n,\rho}^*(\alpha_{n,\rho}^*) \varphi_n(\theta_{n,\rho}^*) \\
    &\quad + 2 e_1 (\alpha_{n,\rho}^*)^3 \varphi_n(\theta_{n,\rho}^*)^\top \Psi_{n,\rho}^*(\alpha_{n,\rho}^*) \bbE[\Xi_{n,m,\rho}] \Psi_{n,\rho}^*(\alpha_{n,\rho}^*) \left(\Xi_{n,m,\rho}-\bbE[\Xi_{n,m,\rho}]\right) \Psi_{n,\rho}^*(\alpha_{n,\rho}^*) \varphi_n(\theta_{n,\rho}^*) + O_P(n^{-1}).
    \end{split}
\end{align}\normalsize
This implies that $S_{n,\rho}(\theta_{n,\rho}^*)-S_{n,\rho}^*(\theta_{n,\rho}^*)$ is a linear combination of $\bar q_n(\theta_0)-\varphi_n(\theta_0)$, $\pi_{n,1}-\bbE[\pi_{n,1}]$, and $\operatorname{vec}(\Xi_{n,m,\rho}-\bbE[\Xi_{n,m,\rho}])$, up to an $O_P(n^{-1})$ remainder.
Hence, there exists $\mathcal J_{n,\rho}$ such that
\begin{align}
    S_{n,\rho}(\theta_{n,\rho}^*)-S_{n,\rho}^*(\theta_{n,\rho}^*) = \mathcal J_{n,\rho}
    \begin{pmatrix}
        \bar q_n(\theta_0)-\varphi_n(\theta_0) \\
        \pi_{n,1}-\bbE[\pi_{n,1}] \\
        \operatorname{vec}(\Xi_{n,m,\rho}-\bbE[\Xi_{n,m,\rho}])
    \end{pmatrix}
    + O_P(n^{-1}),
\end{align}
where
\begin{align}
    \mathcal J_{n,\rho} = \left(\mathcal J_{q,n,\rho}, \mathcal J_{\pi,n,\rho}, \mathcal J_{\Xi,n,\rho}\right) \in \bbR^{d_\theta\times(d_\mu + d_\mu + d_\mu^2)}.
\end{align}
The exact form of the matrix $\mathcal J_{n,\rho}$ can be obtained by collecting the corresponding components in \eqref{eq:scorefull}.
For example, $\mathcal J_{q,n,\rho} = 2(e_1 \varphi_n(\theta_{n,\rho}^*)^\top \dot\Psi_{n,\rho}^*(\alpha_{n,\rho}^*)  - \bbE[\Pi_n]^\top \Psi_{n,\rho}^*(\alpha_{n,\rho}^*))$.

Finally, we obtain
\begin{align}
    \sqrt n(\hat\theta_{n,\rho}-\theta_{n,\rho}^*)
    & = - \left(H_{n, \rho}^*(\theta^*_{n, \rho}) + o_P(1)\right)^{-1}
    \mathcal J_{n,\rho} \sqrt n 
    \begin{pmatrix}
        \bar q_n(\theta_0)-\varphi_n(\theta_0) \\
        \pi_{n,1}-\bbE[\pi_{n,1}] \\
        \operatorname{vec}(\Xi_{n,m,\rho}-\bbE[\Xi_{n,m,\rho}])
    \end{pmatrix}
    + o_P(1) \\
    & \overset{d}{\to} N(\bm 0, (H_\rho^*)^{-1} \mathcal J_\rho \Sigma_\rho^\text{SAR} \mathcal J_\rho^\top (H_\rho^*)^{-1})
\end{align}
by Lemma \ref{lem:SAR_CLT}.

\qed

\section{Supplementary Tables}\label{app:table}

\begin{table}[ht]
    \footnotesize
  \centering
  \caption{Simulation results for estimating $\alpha_0$ ($\Omega_n = I_{d_\mu}$)}
  \label{tab:alpha_I}
\begin{tabular}{lllcccccccccccc}
\toprule
 &  &  &
 \multicolumn{2}{c}{Naive ($n=500$)} &
 \multicolumn{2}{c}{Naive ($n=1000$)} &
 \multicolumn{2}{c}{FW ($n=500$)} &
 \multicolumn{2}{c}{FW ($n=1000$)} &
 \multicolumn{2}{c}{NA ($n=500$)} &
 \multicolumn{2}{c}{NA ($n=1000$)} \\
$p_{\mathrm{drop}}$ & $p_{\mathrm{add}}$ & $\rho$
& Bias & RMSE & Bias & RMSE
& Bias & RMSE & Bias & RMSE
& Bias & RMSE & Bias & RMSE \\
\midrule
  0 & 0 & 0.125 & 0.000 & 0.024 & -0.001 & 0.018 & 0.000 & 0.023 & -0.001 & 0.016 & 0.002 & 0.023 & 0.000 & 0.016 \\
   &  & 256 & 0.000 & 0.024 & -0.001 & 0.018 & 0.000 & 0.023 & -0.001 & 0.016 & 0.002 & 0.023 & 0.000 & 0.017 \\
   &  & 4096 & 0.000 & 0.024 & -0.001 & 0.018 & 0.000 & 0.023 & -0.001 & 0.016 & 0.001 & 0.023 & 0.000 & 0.017 \\
   & 0.05 & 0.125 & -0.151 & 0.199 & -0.173 & 0.198 & -0.133 & 0.163 & -0.158 & 0.174 & -0.071 & 0.113 & -0.118 & 0.138 \\
   &  & 256 & -0.151 & 0.199 & -0.173 & 0.198 & -0.137 & 0.168 & -0.161 & 0.177 & -0.089 & 0.130 & -0.132 & 0.152 \\
   &  & 4096 & -0.151 & 0.199 & -0.173 & 0.198 & -0.147 & 0.186 & -0.169 & 0.189 & -0.141 & 0.186 & -0.167 & 0.190 \\
   & 0.1 & 0.125 & -0.142 & 0.223 & -0.142 & 0.180 & -0.129 & 0.183 & -0.126 & 0.156 & -0.011 & 0.141 & -0.064 & 0.110 \\
   &  & 256 & -0.142 & 0.223 & -0.142 & 0.180 & -0.132 & 0.190 & -0.132 & 0.162 & -0.062 & 0.158 & -0.100 & 0.138 \\
   &  & 4096 & -0.142 & 0.223 & -0.142 & 0.180 & -0.139 & 0.210 & -0.140 & 0.174 & -0.136 & 0.213 & -0.138 & 0.175 \\
  0.2 & 0 & 0.125 & -0.106 & 0.109 & -0.121 & 0.123 & -0.108 & 0.111 & -0.124 & 0.125 & -0.105 & 0.107 & -0.122 & 0.123 \\
   &  & 256 & -0.106 & 0.109 & -0.121 & 0.123 & -0.108 & 0.111 & -0.124 & 0.125 & -0.105 & 0.108 & -0.122 & 0.123 \\
   &  & 4096 & -0.106 & 0.109 & -0.121 & 0.123 & -0.107 & 0.110 & -0.123 & 0.124 & -0.106 & 0.108 & -0.122 & 0.123 \\
   & 0.05 & 0.125 & -0.233 & 0.265 & -0.265 & 0.280 & -0.198 & 0.219 & -0.261 & 0.269 & -0.119 & 0.151 & -0.215 & 0.226 \\
   &  & 256 & -0.233 & 0.265 & -0.265 & 0.280 & -0.205 & 0.226 & -0.261 & 0.270 & -0.147 & 0.177 & -0.231 & 0.242 \\
   &  & 4096 & -0.233 & 0.265 & -0.265 & 0.280 & -0.224 & 0.250 & -0.264 & 0.275 & -0.222 & 0.253 & -0.262 & 0.275 \\
   & 0.1 & 0.125 & -0.226 & 0.282 & -0.237 & 0.260 & -0.201 & 0.239 & -0.216 & 0.236 & -0.053 & 0.158 & -0.135 & 0.165 \\
   &  & 256 & -0.226 & 0.282 & -0.237 & 0.260 & -0.209 & 0.249 & -0.224 & 0.243 & -0.130 & 0.203 & -0.190 & 0.215 \\
   &  & 4096 & -0.226 & 0.282 & -0.237 & 0.260 & -0.222 & 0.270 & -0.234 & 0.255 & -0.221 & 0.275 & -0.234 & 0.257 \\
  0.3 & 0 & 0.125 & -0.159 & 0.161 & -0.178 & 0.179 & -0.159 & 0.161 & -0.180 & 0.181 & -0.155 & 0.157 & -0.178 & 0.179 \\
   &  & 256 & -0.159 & 0.161 & -0.178 & 0.179 & -0.159 & 0.161 & -0.180 & 0.181 & -0.155 & 0.157 & -0.178 & 0.179 \\
   &  & 4096 & -0.159 & 0.161 & -0.178 & 0.179 & -0.159 & 0.161 & -0.180 & 0.181 & -0.157 & 0.159 & -0.178 & 0.179 \\
   & 0.05 & 0.125 & -0.281 & 0.305 & -0.310 & 0.322 & -0.249 & 0.266 & -0.310 & 0.317 & -0.161 & 0.188 & -0.260 & 0.268 \\
   &  & 256 & -0.281 & 0.305 & -0.310 & 0.322 & -0.255 & 0.272 & -0.310 & 0.317 & -0.194 & 0.220 & -0.279 & 0.288 \\
   &  & 4096 & -0.281 & 0.305 & -0.310 & 0.322 & -0.272 & 0.292 & -0.310 & 0.319 & -0.271 & 0.296 & -0.307 & 0.319 \\
   & 0.1 & 0.125 & -0.275 & 0.321 & -0.284 & 0.303 & -0.244 & 0.277 & -0.262 & 0.279 & -0.069 & 0.177 & -0.164 & 0.194 \\
   &  & 256 & -0.275 & 0.321 & -0.284 & 0.303 & -0.254 & 0.288 & -0.270 & 0.286 & -0.166 & 0.235 & -0.235 & 0.257 \\
   &  & 4096 & -0.275 & 0.321 & -0.284 & 0.303 & -0.270 & 0.310 & -0.281 & 0.298 & -0.270 & 0.316 & -0.281 & 0.300 \\
\bottomrule
\end{tabular}
\begin{minipage}{0.98\textwidth}
\textit{Note.} Naive, FW, and NA denote naive GMM, fixed-weight NA-GMM, and original NA-GMM, respectively.
\end{minipage}
\end{table}

\begin{table}[ht]
    \footnotesize
  \centering
  \caption{Simulation results for estimating $\alpha_0$ ($\Omega_n =$ 2SLS weight)}
  \label{tab:alpha_S}
\begin{tabular}{lllcccccccccccc}
\toprule
 &  &  &
 \multicolumn{2}{c}{2SLS ($n=500$)} &
 \multicolumn{2}{c}{2SLS ($n=1000$)} &
 \multicolumn{2}{c}{FW ($n=500$)} &
 \multicolumn{2}{c}{FW ($n=1000$)} &
 \multicolumn{2}{c}{NA ($n=500$)} &
 \multicolumn{2}{c}{NA ($n=1000$)} \\
$p_{\mathrm{drop}}$ & $p_{\mathrm{add}}$ & $\rho$
& Bias & RMSE & Bias & RMSE
& Bias & RMSE & Bias & RMSE
& Bias & RMSE & Bias & RMSE \\
\midrule
  0 & 0 & 0.125 & 0.000 & 0.023 & -0.001 & 0.016 & 0.000 & 0.023 & -0.001 & 0.016 & 0.002 & 0.023 & 0.000 & 0.016 \\
   &  & 256 & 0.000 & 0.023 & -0.001 & 0.016 & 0.000 & 0.023 & -0.001 & 0.016 & 0.002 & 0.023 & 0.000 & 0.016 \\
   &  & 4096 & 0.000 & 0.023 & -0.001 & 0.016 & 0.000 & 0.023 & -0.001 & 0.016 & 0.002 & 0.023 & 0.000 & 0.016 \\
   & 0.05 & 0.125 & -0.135 & 0.165 & -0.159 & 0.175 & -0.133 & 0.163 & -0.158 & 0.174 & -0.071 & 0.113 & -0.118 & 0.138 \\
   &  & 256 & -0.135 & 0.165 & -0.159 & 0.175 & -0.133 & 0.163 & -0.158 & 0.174 & -0.072 & 0.114 & -0.118 & 0.139 \\
   &  & 4096 & -0.135 & 0.165 & -0.159 & 0.175 & -0.133 & 0.163 & -0.158 & 0.174 & -0.084 & 0.123 & -0.126 & 0.146 \\
   & 0.1 & 0.125 & -0.129 & 0.184 & -0.127 & 0.156 & -0.129 & 0.183 & -0.126 & 0.156 & -0.011 & 0.141 & -0.064 & 0.110 \\
   &  & 256 & -0.129 & 0.184 & -0.127 & 0.156 & -0.129 & 0.183 & -0.126 & 0.156 & -0.013 & 0.141 & -0.065 & 0.111 \\
   &  & 4096 & -0.129 & 0.184 & -0.127 & 0.156 & -0.129 & 0.183 & -0.126 & 0.156 & -0.034 & 0.144 & -0.076 & 0.120 \\
  0.2 & 0 & 0.125 & -0.108 & 0.111 & -0.124 & 0.125 & -0.108 & 0.111 & -0.124 & 0.125 & -0.105 & 0.107 & -0.122 & 0.123 \\
   &  & 256 & -0.108 & 0.111 & -0.124 & 0.125 & -0.108 & 0.111 & -0.124 & 0.125 & -0.105 & 0.107 & -0.122 & 0.123 \\
   &  & 4096 & -0.108 & 0.111 & -0.124 & 0.125 & -0.108 & 0.111 & -0.124 & 0.125 & -0.105 & 0.108 & -0.122 & 0.123 \\
   & 0.05 & 0.125 & -0.201 & 0.222 & -0.261 & 0.270 & -0.198 & 0.219 & -0.261 & 0.269 & -0.119 & 0.151 & -0.215 & 0.226 \\
   &  & 256 & -0.201 & 0.222 & -0.261 & 0.270 & -0.198 & 0.219 & -0.261 & 0.269 & -0.120 & 0.152 & -0.216 & 0.227 \\
   &  & 4096 & -0.201 & 0.222 & -0.261 & 0.270 & -0.199 & 0.220 & -0.261 & 0.269 & -0.139 & 0.168 & -0.229 & 0.239 \\
   & 0.1 & 0.125 & -0.202 & 0.241 & -0.217 & 0.236 & -0.201 & 0.239 & -0.216 & 0.236 & -0.053 & 0.158 & -0.135 & 0.164 \\
   &  & 256 & -0.202 & 0.241 & -0.217 & 0.236 & -0.201 & 0.239 & -0.216 & 0.236 & -0.056 & 0.159 & -0.137 & 0.166 \\
   &  & 4096 & -0.202 & 0.241 & -0.217 & 0.236 & -0.201 & 0.239 & -0.216 & 0.236 & -0.092 & 0.171 & -0.157 & 0.184 \\
  0.3 & 0 & 0.125 & -0.159 & 0.161 & -0.180 & 0.181 & -0.159 & 0.161 & -0.180 & 0.181 & -0.155 & 0.157 & -0.178 & 0.179 \\
   &  & 256 & -0.159 & 0.161 & -0.180 & 0.181 & -0.159 & 0.161 & -0.180 & 0.181 & -0.155 & 0.157 & -0.178 & 0.179 \\
   &  & 4096 & -0.159 & 0.161 & -0.180 & 0.181 & -0.159 & 0.161 & -0.180 & 0.181 & -0.155 & 0.158 & -0.178 & 0.179 \\
   & 0.05 & 0.125 & -0.252 & 0.269 & -0.310 & 0.317 & -0.249 & 0.266 & -0.310 & 0.317 & -0.161 & 0.188 & -0.260 & 0.268 \\
   &  & 256 & -0.252 & 0.269 & -0.310 & 0.317 & -0.249 & 0.266 & -0.310 & 0.317 & -0.163 & 0.190 & -0.262 & 0.270 \\
   &  & 4096 & -0.252 & 0.269 & -0.310 & 0.317 & -0.250 & 0.266 & -0.310 & 0.317 & -0.190 & 0.213 & -0.279 & 0.287 \\
   & 0.1 & 0.125 & -0.246 & 0.279 & -0.263 & 0.279 & -0.244 & 0.277 & -0.262 & 0.279 & -0.068 & 0.177 & -0.164 & 0.194 \\
   &  & 256 & -0.246 & 0.279 & -0.263 & 0.279 & -0.244 & 0.277 & -0.262 & 0.279 & -0.072 & 0.178 & -0.167 & 0.197 \\
   &  & 4096 & -0.246 & 0.279 & -0.263 & 0.279 & -0.244 & 0.278 & -0.262 & 0.279 & -0.120 & 0.197 & -0.196 & 0.221 \\
\bottomrule
\end{tabular}
\begin{minipage}{0.98\textwidth}
\textit{Note.} FW and NA denote fixed-weight NA-GMM and original NA-GMM, respectively.
\end{minipage}
\end{table}

\begin{table}[ht]
    \footnotesize
  \centering
  \caption{Simulation results for estimating $\beta_{10}$ ($\Omega_n = I_{d_\mu}$)}
  \label{tab:beta_I}
\begin{tabular}{lllcccccccccccc}
\toprule
 &  &  &
 \multicolumn{2}{c}{Naive ($n=500$)} &
 \multicolumn{2}{c}{Naive ($n=1000$)} &
 \multicolumn{2}{c}{FW ($n=500$)} &
 \multicolumn{2}{c}{FW ($n=1000$)} &
 \multicolumn{2}{c}{NA ($n=500$)} &
 \multicolumn{2}{c}{NA ($n=1000$)} \\
$p_{\mathrm{drop}}$ & $p_{\mathrm{add}}$ & $\rho$
& Bias & RMSE & Bias & RMSE
& Bias & RMSE & Bias & RMSE
& Bias & RMSE & Bias & RMSE \\
\midrule
  0 & 0 & 0.125 & -0.001 & 0.047 & -0.002 & 0.033 & -0.001 & 0.047 & -0.002 & 0.033 & -0.001 & 0.047 & -0.002 & 0.033 \\
    &   & 256 & -0.001 & 0.047 & -0.002 & 0.033 & -0.001 & 0.047 & -0.002 & 0.033 & -0.001 & 0.047 & -0.002 & 0.033 \\
    &   & 4096 & -0.001 & 0.047 & -0.002 & 0.033 & -0.001 & 0.047 & -0.002 & 0.033 & -0.001 & 0.047 & -0.002 & 0.033 \\
    & 0.05 & 0.125 & -0.007 & 0.071 & -0.011 & 0.054 & -0.006 & 0.071 & -0.010 & 0.054 & -0.005 & 0.071 & -0.010 & 0.054 \\
    &   & 256 & -0.007 & 0.071 & -0.011 & 0.054 & -0.006 & 0.071 & -0.010 & 0.054 & -0.005 & 0.071 & -0.010 & 0.054 \\
    &   & 4096 & -0.007 & 0.071 & -0.011 & 0.054 & -0.006 & 0.071 & -0.011 & 0.054 & -0.006 & 0.071 & -0.011 & 0.054 \\
    & 0.1 & 0.125 & -0.006 & 0.073 & -0.011 & 0.055 & -0.006 & 0.073 & -0.011 & 0.055 & -0.004 & 0.073 & -0.010 & 0.055 \\
    &   & 256 & -0.006 & 0.073 & -0.011 & 0.055 & -0.006 & 0.073 & -0.011 & 0.055 & -0.005 & 0.073 & -0.011 & 0.055 \\
    &   & 4096 & -0.006 & 0.073 & -0.011 & 0.055 & -0.006 & 0.073 & -0.011 & 0.055 & -0.006 & 0.073 & -0.011 & 0.055 \\
  0.2 & 0 & 0.125 & -0.005 & 0.054 & -0.006 & 0.041 & -0.005 & 0.054 & -0.006 & 0.041 & -0.005 & 0.054 & -0.006 & 0.041 \\
    &   & 256 & -0.005 & 0.054 & -0.006 & 0.041 & -0.005 & 0.054 & -0.006 & 0.041 & -0.005 & 0.054 & -0.006 & 0.041 \\
    &   & 4096 & -0.005 & 0.054 & -0.006 & 0.041 & -0.005 & 0.054 & -0.006 & 0.041 & -0.005 & 0.054 & -0.006 & 0.041 \\
    & 0.05 & 0.125 & -0.009 & 0.072 & -0.013 & 0.056 & -0.008 & 0.072 & -0.013 & 0.056 & -0.006 & 0.072 & -0.012 & 0.056 \\
    &   & 256 & -0.009 & 0.072 & -0.013 & 0.056 & -0.008 & 0.072 & -0.013 & 0.056 & -0.007 & 0.072 & -0.013 & 0.056 \\
    &   & 4096 & -0.009 & 0.072 & -0.013 & 0.056 & -0.008 & 0.072 & -0.013 & 0.056 & -0.008 & 0.072 & -0.013 & 0.056 \\
    & 0.1 & 0.125 & -0.008 & 0.074 & -0.013 & 0.057 & -0.008 & 0.074 & -0.013 & 0.056 & -0.006 & 0.074 & -0.012 & 0.056 \\
    &   & 256 & -0.008 & 0.074 & -0.013 & 0.057 & -0.008 & 0.074 & -0.013 & 0.056 & -0.007 & 0.074 & -0.013 & 0.056 \\
    &   & 4096 & -0.008 & 0.074 & -0.013 & 0.057 & -0.008 & 0.074 & -0.013 & 0.057 & -0.008 & 0.074 & -0.013 & 0.057 \\
  0.3 & 0 & 0.125 & -0.007 & 0.060 & -0.009 & 0.043 & -0.007 & 0.060 & -0.009 & 0.043 & -0.007 & 0.060 & -0.009 & 0.043 \\
    &   & 256 & -0.007 & 0.060 & -0.009 & 0.043 & -0.007 & 0.060 & -0.009 & 0.043 & -0.007 & 0.060 & -0.009 & 0.043 \\
    &   & 4096 & -0.007 & 0.060 & -0.009 & 0.043 & -0.007 & 0.060 & -0.009 & 0.043 & -0.007 & 0.060 & -0.009 & 0.043 \\
    & 0.05 & 0.125 & -0.009 & 0.073 & -0.015 & 0.057 & -0.009 & 0.073 & -0.015 & 0.057 & -0.007 & 0.073 & -0.014 & 0.057 \\
    &   & 256 & -0.009 & 0.073 & -0.015 & 0.057 & -0.009 & 0.073 & -0.015 & 0.057 & -0.008 & 0.073 & -0.014 & 0.057 \\
    &   & 4096 & -0.009 & 0.073 & -0.015 & 0.057 & -0.009 & 0.073 & -0.015 & 0.057 & -0.009 & 0.073 & -0.015 & 0.057 \\
    & 0.1 & 0.125 & -0.009 & 0.075 & -0.014 & 0.057 & -0.009 & 0.075 & -0.014 & 0.057 & -0.006 & 0.075 & -0.013 & 0.057 \\
    &   & 256 & -0.009 & 0.075 & -0.014 & 0.057 & -0.009 & 0.075 & -0.014 & 0.057 & -0.008 & 0.075 & -0.013 & 0.057 \\
    &   & 4096 & -0.009 & 0.075 & -0.014 & 0.057 & -0.009 & 0.075 & -0.014 & 0.057 & -0.009 & 0.075 & -0.014 & 0.057 \\
\bottomrule
\end{tabular}
\begin{minipage}{0.98\textwidth}
\textit{Note.} Naive, FW, and NA denote naive GMM, fixed-weight NA-GMM, and original NA-GMM, respectively.
\end{minipage}
\end{table}

\begin{table}[ht]
    \footnotesize
  \centering
  \caption{Simulation results for estimating $\beta_{10}$ ($\Omega_n =$ 2SLS weight)}
  \label{tab:beta_S}
\begin{tabular}{lllcccccccccccc}
\toprule
 &  &  &
 \multicolumn{2}{c}{2SLS ($n=500$)} &
 \multicolumn{2}{c}{2SLS ($n=1000$)} &
 \multicolumn{2}{c}{FW ($n=500$)} &
 \multicolumn{2}{c}{FW ($n=1000$)} &
 \multicolumn{2}{c}{NA ($n=500$)} &
 \multicolumn{2}{c}{NA ($n=1000$)} \\
$p_{\mathrm{drop}}$ & $p_{\mathrm{add}}$ & $\rho$
& Bias & RMSE & Bias & RMSE
& Bias & RMSE & Bias & RMSE
& Bias & RMSE & Bias & RMSE \\
\midrule
  0 & 0 & 0.125 & -0.001 & 0.047 & -0.002 & 0.033 & -0.001 & 0.047 & -0.002 & 0.033 & -0.001 & 0.047 & -0.002 & 0.033 \\
   &  & 256 & -0.001 & 0.047 & -0.002 & 0.033 & -0.001 & 0.047 & -0.002 & 0.033 & -0.001 & 0.047 & -0.002 & 0.033 \\
   &  & 4096 & -0.001 & 0.047 & -0.002 & 0.033 & -0.001 & 0.047 & -0.002 & 0.033 & -0.001 & 0.047 & -0.002 & 0.033 \\
   & 0.05 & 0.125 & -0.006 & 0.070 & -0.010 & 0.054 & -0.006 & 0.071 & -0.010 & 0.054 & -0.005 & 0.071 & -0.010 & 0.054 \\
   &  & 256 & -0.006 & 0.070 & -0.010 & 0.054 & -0.006 & 0.071 & -0.010 & 0.054 & -0.005 & 0.071 & -0.010 & 0.054 \\
   &  & 4096 & -0.006 & 0.070 & -0.010 & 0.054 & -0.006 & 0.071 & -0.010 & 0.054 & -0.005 & 0.070 & -0.010 & 0.054 \\
   & 0.1 & 0.125 & -0.006 & 0.073 & -0.011 & 0.055 & -0.006 & 0.073 & -0.011 & 0.055 & -0.004 & 0.073 & -0.010 & 0.055 \\
   &  & 256 & -0.006 & 0.073 & -0.011 & 0.055 & -0.006 & 0.073 & -0.011 & 0.055 & -0.004 & 0.073 & -0.010 & 0.055 \\
   &  & 4096 & -0.006 & 0.073 & -0.011 & 0.055 & -0.006 & 0.073 & -0.011 & 0.055 & -0.005 & 0.073 & -0.010 & 0.055 \\
  0.2 & 0 & 0.125 & -0.005 & 0.054 & -0.006 & 0.041 & -0.005 & 0.054 & -0.006 & 0.041 & -0.005 & 0.054 & -0.006 & 0.041 \\
   &  & 256 & -0.005 & 0.054 & -0.006 & 0.041 & -0.005 & 0.054 & -0.006 & 0.041 & -0.005 & 0.054 & -0.006 & 0.041 \\
   &  & 4096 & -0.005 & 0.054 & -0.006 & 0.041 & -0.005 & 0.054 & -0.006 & 0.041 & -0.005 & 0.054 & -0.006 & 0.041 \\
   & 0.05 & 0.125 & -0.008 & 0.072 & -0.013 & 0.056 & -0.008 & 0.072 & -0.013 & 0.056 & -0.006 & 0.072 & -0.012 & 0.056 \\
   &  & 256 & -0.008 & 0.072 & -0.013 & 0.056 & -0.008 & 0.072 & -0.013 & 0.056 & -0.006 & 0.072 & -0.012 & 0.056 \\
   &  & 4096 & -0.008 & 0.072 & -0.013 & 0.056 & -0.008 & 0.072 & -0.013 & 0.056 & -0.006 & 0.072 & -0.013 & 0.056 \\
   & 0.1 & 0.125 & -0.008 & 0.074 & -0.013 & 0.056 & -0.008 & 0.074 & -0.013 & 0.056 & -0.006 & 0.074 & -0.012 & 0.056 \\
   &  & 256 & -0.008 & 0.074 & -0.013 & 0.056 & -0.008 & 0.074 & -0.013 & 0.056 & -0.006 & 0.074 & -0.012 & 0.056 \\
   &  & 4096 & -0.008 & 0.074 & -0.013 & 0.056 & -0.008 & 0.074 & -0.013 & 0.056 & -0.006 & 0.074 & -0.012 & 0.056 \\
  0.3 & 0 & 0.125 & -0.007 & 0.060 & -0.009 & 0.043 & -0.007 & 0.060 & -0.009 & 0.043 & -0.007 & 0.060 & -0.009 & 0.043 \\
   &  & 256 & -0.007 & 0.060 & -0.009 & 0.043 & -0.007 & 0.060 & -0.009 & 0.043 & -0.007 & 0.060 & -0.009 & 0.043 \\
   &  & 4096 & -0.007 & 0.060 & -0.009 & 0.043 & -0.007 & 0.060 & -0.009 & 0.043 & -0.007 & 0.060 & -0.009 & 0.043 \\
   & 0.05 & 0.125 & -0.009 & 0.073 & -0.015 & 0.057 & -0.009 & 0.073 & -0.015 & 0.057 & -0.007 & 0.073 & -0.014 & 0.057 \\
   &  & 256 & -0.009 & 0.073 & -0.015 & 0.057 & -0.009 & 0.073 & -0.015 & 0.057 & -0.007 & 0.073 & -0.014 & 0.057 \\
   &  & 4096 & -0.009 & 0.073 & -0.015 & 0.057 & -0.009 & 0.073 & -0.015 & 0.057 & -0.008 & 0.073 & -0.014 & 0.057 \\
   & 0.1 & 0.125 & -0.009 & 0.074 & -0.014 & 0.057 & -0.009 & 0.075 & -0.014 & 0.057 & -0.006 & 0.075 & -0.013 & 0.057 \\
   &  & 256 & -0.009 & 0.074 & -0.014 & 0.057 & -0.009 & 0.075 & -0.014 & 0.057 & -0.006 & 0.075 & -0.013 & 0.057 \\
   &  & 4096 & -0.009 & 0.074 & -0.014 & 0.057 & -0.009 & 0.075 & -0.014 & 0.057 & -0.007 & 0.075 & -0.013 & 0.057 \\
\bottomrule
\end{tabular}
\begin{minipage}{0.98\textwidth}
\textit{Note.} FW and NA denote fixed-weight NA-GMM and original NA-GMM, respectively.
\end{minipage}
\end{table}

\clearpage
\bibliography{references}
\end{document}